\pdfoutput=1
\documentclass[nonacm,acmsmall,screen,balance=false]{acmart}
\usepackage{jon-tikz}
\usepackage{iblock}
\usepackage{jmsdelim}
\usepackage{jmsoperators}
\usepackage{preamble}
\usepackage{macros}
\usepackage{categories}

\usetikzlibrary{decorations.markings,decorations.pathreplacing,
  shapes.geometric,matrix,calc,arrows,chains,positioning,scopes}

\pgfdeclarearrow{
  name = pxto,
  setup code = {
    \pgfarrowssettipend{1.5\pgflinewidth}
    \pgfarrowssetbackend{-2.5508\pgflinewidth}
    \pgfarrowssetlineend{-.25\pgflinewidth}
    \pgfarrowssetvisualbackend{-0.021\pgflinewidth}
    \pgfarrowsupperhullpoint{1.5\pgflinewidth}{0\pgflinewidth}
    \pgfarrowsupperhullpoint{-2.0085\pgflinewidth}{3.6525\pgflinewidth}
    \pgfarrowsupperhullpoint{-2.5508\pgflinewidth}{3.0763\pgflinewidth}
  },
  drawing code = {
    \pgfsetdash{}{0pt}%
    \pgfpathmoveto{\pgfpoint{1.5\pgflinewidth}{0.0254\pgflinewidth}}%
    \pgfpathlineto{\pgfpoint{-2.0085\pgflinewidth}{3.6525\pgflinewidth}}%
    \pgfpathlineto{\pgfpoint{-2.5508\pgflinewidth}{3.0763\pgflinewidth}}%
    \pgfpathlineto{\pgfpoint{-0.4322\pgflinewidth}{0.5\pgflinewidth}}%
    \pgfpathlineto{\pgfpoint{-0.4322\pgflinewidth}{-0.5\pgflinewidth}}%
    \pgfpathlineto{\pgfpoint{-2.5508\pgflinewidth}{-3.0763\pgflinewidth}}%
    \pgfpathlineto{\pgfpoint{-2.0085\pgflinewidth}{-3.6525\pgflinewidth}}%
    \pgfpathclose%
    \pgfusepathqfill
  }
}
\tikzset{>=pxto}
\tikzcdset{arrow style=tikz}
\tikzset{arrow/.style={->}}

\title{The \texorpdfstring{$\infty$}{infinity}-category of \texorpdfstring{$\infty$}{infinity}-categories in simplicial type theory}

\author{Daniel Gratzer}
\orcid{0000-0003-1944-0789}
\email{gratzer@cs.au.dk}
\affiliation{
  \institution{Aarhus University}
  \country{Denmark}
}

\author{Jonathan Weinberger}
\orcid{0000-0003-4701-3207}
\email{jweinberger@chapman.edu}
\affiliation{
  \institution{Chapman University}
  \country{USA}
}

\author{Ulrik Buchholtz}
\orcid{0000-0002-5944-6838}
\email{ulrik.buchholtz@nottingham.ac.uk}
\affiliation{
  \institution{University of Nottingham}
  \country{United Kingdom}
}

\begin{document}

\begin{abstract}
  Simplicial type theory (\STT{}) was introduced by Riehl and Shulman to leverage homotopy type
  theory to prove results about $(\infty,1)$-categories. Initial work on simplicial type theory
  focused on ``formal'' arguments in higher category theory and, in particular, no
  non-trivial examples of $\infty$-category theory were constructible within \STT{}. More recent
  work has changed this state of affairs by applying techniques developed initially for cubical type
  theory to construct the $\infty$-category of spaces. We complete this process by constructing the
  $\infty$-category of $\infty$-categories, recovering one of the main foundational results
  of $\infty$-category theory
  (straightening--unstraightening) purely type-theoretically. We also show how this construction
  enables new examples of the directed version of the structure identity principle: the
  structure homomorphism principle.
\end{abstract}
\maketitle

\begin{acks}
Jonathan Weinberger is grateful to the Fowler of School of Engineering at Chapman University for generous support of this work. He is particularly thankful to the Fletcher Jones Foundation and their award of a Fletcher Jones Foundation Faculty Fellowship in Engineering '25--'28 and the ensuing generous funding of this work. He also thanks the Schmid School of Science and Technology as well as the Center of Excellence in Computation, Algebra, and Topology (CECAT), both at Chapman University, for providing an excellent research environment.
\end{acks}

\section{Introduction}
\label{sec:intro}

A defining characteristic of dependent type theories is their focus on universes of (small) types.
More than in other foundations of mathematics, such universes are critical for even proving such basic
properties as $0 \neq 1$~\citep{smith:1989}. This focus on universes is only intensified with
\emph{homotopy type theory} (\HOTT{})~\citep{HoTTbook} where the universe is supplemented with the
univalence axiom. Such \emph{univalent} universes allow type theorists to view types as a synthetic
incarnation of spaces (\ie{}, $\infty$-groupoids) with the intensional identity type $a =_A b$
modeling paths in the type $A$. The utility of this viewpoint is demonstrated by development of
synthetic homotopy theory inside of \HOTT{}: a reconstruction of classical results in homotopy theory
with simpler and more conceptual proofs.

A long-standing challenge in \HOTT{} has been to broaden the reach of synthetic homotopy theory to
include homotopy-coherent algebraic structures and, especially, the homotopical enhancement of 
category theory: $\prn{\infty,1}$-category theory.\footnote{In both the title and hereafter, we
  shall simply write ``$\infty$-category'' or even ``category'' to refer to $\prn{\infty,1}$-categories. If
we wish to specifically discuss ordinary categories, we shall specifically denote them by 1-categories.} While
numerous approaches to this problem have been proposed~%
\citep{%
  licata:2011,%
  warren:2013,%
  nuyts:2015,%
  riehl:2017,%
  north:2018,%
  kavvos:directed:2019,%
  nuyts:2020,%
  weaver:2020,%
  ahrens:2023,%
  weaver:phd,%
  neumann:2024,%
  neumann:2025%
}
we will focus on the approach
introduced by \citet{riehl:2017}. There the authors leveraged a non-standard model
of \HOTT{} where types are realized by simplicial spaces. In particular, they showed
that the complete Segal spaces---a known model of $\infty$-categories~\citep{rezk:2001}---then
arise as certain types satisfying a pair of easily-defined properties. Thus, in this setting not
every type is an $\infty$-category, but every $\infty$-category gives rise to a valid type.

Concretely, \emph{simplicial type theory} extends \HOTT{} with a \emph{directed interval}, a postulated
totally ordered lattice $\prn{\Int,0,1,\le}$. This new type is meant to represent the $\infty$-category
with two objects $0,1$ and a single non-identity morphism $0 \to 1$---an interpretation justified by
the model of \STT{} in simplicial spaces---and we then use $\Int$ to define morphisms in an arbitrary
type $A$ as ordinary functions $\Int \to A$. By constraining the endpoints of a synthetic morphism,
we arrive at the definition of the space of synthetic morphisms in a type:
$\Hom[A]{a}{b} = \Sum{f : \Int \to A} f\,0 = a \times f\,1 = b$.

\citet{riehl:2017} then demonstrate that the definition
of an $\infty$-category can be formulated concisely as a \emph{predicate} $\IsCat$ on types,
essentially requiring every pair of composable morphisms have a unique composite. Furthermore,
they show that ordinary functions between such types constitute functors and that other classical
definitions in $\infty$-category theory become expressible. Subsequent work has further expanded
this approach, developing fibered category theory~\citep{buchholtz:2023,weinberger:phd}, limits
and colimits~\citep{bardomiano:2025:limits}, \etc{}

While not every type constitutes an $\infty$-category in \STT{}, many type-theoretic operations
preserve the property of being an $\infty$-category. For instance, $\ObjInit{}$ and $\ObjTerm{}$ are
the initial and terminal categories, $A \times B$ ($A + B$) is the (co)product category, and $A \to
B$ is the category of functors. As an extension of \HOTT{}, \STT{} comes equipped with a (hierarchy of)
universes and it is therefore natural to ask:
\begin{center}
  \em
  Is $\Uni$ a recognizable category, \eg{}, the category of categories?
\end{center}
Unfortunately, the answer is negative; $\Uni$ is the canonical example of a type that is \emph{not} an
$\infty$-category in \STT{}. In fact, even if one considers simple subtypes of the universe
(\eg{}, $\Sum{A : \Uni} \IsCat\prn{A}$) one does not obtain a category, as synthetic morphisms $\Int
\to \Sum{A : \Uni} \IsCat\prn{A}$ neither compose nor faithfully represent functors. However, it has
long been conjectured that the category of categories should be constructible in \STT{} as a certain
subtype of the universe.

We address this final gap in the foundations of \STT{} by settling this conjecture
affirmatively and constructing the category of categories as a subtype ${\Cat} \hookrightarrow
{\Uni}$ and verifying its essential properties.

\subsection{Directed univalence and \texorpdfstring{$\Cat$}{the category of categories}}

What criteria should be used to determine if a subtype ${\Cat} \hookrightarrow {\Uni}$ is a valid definition
of the category of categories? If one is not considering $\infty$-categories, the answer to this
question is straightforward: $\Cat$ is a valid definition if the objects denote precisely small
types satisfying $\IsCat$, synthetic morphisms are exactly functions (\ie{}, functors) between these
types, and the composition and identity operations behave as expected. In the $\infty$-categorical
case, the story does not end here; we must also convince ourselves that all the higher synthetic
morphisms also behave ``as expected''. However, it is far from clear what the expected behavior
ought to be! Instead, we take a different approach by isolating a particular universal property 
for the embedding $\Cat \hookrightarrow \Uni$ and arguing that all the other properties of $\Cat$
stem from this single property.

The universal property in question is an $\infty$-categorical version of the Grothendieck
correspondence, often referred to as straightening--unstraightening in the $\infty$-categorical
context. Classically, the Grothendieck correspondence states there is an equivalence between
pseudofunctors $\mathcal{C} \to \CAT$ for some 1-category $\mathcal{C}$ and pairs of a category
$\mathcal{E}$ and a cocartesian family $\mathcal{E} \to \mathcal{C}$.\footnote{In the 1-
  and 2-categorical literature these are often referred to as Grothendieck opfibrations. We
use ``cocartesian'' as it is the standard term in $\infty$-category theory.} Naively, this suggests a
simple universal property for the type $\Cat$: We should require for every $\infty$-category $C$
an equivalence between the type of functions $C \to \Cat$ and the type
$\Sum{E : \Uni}\Sum{\pi : E \to C} \IsCocart\prn{\pi}$. Unfortunately, this definition is too
straightforward: if such a $\Cat$ existed we would be able to show that $\IsCocart$ held for every
map $\pi$ which is certainly not true.

The problem is familiar to cubical type theorists: we are essentially asking for $\Cat$ to
be a \emph{universal fibration}, where $\IsCocart$ is our notion of fibrancy. It is well-known that
one cannot give  universal property for the univalent universe in the extensional type theory of cubical sets~\citep{orton:2018}: the naive equivalence is only valid with respect to \emph{closed}
elements and cannot be internalized this way into type theory where it would also apply to elements
in an arbitrary context. A solution to this problem in cubical type theory was presented
by \citet{licata:2018}. There the authors give a description of a universal fibration in an extensional type theory for cubical sets using \emph{modal} type theory. They supplement type theory with an idempotent
comonad $\GM$, such that the modal type $\Modify[\GM]{A}$ contains only the \emph{global} elements
of $A$. After further extending type theory with a right adjoint to $\Int \to -$, they are then not only able to express the desired universal property (appropriately annotated with $\GM$), but
also use the aforementioned right adjoint to derive it.

This idea has been adapted to variants of simplicial type theory first by \citet{weaver:2020} and
later by \citet{gratzer:2024} to construct a category of \emph{discrete} $\infty$-categories
(spaces). We follow on this line of work---specifically \citet{gratzer:2024}---and use the $\GM$
modality to state the correct universal property for the category of categories and produce a type
satisfying it. In particular:
\begin{restatable*}{theorem}{univcocart}\label{thm:cat:univ-cocart}
  $\Cat$ is the base of the universal cocartesian family, \ie{}, for any $\DeclVar{C}{\GM}{\Uni}$, we
  have $\Modify[\GM]{C \to \Cat} \Equiv \Modify[\GM]{\Sum{E : C \to \Uni} \IsCocart\prn{E}}$.
\end{restatable*}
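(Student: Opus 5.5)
The plan follows the strategy of \citet{gratzer:2024} for the universe of spaces, replacing ``covariant family'' by ``cocartesian family''. I will define $\Cat$ as the subtype of $\Uni$ cut out by a predicate that says a type, together with all of its ``directed transport'', assembles into a cocartesian family. Concretely, using the right adjoint $\sqrt{-}$ to $\Int \to -$ (the amazing right adjoint available under $\GM$), I will build the universal family $\pi : \Sum{A : \Uni} A \to \Uni$ and form the type $\Cat$ as a pullback of $\sqrt{\,\cdot\,}$ applied to the predicate $\IsCocart$. The predicate is evaluated on the pulled-back family over $\Int$, so that $\Cat$ consists of those $A : \Uni$ with $\IsCat(A)$ for which every map $\Int \to \Uni$ through $A$ classifies a cocartesian family. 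In other words, $\Cat$ is the largest subtype of $\Uni$ over which the restriction of the universal family is cocartesian. The theorem then splits into two facts about this subtype.

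\textbf{Step 1: the restricted family is cocartesian.} The first step is to show that the tautological family $\El : \Cat \to \Uni$ is cocartesian. The argument is to reduce $\IsCocart$ to a condition checked fibrewise over arrows: a family $E$ over $C$ is cocartesian iff it is an inner family of categories and, for every $f : \Int \to C$, the pullback $f^*E$ over $\Int$ is cocartesian. This reduction should appear earlier in the paper in the development of cocartesian families, or can be derived from the Riehl--Shulman-style characterisation via lifting against $\{0\} \hookrightarrow \Int$. With it, Step 1 holds by the very definition of $\Cat$ via $\sqrt{-}$: a $\GM$-annotated function $\Int \to \Cat$ transposes to a global element witnessing cocartesianness of the pulled-back family.

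\textbf{Step 2: the equivalence.} Given $C$ under $\GM$, the forward map sends $F : C \to \Cat$ to $\El \circ F$. This family is cocartesian because cocartesian families are stable under pullback, a standard property I will assume from the fibered category theory of \citet{buchholtz:2023}. For the inverse, a global $E : C \to \Uni$ with $\IsCocart(E)$ is shown to factor through $\Cat$. By the fibrewise criterion, for every $f : \Int \to C$ the family $E \circ f$ is cocartesian. The transposition property of $\sqrt{-}$, which needs $E$ to be global and is exactly why $\GM$ appears, turns this into the statement that each $E(c)$ lies in $\Cat$, uniformly in $c$. The two composites are the identity because $\Cat \hookrightarrow \Uni$ is an embedding: being a proposition-valued subtype, factorisations are unique, and $\IsCocart$ is itself a proposition. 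Combining these with the univalence-level equivalence $\Modify[\GM]{C \to \Uni}$ gives the stated equivalence.

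\textbf{Main obstacle.} I expect the crux to be the fibrewise reduction in Step 1, namely that cocartesianness over a general category $C$ is detected by pullbacks along all $\Int \to C$, together with compatibility of this predicate with the $\sqrt{-}$ transposition. The difficulty is that cocartesian lifts require \emph{uniqueness} (initiality) of lifts, which involves $\Lambda^2_0$-type horn data over $\Delta^2$, not just $\Int$. My first attempt would be to show that cocartesian arrows can be checked as cocartesian already in the restriction to the arrow itself, using that each $E(c)$ is a category and that cocartesianness of a lift of $f$ is a local condition on the 2-simplices containing it. If that fails, I would enlarge the predicate defining $\Cat$ to quantify over maps $\Delta^2 \to \Uni$ as well, which $\sqrt{-}$ handles equally well.
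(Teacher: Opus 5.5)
Your Step 1 has a genuine gap. It rests on the claim that cocartesianness of a family over $C$ can be detected along arrows $\Int \to C$, and this is where the argument breaks.

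Whether a lift $f$ of $u$ is cocartesian is a lifting condition against $\Lambda^2_0 \hookrightarrow \Delta^2$ over \emph{every} 2-simplex of the base whose first edge is $u$. Being cocartesian in the restriction $u^*E$ only makes $f$ \emph{locally} cocartesian in $E$, and locally cocartesian lifts need not compose. Classically, take the poset with $a<b<c'$ and $a<c<c'$ lying over $0<1<2$. There $a\to b$ is cocartesian over $0\to 1$ but not cocartesian in the whole family, since the comparison $(gf)_! \to g_!\circ f_!$ is $c \to c'$. So the fix you try first, that cocartesianness ``can be checked already in the restriction to the arrow itself'', is false as stated.

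This is exactly where the analogy with $\Space$ fails. Covariance, i.e.\ orthogonality to $\brc{0}\to\Int$, is arrow-local; cocartesianness is not.

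Innerness has the same problem. Your criterion assumes the family is inner, but your $\Cat$ imposes $\IsCat$ only on fibers and $\IsCocart$ only along arrows. Relative innerness is a condition over 2-simplices of the base, so nothing in Step 1 shows that $\Cat_\bullet \to \Cat$ is inner. Read internally, quantifying over $C^\Int$ does carry naturality in squares, which might be exploitable. That would be a substantial argument you have not supplied.

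Your fallback of quantifying over $\Delta^2\to\Uni$ points the right way, and it is what the paper does. The real content lies in choosing the predicates and proving they suffice.

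The paper cuts $\Cat$ out of $\Uni[\Simp]$ by $\Con{isRezk}$ together with:
\begin{itemize}
  \item transposes of $\Con{isInner}\prn{-\circ i}$ and $\Con{LCCLiftsCompose}\prn{-\circ i}$ along $\Int^2$;
  \item the transpose of $\Con{hasLCCLifts}$ along $\Int$.
\end{itemize}
By \cref{lem:primer:amazing-transpose}, a $\GM$-family factors through $\Cat$ iff it is iso-inner, locally cocartesian, and its locally cocartesian lifts compose.

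The key lemma missing from your plan is \cref{thm:cocartesian:lcc-vs-cc}: such a family is cocartesian. It is proved by pulling back to $\Delta^2$ and applying 3-for-2 to $f^* : \Hom{y}{z}\to\Hom{x}{z}$ via \cref{lem:proofs:lcc-char}. With that in place, your Step 2 goes through essentially as you describe: pullback stability in one direction, transposition in the other, and uniqueness from $\Cat$ being a subtype.
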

On its own, this statement is not worth much. We do not know whether $\Cat$ is a category itself,
let alone to what its objects and morphisms correspond. To resolve this, we give a detailed analysis
of the structure of cocartesian fibrations over certain categories, relying on the established
results in simplicial type theory. As a particular case of these results, we conclude that
cocartesian fibrations over $\Int$ are determined by (1) a pair of categories $C$, $D$ along with (2)
a function $C \to D$. Consequently, we are able to prove the directed univalence theorem for $\Cat$: 
\begin{restatable*}[Directed univalence]{corollary}{dua}\label{cor:dua:dua}
  If $A,B$ are $\GM$-elements of $\Cat$, then there is a equivalence
  $\Con{dua} : \Hom[\Cat]{A}{B} \Equiv \Modify[\GM]{A \to B}$.
\end{restatable*}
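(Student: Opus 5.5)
We prove directed univalence by applying the universal property of \cref{thm:cat:univ-cocart} to the interval and then unfolding what a cocartesian family over $\Int$ with prescribed fibers amounts to. First, $\Int$ is a crisp category, so \cref{thm:cat:univ-cocart} with $C \coloneqq \Int$ gives
\[
  \Modify[\GM]{\Int \to \Cat} \Equiv \Modify[\GM]{\Sum{E : \Int \to \Uni} \IsCocart\prn{E}}.
\]
This equivalence should be natural in $C$, because it is induced by pulling back the universal family along a map into $\Cat$. Restricting along the endpoint inclusions $\brc{0},\brc{1} \hookrightarrow \Int$ therefore commutes with the equivalence. Taking fibers over the crisp pair $\prn{A,B}$ on both sides, we get that $\Modify[\GM]{\Hom[\Cat]{A}{B}}$ is equivalent to the $\GM$-type of cocartesian families $E$ over $\Int$ equipped with identifications $E\,0 = A$ and $E\,1 = B$. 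Here $A$ and $B$ are read as the categories classified by these points. Since $\GM$ is a right adjoint it commutes with $\Sigma$ and with identity types, so this fiber computation is legitimate.

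Second, we use the analysis of cocartesian families over $\Int$ announced above. Such a family is determined by its two fibers together with a functor between them. Concretely, we send $E$ to the transport functor $E\,0 \to E\,1$, $e \mapsto$ the codomain of the cocartesian lift of $0 \to 1$ at $e$. The inverse should send $f : A \to B$ to the family over $\Int$ built by gluing $A$ and $B$ along $f$. For example, we can take $E\,i$ to be the type of pairs $(a,b)$ with a ``partial'' morphism depending on $i$, or use a pushout/cograph construction. We then check that this family is cocartesian and has the right fibers. Fixing the fibers to $A$ and $B$, this yields
\[
  \Modify[\GM]{\Hom[\Cat]{A}{B}} \Equiv \Modify[\GM]{A \to B}.
\]

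The last step removes the $\GM$ on the left. The statement asks for an equivalence with the uncrisped $\Hom[\Cat]{A}{B}$. I would first build a map $\Con{dua} : \Hom[\Cat]{A}{B} \to \Modify[\GM]{A \to B}$ directly. A morphism $\alpha : \Int \to \Cat$ classifies, via the universal cocartesian family over $\Cat$ pulled back along $\alpha$, a cocartesian family over $\Int$, and transport along it gives a functor $A \to B$. Putting this into $\Modify[\GM]{-}$ is where the difficulty lies, since $\alpha$ itself is not crisp.

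I expect the main obstacle here. The route I would try first is to show that $\Hom[\Cat]{A}{B}$ is itself discrete and, better, $\GM$-coreflected, meaning $\Modify[\GM]{\Hom[\Cat]{A}{B}} \to \Hom[\Cat]{A}{B}$ is an equivalence. The intended argument uses the amazing right adjoint to $\Int \to -$ used in constructing $\Cat$. Since $\Cat$ is defined as a subtype of $\Uni$ cut out by a condition transposed along this adjoint, its hom-types are computed crisply. Failing that, I would instead state and prove the theorem with $\Hom$ interpreted for crisp $A,B$ as the crisp hom, and then identify it with $\Modify[\GM]{A\to B}$ by the fiberwise equivalence above. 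The remaining verifications are routine: the functor-to-family and family-to-functor maps are mutually inverse, which follows from the uniqueness of cocartesian lifts together with the directed Yoneda lemma.
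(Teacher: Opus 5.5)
Your outline (apply \cref{thm:cat:univ-cocart} at $\Int$, then classify $\GM$-cocartesian families over $\Int$ by their two fibers plus the transport functor, with inverse given by gluing) is the paper's strategy. However, the step that carries all the content is asserted rather than proved, and the mechanism you give for it does not work. The hard direction is showing that an arbitrary $\GM$-cocartesian family $E$ over $\Int$ agrees with $\Glue\prn{E\,0,E\,1,\alpha}$, with $\alpha$ the transport, at \emph{every} $i : \Int$, not only at $0$ and $1$. Uniqueness of cocartesian lifts and the Yoneda lemma say nothing about fibers over non-global points, and there is no internal case split on $i$.

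The missing idea is a reduction to global points. Write down the comparison $\Con{glue}\prn{i,a} = \prn{\prn{i \lor -}_!\,a, \lambda p.\,p_!\,a}$ and show it is a cocartesian functor (Beck--Chevalley, checked on $\GM$-elements via \cref{lem:primer:pw-iso-to-iso}). Then invoke \cref{lem:dua:fw-equiv}: a cocartesian functor between cocartesian families over a $\GM$-category is an equivalence of total spaces as soon as it is one on fibers over $\GM$-points. By \cref{ax:int-global-points} these are only $0$ and $1$, where the check is immediate. The other round trip is \cref{cor:cocartesian:transport}.

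The glue that works is $\Sum{b : B} \prn{i = 0 \to f^{-1}\prn{b}}$, where the $A$-component itself is partial. A pair $\prn{a,b}$ with only a morphism partial would have the wrong fiber over $1$. Proving this family iso-inner and cocartesian (\cref{lem:cocartesian:glue-is-isoinner,thm:cocartesian:glue-is-cocart}) needs a case analysis of $\GM$-maps into $\Int$ via duality, so neither verification is routine.

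Your third step correctly notices that the classification only yields $\Modify[\GM]{\Hom[\Cat]{A}{B}} \Equiv \Modify[\GM]{A \to B}$; the paper leaves the passage to the unmodalized hom implicit. But your justification via the amazing right adjoint (``hom-types are computed crisply'') is not an argument. Your fallback changes the statement, and you rightly cannot define $\Con{dua}$ directly on non-$\GM$ morphisms.

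Here is the fix. For $\GM$-annotated $A,B$, the type $\Hom[\Cat]{A}{B}$ is $\GM$-annotated. It is also discrete, because hom-types of a simplicial Segal type are: they are fibers of the covariant family $\Hom[\Cat]{A}{-}$. The paper proves $\Cat$ Segal and simplicial right afterwards without using this corollary, so this is not circular. By \cref{ax:int-detects-discrete}, the counit $\Modify[\GM]{\Hom[\Cat]{A}{B}} \to \Hom[\Cat]{A}{B}$ is then an equivalence. So for $\GM$-types, ``discrete'' already gives the ``better'' coreflectedness you wanted, and $\Con{dua}$ is the inverse of this counit composed with the $\GM$ classification.
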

Here we see the first major divergence between the case of categories and the previously
considered case of discrete categories. Directed univalence for $\Cat$ only guarantees that
$\Hom[\Cat]{A}{B}$ is equivalent to $\Modify[\GM]{A \to B}$ whereas in the discrete case one
obtains an equivalence with $A \to B$. This is inevitable: \emph{no} category
can have hom-spaces equivalent to $A \to B$ for arbitrary categories $A,B$. However, the
appearance of modalities in directed univalence is a significant complication.

Further consequences of our analysis prove that $\Cat$ is a category and that composition of
synthetic morphisms corresponds (under directed univalence) to composition of ordinary
functions. In total then, we show that the universal property of $\Cat$ is sufficient to derive all
the expected behavior of a category of categories as well as uniquely characterize $\Cat$ itself.

Directed univalence opens up a wide array of applications. For instance, we use it to show that
$\Cat$ contains a number of recognizable \emph{reflective} subcategories (truncated categories,
partial orders, spaces, \etc{}). It also allows us to build new and important categories by
combining $\Cat$ with existing type-theoretic connectives. As a simple example, we show that
$\Sum{c : \Cat} c$ can be analyzed using directed univalence and use its attendant
\emph{structure homomorphism principle} (SHP) to see that $\Sum{c : \Cat} c$ is the \emph{lax}
slice category $\ObjTerm{}\dblslash\Cat$.

Finally, while \cref{thm:cat:univ-cocart} is a form of the Grothendieck correspondence, it is not
the strongest form possible. This would state that there is an equivalence of
\emph{$\infty$-categories} between cocartesian families over $C$ and the functor category
$C \to \Cat$. We adapt an argument sketched by \citet{cisinski:2024} to \STT{} to show that
\cref{thm:cat:univ-cocart} upgrades to an equivalence of categories and thereby give
a synthetic proof of the \emph{straightening-unstraightening} theorem~\citep{lurie:2009}.

\subsection{Contributions and outline}

Our main contribution is the first construction of a directed univalent category of categories
within \STT{} together with a verification of all its essential properties. This is the last major
primitive missing from the theory of $\infty$-categories in simplicial type theory and,
additionally, also constitutes a new and novel approach to a foundational theorem in
$\infty$-category theory itself. With this in place, we also note that a putative full directed type
theory can be interpreted into $\infty$-categories by giving a mere syntactic translation to the
appropriate fragment of \STT{}.

\begin{itemize}
  \item We extend the methodology of \citet{licata:2018} to construct a universal cocartesian family
    $\Cat_\bullet \to \Cat$.
  \item We derive purely from this universal property that $\Cat$ is a category, satisfies directed
    univalence, \etc{}
  \item We prove various properties of $\Cat$ and showcase the novel applications of SHP it enables.
  \item Finally, we give a fully synthetic proof of the foundational straightening--unstraightening
    theorem.
  \end{itemize}

Over all, we demonstrate how type theory (through \STT{}) is highly effective for proving major
results in higher category theory.

The remainder of the paper is organized as follows. In \cref{sec:primer} we recall the main
ideas of simplicial type theory, its triangulated type theory variant, and the modal extensions
thereof. This, along with \cref{sec:cocartesian} on cocartesian fibrations, is intended to make the paper as
self-contained as possible. In \cref{sec:cat} we define $\Cat$ and prove \cref{thm:cat:univ-cocart},
the main construction of this paper. In \cref{sec:dua}, we combine our new results on cocartesian
fibrations along with the results of the prior section to prove \cref{cor:dua:dua} along with the
fact that $\Cat$ is a category. Finally, in \cref{sec:st-unst,sec:examples}, we give a
number of consequences of our results, including a proof of straightening--unstraightening
(\cref{cor:st-unst:st-unst}).

\section{Simplicial and triangulated type theory}
\label{sec:primer}

We now turn to giving a more precise account of the flavor of simplicial type theory we shall use. There are two major modifications to the theory compared with the original system
studied by \citet{riehl:2017}. First is that we supplement our theory with a collection of
modalities, including the $\GM$ modality discussed in the introduction. We do this using
\MTT{}, a general purpose modal dependent type theory~\citep{gratzer:2020}. Secondly, the intended
model of \STT{} is simplicial spaces, the $\infty$-categorical version of $\PSH{\Delta}$. However,
in order to leverage the techniques of \citet{licata:2018} to define $\Cat$, we require a
postulated \emph{amazing right adjoint} to $\Int \to -$. Unfortunately, in the intended model of
simplicial spaces, such a right adjoint simply does not exist; the category of simplices does not
have products.

To circumvent this, \citet{gratzer:2024} introduced a further relaxation of the modal variant of \STT{}:
\emph{triangulated type theory} \TTT{}. The intended model of \TTT{} is \emph{Dedekind cubical} spaces,
rather than simplicial ones. Simplicial spaces embed into this intended model as a subtopos and the
original structure therefore remains. Concretely, this change involves relaxing the requirement that
$\Int$ be totally ordered to merely asking it to be a bounded distributive lattice. In this
section, we recall the details of \TTT{} relevant for this work and, in particular, describe the
modal extensions necessary to construct the category of categories.

\subsection{First steps in triangulated type theory}

As mentioned in \cref{sec:intro}, \STT{} and its relaxation \TTT{}, extend homotopy type theory.
Accordingly, we begin by recalling the basic definitions and notations from homotopy type theory.
For a more complete account, see the \citet{HoTTbook}.

Homotopy type theory begins with intensional Martin-L{\"o}f type theory, complete with a hierarchy of universes, $\Uni[0], \Uni[1], \dots$, and
intensional identity types $a =_A b$. We use $\Prod{}$ and $\Sum{}$ for the standard dependent product and sum types of this type theory and
use $p_!$ for the transport map $B\prn{x} \to B\prn{y}$ induced by $p : x = y$. The key extension of
\HOTT{}, the univalence axiom, governs the behavior of the intensional identity type of the
universe. In particular, this axiom relates equality in the universe to \emph{equivalences}. See \opcit{} for a detailed discussion of the type of equivalences $A \Equiv B$; we only
note that it is a subtype of $A \to B$ which, when $A = B$, includes the identity function.
\begin{restatable}{axiom}{univalence}
  The canonical map $A =_{\Uni[i]} B \to A \Equiv B$ sending $\Refl$ to $\ArrId{}$ is an
  equivalence.
\end{restatable}

We shall also have use for a few other \HOTT{} concepts. First, since univalence causes the
\emph{unicity of identity proofs} to fail rather spectacularly, so we isolate those types for which
this property still holds as (homotopy) \emph{sets}. We will also use for the stronger properties of
being a (homotopy) \emph{proposition}, or \emph{contractible}:
\begin{align*}
  &\IsSet\prn{A} = \Prod{a,b : A} \IsProp\prn{a = b}
  \\
  &\IsProp\prn{A} = \Prod{a,b : A} \IsContr\prn{a = b}
  \\
  &\IsContr\prn{A} = \Sum{a : A} \Prod{b : A} a = b
\end{align*}
If $\phi : \Uni \to \Uni$ is valued in propositions, we write $\Uni[\phi]$ for
$\Sum{A : \Uni} \phi\prn{A}$. Moreover, we use $\Prop$ as condensed notation for
$\Uni[\IsProp]$.

Next, we require a few \emph{higher inductive types} (HITs). These are inductive types
that postulate constructors not only of elements of the type, but also of elements of its identity
type. For instance, we shall use pushouts $A +_B C$ and localizations as defined by
\citet{rijke:2020}. Since the models of HITs satisfying strict equations
is somewhat fraught, we only assume that our HITs satisfy propositional computation rules, \ie{}, that
they are homotopy-initial algebras.

We now turn to the first and most basic axiom of triangulated type theory, which
introduces the directed interval.
\begin{restatable}{axiom}{intax}
  There is a set $\Int$ equipped with the structure of a bounded distributive lattice
  $\prn{0,1,\land,\lor}$ such that $0 \neq 1$.
\end{restatable}

Crucially, compared with \STT{} we do not assume that $\Prod{i, j : \Int} i \le j \lor j \le i$
holds. However, just as with the failure of UIP, we may isolate those types that ``believe $i \le j \lor
j \le i$'' as \emph{simplicial} types. These are types $A$ such that the following
holds:
\[
  \IsSimp\prn{A} = \Prod{i,j : \Int} \IsEquiv\prn{\Const : A \to \prn{i \le j \lor j \le i \to A}}
\]
The next result follows from \citet{rijke:2020}:
\begin{proposition}
  There is a lex idempotent monad $\Simp : \Uni \to \Uni$ with $\Simp A$ universal
  among simplicial types receiving a map from $A$.
\end{proposition}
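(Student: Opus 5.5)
The proposition should be an instance of localization at a family of maps, as developed by \citet{rijke:2020}. Being simplicial means being local with respect to a family of maps indexed by $i,j : \Int$. I take the type family $P\prn{i,j} = \prn{i \le j \lor j \le i}$, a proposition since $\lor$ here denotes the propositionally truncated disjunction. For each pair I take the map $P\prn{i,j} \to \ObjTerm{}$. Precomposition with this map is exactly $\Const : A \to \prn{P\prn{i,j} \to A}$ after identifying $\ObjTerm{} \to A$ with $A$. Hence $\IsSimp\prn{A}$ says precisely that $A$ is local with respect to the family $\brc{P\prn{i,j} \to \ObjTerm{}}_{i,j : \Int}$.

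Since $\Int$ is a (small) set, this family is indexed by a small type, and I define $\Simp A$ as the localization HIT $L_{\mathcal{F}} A$ of \citet{rijke:2020}. The general theory there then gives three facts at once. First, $L_{\mathcal{F}}A$ is $\mathcal{F}$-local. Second, the unit $\eta : A \to L_{\mathcal{F}} A$ is initial among maps into local types: precomposition $\prn{L_{\mathcal{F}}A \to B} \to \prn{A \to B}$ is an equivalence for every local $B$. Third, the local types form a reflective subuniverse, so $\Simp$ is an idempotent monad. This settles the universal property and idempotence.

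The remaining point, and the main obstacle, is lexness. Localizations are not lex in general, but \citet{rijke:2020} show that nullification at a family of \emph{propositions} gives a modality. A localization at maps of the form $X \to \ObjTerm{}$ is a nullification, so $\Simp$ is a modality, i.e.\ closed under $\Sigma$-types. The same paper shows that a modality which is nullification at a family of propositions is lex: such a modality is topological, and topological modalities are lex. Since each $P\prn{i,j}$ is a proposition, $\Simp$ is lex.

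If one prefers a direct check, one would verify that $\Simp$ preserves identity types, $\Simp\prn{a =_A b} \Equiv \prn{\eta a =_{\Simp A} \eta b}$. The argument would use the fact that for a proposition $P$ the map $P \to \ObjTerm{}$ is an embedding, so identity types of local types are again local with respect to the same family. I would cite the topological-implies-lex result rather than redo this.
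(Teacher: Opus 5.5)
Your proposal is correct and is essentially the paper's argument, which simply cites \citet{rijke:2020}: $\Simp$ is nullification at the family of propositions $i \le j \lor j \le i$ indexed by the small set $\Int \times \Int$, hence a topological and therefore lex modality. One small slip is in your optional direct check: closure of local types under identity types holds for every reflective subuniverse and does not give lexness (lexness amounts to $\Simp$-connected types having $\Simp$-connected identity types); what the embedding $P(i,j) \to \ObjTerm{}$ actually contributes is that the generating maps have invertible diagonals.
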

In other words, ordinary simplicial type theory is the ``subset of \TTT{}'' given by types for which
$\eta_A : A \to \Simp A$ is an equivalence. We write $\Uni[\Simp] = \Sum{A : \Uni} \IsEquiv\prn{\eta_A}$ for
the subtype of $\Uni$ spanned by simplicial types and note that $\Uni[\Simp]$ is itself simplicial.
Consequently, there is a unique $\DepSimp : \Simp \Uni \to \Uni[\Simp]$ such that
$\DepSimp \circ \eta = \Simp$.

With $\Int$, we can make a number of important definitions that isolate those types that actually
behave like categories. First, as mentioned earlier, we can define a \emph{synthetic morphism}
in a type $A$ to be a map $\Int \to A$. Constraining the endpoints of this map, we arrive at the
definition of the synthetic hom-type:
\[
  \Hom[A]{a}{b} = \Sum{f : \Int \to A} f\,0 = a \times f\,1 = b
\]

Every type comes equipped with a collection of identity morphisms: $\ArrId{a} = \lambda \_.\,a :
\Hom{a}{a}$. However, the other operation we expect from a category---composition---is not available
by default. What is available instead is a proof-relevant composition \emph{relation}, stating that
a pair of synthetic morphisms compose to a third. To define these, we derive the $2$-simplex
$\Delta^2$ from $\Int$. For the sake of future use, we define the $n$-simplex for any $n\ge0$:
\[
  \Delta^n = \Sum{\vec{i} : \Int^n} \prn{i_0 \ge i_1 \ge \dots \ge i_{n - 1}}
\]
Crucially, the 2-simplex can be visualized as a triangle (the shaded region below), if $\Int$ is
seen as a line segment whence $\Int^2$ as a square:
\[
  \begin{tikzpicture}[diagram]
    \draw[->] (-2,-.1) -- (-1.5,-.1) node[right] {$i$};
    \draw[->] (-2,-.1) -- (-2,-.6) node[below] {$j$};
    \node (A) at (0,0) {$(0,0)$};
    \node[right = 2.5cm of A] (B) {$(1,0)$};
    \node[color = gray, below = 0.95cm of A] (C) {$(0,1)$};
    \node[below right = 0.95cm and 2.5cm of A] (D) {$(1,1)$};
    \fill[pattern=north west lines] (0.65,-0.15) -- (2.1,-0.15)  -- (2.1,-0.65) -- cycle;
    \path[->, thick] (A) edge (B);
    \path[->, color = gray] (A) edge (C);
    \path[->, color = gray] (C) edge (D);
    \path[->, thick] (B) edge (D);
    \path[->] (A) edge (D);
  \end{tikzpicture}
\]
We also have need for the inner horn $\Lambda^2_1 = \Sum{i,j : \Int} i = 1 \lor j = 0$, a
subtype of $\Delta^2$ which captures the two highlighted segments above.

Say that $\tau : \Delta^2 \to A$ is a witness for the fact that $\tau\prn{-,0}, \tau\prn{1,-} :
\Int \to A$ compose to $\lambda i.\,\tau\prn{i,i}$. A Segal type (sometimes called a pre-category)
is one where every pair of composable arrows admits a unique composition witness. To define this
formally, we note that maps $\Lambda^2_1 \to A$ precisely capture the data of composable
arrows:
\begin{definition}
  A type $A$ is \emph{Segal} if $\IsEquiv\prn{A^{\Delta^2} \to A^{\Lambda^2_1}}$ holds.
\end{definition}
\begin{notation}
  If $f,g : A^\Int$ and $p : f\prn{1} = g\prn{0}$, we write $\brk{f,g,p}$ for the induced map
  $A^{\Lambda^2_1}$ and, if $A$ is Segal, $g \circ_p f$ for the composite. Furthermore, we shall
  subsequently have use for the outer horns $\Lambda^2_0 = \Sum{i,j : \Int} i = j \lor j = 0$ and
  $\Lambda^2_2 = \Sum{i,j : \Int} i = 1 \lor i = j$. Finally, we write $\bar{i}$ for the element of
  $\Delta^{n + i}$ given by $\prn{1, \dots 1, 0, \dots}$ of $i$ copies of $1$ followed by $n$ copies
  of $0$.
\end{notation}

Segal types enjoy a unique composition operation given by the inverse of the map
$A^{\Delta^2} \to A^{\Lambda^2_1}$, and calculation shows that the aforementioned definition of the
identity morphism is a left and right unit for composition. However, objects in a pre-category have
two distinct notions of sameness: via either the identity type \emph{or} synthetic isomorphism. By the
latter, we mean a morphism $f : \Hom{a}{b}$ equipped with $g,h : \Hom{b}{a}$ along with composition
witnesses showing that $g$ ($h$) is left (right) inverse to $f$. One can define
$\IntIso = \Delta^2 \Pushout{\Lambda^2_2} \Int \Pushout{\Lambda^2_0} \Delta^2$ such that
$\IntIso \to X$ precisely corresponds to an equivalence in $X$~\citep[\textsection 4.2]{buchholtz:2023}). A
distinctive feature of $\infty$-category theory is that these two notions (object equality and
isomorphism) can be made to coincide; a property similar to the univalence axiom. We therefore also
single out those types which satisfy this local univalence condition:
\begin{definition}
  A type $A$ is \emph{Rezk} if $\IsEquiv\prn{\Con{const} : A \to A^{\IntIso}}$.
\end{definition}

\begin{definition}
  A simplicial, Segal, and Rezk type is called a \emph{category}. A category whose
  morphisms are all invertible is a \emph{groupoid}.%
\end{definition}
\begin{remark}
  The general results of \citet{rijke:2020} show that categories and groupoids are modal types for
  idempotent monads. We write $\Grpdify$ for the idempotent modality associated with groupoids
  in particular, \ie{}, nullification at $\Int$.
\end{remark}

We shall also have occasion to use the \emph{relative} versions of the Segal and Rezk conditions.
Given a family of types $A : X \to \Uni$, we say that $A$ is (\emph{right}) \emph{orthogonal} to a map $I
\to J$ if the following canonical map is an equivalence:
\[
  \prn{\Sum{x : X} A\prn{x}}^J \to X^J \times_{X^I} \prn{\Sum{x : X} A\prn{x}}^I
\]
The relative Segal condition asks that a family of types $A : X \to \Uni$ be right orthogonal to
$\Lambda^2_1 \to \Delta^2$ and the relative Rezk condition asks the same for
$\IntIso \to \ObjTerm{}$. A Segal family is called \emph{inner} and a family that is both Segal and Rezk
is \emph{iso-inner}.

For use in \cref{sec:cat}, we note that we can phrase the requirement that a family be
inner using the following predicate:
\begin{align*}
  &\Con{isInner} : \prn{\Delta^2 \to \Uni} \to \Prop
  \\
  &\Con{isInner}\,A =
  \IsEquiv\prn{\prn{\Prod{t : \Delta^2} A\,t} \to \prn{\Prod{t : \Lambda^2_1} A\,t}}
\end{align*}
A family $\DeclVar{A}{}{X \to \Uni}$ is inner if and only if $\Prod{h : \Delta^2 \to X} \Con{isInner}\prn{A \circ h}$
holds.

\begin{notation}
  \label{not:primer:proj}
  We shall often identify a family $\DeclVar{A}{}{X \to \Uni}$ with its associated total
  space projection $\pi$ from $\TotalTy{A} \defeq \Sum{\DeclVar{x}{}{X}} A\,x$ to $X$.
  We shall say that an arbitrary map of types $f : X \to Y$ is, for instance, inner if the
  associated map $Y \to \Uni$ sending $y$ to $f^{-1}\prn{y}$ is inner.
\end{notation}

Given a family $\DeclVar{A}{}{X \to \Uni}$ and an arrow $\DeclVar{f}{}{\Int \to X}$ we define
\emph{dependent arrows} over $f$ from $\DeclVar{a}{}{A\prn{f\,0}}$ to
$\DeclVar{a'}{}{A\prn{f\,1}}$  as follows:
\[
  \hom_f^A(a,a') \defeq
  \Sum{\DeclVar{\alpha}{}{\prn{t : \Int} \to A(f\,t)}}
  \prn{\alpha\,0 = a} \times \prn{\alpha\,1 = a'}
\]
In an inner family there exists an induced composition operation for dependent arrows~\cite{buchholtz:2023}.

\subsection{Multimodal type theory}

The next step in \TTT{} is to include modalities: special type constructors that violate key
properties we ordinarily require in type theory, such as stability under substitution. We use
these modalities to internalize crucial operations from our intended model of cubical spaces such as
the discrete and codiscrete endofunctors, the opposite functor, \etc{} To this end, we recall
some of the details of the modal extension to type theory, \MTT{}, following~\citep{gratzer:2020}. See \citet{gratzer:phd} for a more detailed account. Since our primary
goal is to write programs in \MTT{}, we focus on the ``informal'' version of the syntax and
defer the formal rules (replete with de Bruijn indices and a substitution calculus) to
\cref{sec:mtt-syntax}.

First, \MTT{} is parameterized by a \emph{mode theory} $\Mode$. This is a strict 2-category describing
the modalities (as 1-cells) and transformations between them (as 2-cells). While \MTT{}
also permits distinct type theories to be related by modalities by considering mode theories with
multiple modes (0-cells), we do not need this generality and therefore assume that mode theories
have only one object. We shall also only be concerned with mode theories with at most one 2-cell
between every pair of 1-cells \ie{}, 2-categories that are merely poset-enriched. As such, our mode theories are simply given by ordered monoids. The mode theory
required for \TTT{} is described in \cref{sec:primer:modal-ttt}, but we continue with an
arbitrary mode theory satisfying these constraints for the moment.

The main extension of \MTT{} is to add a new modal type $\Modify[\mu]{-}$ for each modality $\mu \in
\Con{Arr}\prn{\Mode}$. However, as already mentioned modal types are somewhat peculiar, and to
accommodate them \MTT{} also modifies context extension. Specifically, each variable in an \MTT{}
context is annotated with a ``formal division'' of modalities $\DeclVar{x}{\mu/\nu}{A}$. We write
$\Gamma/\nu$ for the operation which modifies each annotation in $\Gamma$ to send
$\DeclVar{x}{\mu/\nu\/_0}{A}$ to $\DeclVar{x}{\mu/\nu\/_0 \circ \nu}{A}$. The variable rule is then
modified to account for these formal divisions as follows:
\begin{mathpar}
  \inferrule{
    \mu \le \nu
    \\
    \DeclVar{x}{\mu/\nu}{A} \in \Gamma
  }{
    \IsTm{x}{A}
  }
\end{mathpar}
Note that one can recover the ordinary rules for variables by considering the annotation
$\ArrId{}/\ArrId{}$. As a matter of notation therefore, we generally suppress division by $\ArrId{}$
and omit the annotation entirely for $\ArrId{}/\ArrId{}$ so that we instead write
$\DeclVar{x}{\mu}{A}$ or $\DeclVar{x}{}{A}$.

These annotations are then manipulated by the modal operators $\Modify[\mu]{-}$. In particular,
they are added by the formation and introduction rules. The (somewhat lengthy) elimination rule, on
the other hand, papers over the difference between annotations on a variable and modal types by
allowing us to convert a binding $\DeclVar{x}{\nu/\ArrId{}}{\Modify[\mu]{A}}$ into a binding
$\DeclVar{y}{\nu\circ\mu/\ArrId{}}{A}$:
\begin{mathpar}
  \inferrule{
    \IsTy[\Gamma/\mu]{A}
  }{
    \IsTy{\Modify{A}}
  }
  \and
  \inferrule{
    \IsTm[\Gamma/\mu]{a}{A}
  }{
    \IsTm{\MkMod{a}}{\Modify{A}}
  }
  \\
  \inferrule{
    \IsTy[\Gamma/\nu\circ\mu]{A}
    \\
    \IsTy[\Gamma, \DeclVar{y}{\nu/\ArrId{}}{\Modify[\mu]{A}}]{B\prn{y}}
    \\\\
    \IsTm[\Gamma, \DeclVar{x}{\nu\circ\mu/\ArrId{}}{A}]{b\prn{x}}{\Sb{B}{\MkMod[\mu]{x}/y}}
    \\
    \IsTm[\Gamma/\nu]{a}{\Modify[\mu]{A}}
  }{
    \IsTm{\LetMod<\nu>[\mu]{a}[x]{b\prn{x}}}{\Sb{B}{a/y}}
  }
  \and
  \LetMod<\nu>[\mu]{\MkMod{a_0}}[x]{b\prn{x}} = \Sb{b}{a_0/x}
\end{mathpar}

Aside from these modal types, \MTT{} extends type theory with an extended version of the dependent
product type, $\prn{\DeclVar{a}{\mu}{A}} \to B\prn{a}$. Up to equivalence, these modal dependent
products are the same as $\prn{\DeclVar{a}{}{\Modify[\mu]{A}}} \to
\LetMod[\mu]{a}[a_0]{B\prn{a_0}}$, but offer lighter-weight notation and support a stronger
$\eta$-rule.
\begin{mathparpagebreakable}
  \inferrule{
    \IsTy[\Gamma/\mu]{A}
    \\
    \IsTm[\Gamma, \DeclVar{a}{\mu}{A}]{b\prn{a}}{B\prn{a}}
  }{
    \IsTm{\lambda a.\,b\prn{a}}{\prn{\DeclVar{a}{\mu}{A}} \to B\prn{a}}
  }
  \and
  \inferrule{
    \IsTm{f}{\prn{\DeclVar{a}{\mu}{A}} \to B\prn{a}}
    \\
    \IsTm[\Gamma/\mu]{a}{A}
  }{
    \IsTm{f\prn{a}}{B\prn{a}}
  }
\end{mathparpagebreakable}
When we write ``for all $\DeclVar{a}{\mu}{A}$, the type $B\prn{a}$ is inhabited''
this should be understood to denote a function $\prn{\DeclVar{a}{\mu}{A}} \to B\prn{a}$.

Finally, we require the following technical axiom on \MTT{} which governs the interaction between
modalities and identity types:
\begin{restatable}{axiom}{crispind}
  \label{ax:crisp-ind}
  If $\DeclVar{A}{\mu}{\Uni}$ and $\DeclVar{a,b}{\mu}{A}$, then the following canonical map sending
  $\Refl$ to $\MkMod[\mu]{\Refl}$ is an equivalence:
  \[
     \MkMod[\mu]{a} = \MkMod[\mu]{b} \to \Modify[\mu]{a = b}
  \]
\end{restatable}

We summarize several basic properties of modalities:
\begin{proposition}\label{prop:primer:modalities}
  \leavevmode
  \begin{itemize}
    \item Each $\Modify[\mu]{-}$ commutes with $\ObjTerm{}$ and pullbacks.
    \item
      $\Modify[\mu]{\Modify[\nu]{-}} \Equiv \Modify[\mu\circ\nu]{-}$ and
      $\Modify[\ArrId{}]{-} \Equiv -$.
    \item If $\mu \le \nu$, then there is a canonical map
      $\Modify[\mu]{-} \to \Modify[\nu]{-}$.
    \item There is an equivalence
      $\Modify[\GM]{A \to \Modify[\SM]{B}} \Equiv \Modify[\GM]{\Modify[\GM]{A} \to B}$
      for $\DeclVar{A,B}{\GM}{\Uni}$. Similarly with $\SM$ and $\GM$ replaced with $\OM$.
  \end{itemize}
\end{proposition}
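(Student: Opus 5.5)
We establish the four items separately, working in the syntax of \MTT{}. Throughout we use \cref{ax:crisp-ind} and the standard fact that in \MTT{} every modality $\Modify[\mu]{-}$ is a dependent right adjoint, realised semantically by the left adjoint $-.\{\mu\}$ acting on contexts.

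For the first item, we write down explicit maps. The map $\ObjTerm{} \to \Modify[\mu]{\ObjTerm{}}$ sends the unique element to $\MkMod[\mu]{\star}$. Contractibility of $\Modify[\mu]{\ObjTerm{}}$ follows by modal elimination: it suffices to prove $\MkMod[\mu]{x} = \MkMod[\mu]{\star}$ for $\DeclVar{x}{\mu}{\ObjTerm{}}$. This equality is the image under \cref{ax:crisp-ind} of $\MkMod[\mu]{\Refl}$, using $x = \star$ at mode $\Gamma/\mu$.

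For pullbacks, take a cospan $A \to C \leftarrow B$ at $\Gamma/\mu$. We construct the comparison map
\[
  \Modify[\mu]{A \times_C B} \to \Modify[\mu]{A} \times_{\Modify[\mu]{C}} \Modify[\mu]{B},
\]
which sends $\MkMod[\mu]{(a,b,p)}$ to $(\MkMod[\mu]{a},\MkMod[\mu]{b},\bar p)$. Here $\bar p$ is obtained from $\MkMod[\mu]{p}$ via the inverse of \cref{ax:crisp-ind}. The inverse map is built by double modal elimination on the first two components, after which the path component is again converted back using \cref{ax:crisp-ind}. The round-trip homotopies follow by modal induction together with the computation rule for the crisp-induction equivalence.

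The second item follows the standard \MTT{} argument, relying on the equality $\Gamma/\mu/\nu = \Gamma/(\mu\circ\nu)$. For composition, we map $\MkMod[\mu]{\MkMod[\nu]{a}}$ to $\MkMod[\mu\circ\nu]{a}$. Defining this requires eliminating the outer modality and then the inner one under the annotation $\mu$, which is exactly what the $\nu$-indexed elimination rule allows. The inverse eliminates a $\mu\circ\nu$-variable and reintroduces the two modalities. Both composites reduce definitionally on canonical elements, and modal induction finishes the argument. The case $\Modify[\ArrId{}]{-}\Equiv -$ is immediate because $\Gamma/\ArrId{}=\Gamma$.

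For the third item, given $\mu \le \nu$ we define the map by modal elimination. A variable $\DeclVar{x}{\mu}{A}$ is usable inside $\Gamma/\nu$, since the variable rule permits $\mu\le\nu$; hence $\MkMod[\nu]{x}$ is well-typed. This is functorial and natural by construction.

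The last item is the main obstacle, because it depends on the specific mode theory for $\GM$, $\SM$ and $\OM$ in \cref{sec:primer:modal-ttt}. The plan is to use that this mode theory makes $\GM \dashv \SM$ an adjunction in the mode 2-category, with unit $\ArrId{} \le \SM\circ\GM$ and counit $\GM\circ\SM \le \ArrId{}$, and that $\GM$ is idempotent, so $\GM\circ\GM = \GM$.

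In the forward direction, take $\MkMod[\GM]{f}$ with $f : A \to \Modify[\SM]{B}$ and an input $\DeclVar{a}{\GM}{A}$. We apply $f$ to $a$, which is legal under the annotation $\GM\circ\GM=\GM$. We then eliminate $\Modify[\SM]{-}$ at $\GM$ to obtain $\DeclVar{b}{\GM\circ\SM}{B}$. The counit then lets us use $b$ at the identity.

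In the backward direction, given $g$ together with $x : A$, we produce $\MkMod[\SM]{g(\MkMod[\GM]{x})}$. This is well-typed because the variable $x$, viewed under $\Gamma/\SM$, is annotated $\ArrId{}/\SM$, and the unit lets us use it at $\GM$ as required.

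The round trips are then a triangle-identity computation, which collapses because 2-cells are unique in a poset-enriched mode theory. The $\OM$ version follows the same argument with $\OM$ in place of both $\GM$ and $\SM$, using that $\OM$ is an involution, $\OM\circ\OM = \ArrId{}$, so it is self-adjoint.
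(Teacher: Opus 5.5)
Your proposal is correct: the paper states this proposition without proof, as a summary of standard facts about \MTT{} together with \cref{ax:crisp-ind}, and your modal-induction constructions are the standard derivations of those facts. One point to tighten is in the last item. Keeping the crisp function usable under the extra lock also relies on $\GM\circ\SM=\GM$ (for $g$ in the backward direction), and in the $\OM$ variant on $\GM\circ\OM=\GM$ rather than on any idempotence of $\OM$. In addition, the round trips there use \cref{ax:crisp-ind} to move the pointwise equality, obtained by function extensionality under $\GM$, out of the modality.
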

\noindent
From the first point, we obtain
$\ZApp : \Modify[\mu]{A \to B} \times \Modify[\mu]{A} \to \Modify{B}$.
We often write $f^\dagger$ for $\MkMod[\mu]{f} \ZApp -$, when the latter is
well-formed.

\subsection{Modalities in triangulated type theory}
\label{sec:primer:modal-ttt}

We now turn from \MTT{} generally to the specific instantiation of \MTT{} we require for \TTT{}.
Namely, we will work with the mode theory generated by one 0-cell $m$, three generating non-identity
1-cells $\brc{\GM,\SM,\OM}$, and the following (in)equalities:
\begin{mathpar}
  \GM = \GM \circ \GM = \GM \circ \SM = \GM \circ \OM
  \and
  \SM = \SM \circ \GM = \SM \circ \SM = \SM \circ \OM
  \and
  \GM \le \ArrId{} = \OM \circ \OM \le \SM
\end{mathpar}
For intuition, in the cubical spaces model of \TTT{} the modal operations
for the generating modalities are interpreted as follows:
\begin{itemize}
  \item $\Interp{\Modify[\GM]{-}}$ is the discrete functor---sending a cubical space $X$ to the
    constant cubical space $\Delta\prn{X\prn{\brk{0}}}$.
  \item $\Interp{\Modify[\SM]{-}}$ is the codiscrete functor---right adjoint to the discrete
    functor---and sends a cubical space $X$ to the cubical space whose value at $\brk{n}$
    is given by $X\prn{\brk{0}}^{2^n}$.
  \item $\Interp{\Modify[\OM]{-}}$ sends a cubical space $X$ to $X \circ \mathrm{op}$ where
    $\mathrm{op}$ is the functor on the cube category reversing $0$ and $1$.
\end{itemize}

If we consider a category $\DeclVar{C}{\GM}{\Uni}$, then these interpretations of $\GM$ and $\OM$
have very concrete meanings: $\Modify[\GM]{C}$ is the underlying discrete groupoid of objects
of $C$ and $\Modify[\OM]{C}$ is the opposite category. In general $\Modify[\SM]{C}$ will fail to be
a category even when $C$ is---its utility is mostly in forcing $\GM$ to be a left adjoint. We note
that the first two modalities are an instance of cohesion~\citep{lawvere:2007,shulman:2018,myers:2023}.

Note, however, that these are presently just intuitions from a particular model. To actually make
these have force within \TTT{} itself, we require additional axioms governing the behavior of
$\Int$ and its interactions with modalities. A complete list of axioms is given in \cref{sec:axioms}
so we discuss only axioms playing a direct role in the proofs given in the paper.

First, we have axioms governing the behavior of $\GM$ and $\Int$:
\begin{restatable}{axiom}{intdisc}
  \label{ax:int-global-points}
  $0,1 : \Int$ induce an equivalence $\Modify[\GM]{\Bool} \Equiv \Modify[\GM]{\Int}$
\end{restatable}
\begin{restatable}{axiom}{intdetectsdisc}\label{ax:int-detects-discrete}
  If $\DeclVar{A}{\GM}{\Uni}$, then $\IsEquiv\prn{\Modify[\GM]{A} \to A}$ if\/
  $\IsEquiv\prn{A \to A^\Int}$
\end{restatable}

Intuitively, the first of this pair states that $\Int$ has just two global elements: the specified
endpoints $0$ and $1$. The second axiom ensures that $\Int$ ``detects'' whether a type is of the
form $\Modify[\GM]{A}$. Phrased pithily: a type $A$ is equivalent to a groupoid of objects if and
only if it has no non-invertible synthetic morphisms. These axioms---particularly the last
one---are what force $\Modify[\GM]{-}$ to match our earlier idea of taking the groupoid core.

The next axiom forces cubes $\Int^n$ to form a \emph{separating family} among all types. That is, to
detect whether a particular ($\GM$-)map is invertible, it suffices to check whether post-composing
with this map induces equivalences between groupoids of maps $\Modify[\GM]{\Int \to -}$:
\begin{restatable}{axiom}{cubesseparate}
  \label{ax:cubes-separate}
  If $\DeclVar{A,B}{\GM}{\Uni}$ and $\DeclVar{f}{\GM}{A \to B}$, then $f$ is an equivalence if
  \[
    \Prod{n : \Nat}
    \IsEquiv\prn{\prn{f_*}^\dagger : \Modify[\GM]{\Int^n \to A} \to \Modify[\GM]{\Int^n \to B}}
  \]
\end{restatable}

Since $\Simp$ has a universal ``mapping-out'' property, we have a succinct description of
$\Simp A \to B$ in terms of $A \to B$. In general, there is no such simple relationship between
$A \to \Simp B$ and $A \to B$, with the notable exception of the case when $A = \Delta^n$.
Intuitively, $\Simp B$ ensures that all hyper-cubes in $B$ are determined uniquely by the composite
of simplices and so while the hyper-cubes of $\Simp B$ may be quite different than in $B$, the
collection of \emph{simplices} in $B$ remains unchanged. The following records this relationship:
\begin{restatable}{axiom}{simplicialstability}
  \label{ax:simp-stability}
  For every $\DeclVar{A}{\GM}{\Uni}$, the following holds:
  \[
    \Prod{n : \Nat}
    \IsEquiv\prn{\prn{\eta_*}^\dagger : \Modify[\GM]{\Delta^n \to A} \to \Modify[\GM]{\Delta^n \to \Simp A}}
  \]
\end{restatable}

Our final axiom in this section records the right adjoint to $\Int \to -$ mentioned in
\cref{sec:intro}. We must be somewhat more careful in stating this than \citet{licata:2018}, as we
wish to ensure that the postulated adjoint is fully coherent, so we use the following formulation
which records that the adjoint exists uniquely:
\begin{restatable}{axiom}{amazingradj}
  \label{ax:amazing-right-adjoint}
  For every $\DeclVar{A}{\GM}{\Uni}$, the following holds:
  \[
    \Sum{\DeclVar{A_\Int}{\GM}{\Uni}, \DeclVar{\epsilon}{\GM}{\prn{A_\Int}^\Int \to A}}
    \Prod{\DeclVar{B}{\GM}{\Uni}} \IsEquiv\prn{\Modify[\GM]{B \to A_\Int} \to \Modify[\GM]{B^\Int \to A}}
  \]
\end{restatable}

Intuitively, $A_\Int$ is the application of this ``amazing'' right adjoint to a
$\GM$-annotated type $A$ and $\epsilon$ is the counit of this adjunction at $A$. From this data we
may reconstruct a functor $\prn{-}_\Int : \Modify[\GM]{\Uni} \to \Modify[\GM]{\Uni}$.
Crucially, this axiom only applies when $A$ is $\GM$-annotated; \citet{licata:2018} show that
requiring it for arbitrary types forces $\Int = \ObjTerm{}$, contradicting
 $0 \neq 1 : \Int$. Moreover, as noted earlier, this axiom is \emph{not} validated by the
model of \STT{} in simplicial spaces; it is only after shifting to cubical spaces that
\cref{ax:amazing-right-adjoint} is valid. Concretely, it is often the case that even if $A$ is known
to be simplicial, the same will not be true of $A_\Int$. Our main use of this axiom is to ``transpose'' various
$\GM$-annotated predicates $X^\Int \to \Prop$ into predicates $X \to \Prop$. For convenience, we
bundle up this process into the following lemma:

\begin{lemma}\label{lem:primer:amazing-transpose}
  If $\DeclVar{\phi}{\GM}{\Uni^{\Int^n} \to \Prop}$, there is a
  $\DeclVar{\bar{\phi}}{\GM}{\Uni \to \Prop}$ equipped with a canonical equivalence:
  \[
    \Prod{\DeclVar{A}{\GM}{X \to \Uni}}
    \Modify[\GM]{\prn{x : X} \to \bar{\phi}\prn{A\,x}}
    \Equiv
    \Modify[\GM]{\prn{x : X^{\Int^n}} \to \phi\prn{A \circ x}}
  \]
\end{lemma}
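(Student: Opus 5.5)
The plan is to take $\bar{\phi}$ to be the predicate on $\Uni$ obtained by transposing $\phi$ along the amazing right adjoint, iterated $n$ times. Since $\Prop$ is classified by a universe-like object, the idea is to apply \cref{ax:amazing-right-adjoint} to the subobject of $\Uni^{\Int^n}$ carved out by $\phi$, rather than to $\phi$ directly. Concretely, let $\DeclVar{P}{\GM}{\Uni}$ be $\Sum{B : \Uni^{\Int^n}} \phi\prn{B}$, with projection $p : P \to \Uni^{\Int^n}$, an embedding. Since $\Uni^{\Int^n} = \prn{\Uni^{\Int^{n-1}}}^{\Int}$, the iterated adjunction (applying \cref{ax:amazing-right-adjoint} $n$ times, using $B^{\Int^n} = \prn{B^{\Int^{n-1}}}^\Int$) gives a $\GM$-type $\Uni_{\Int^n}$ with $\Modify[\GM]{B \to \Uni_{\Int^n}} \Equiv \Modify[\GM]{B^{\Int^n} \to \Uni}$. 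I would instead work with the pullback square directly. Transposing the evaluation map $\prn{\Uni}^{\Int^n} \to \Uni^{\Int^n}$ (identity) gives the unit $\eta : \Uni \to \prn{\Uni^{\Int^n}}_{\Int^n}$. Because the right adjoint preserves pullbacks and monomorphisms (it is a right adjoint on $\GM$-types; its action on maps is recovered from the universal property), $p_{\Int^n} : P_{\Int^n} \to \prn{\Uni^{\Int^n}}_{\Int^n}$ is again an embedding. We then define $\bar{\phi}\prn{A} \defeq \mathrm{fib}_{p_{\Int^n}}\prn{\eta\,A}$, which is a proposition. This definition is $\GM$-annotated because everything in it was built from $\GM$-data.

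The key step is to verify the equivalence for $\DeclVar{A}{\GM}{X \to \Uni}$. The left side, $\Modify[\GM]{\prn{x : X} \to \bar{\phi}\prn{A\,x}}$, is equivalent to the type of $\GM$-lifts of $\eta \circ A : X \to \prn{\Uni^{\Int^n}}_{\Int^n}$ through the embedding $p_{\Int^n}$. Here I use that $\Modify[\GM]{-}$ commutes with $\Sigma$/pullbacks (\cref{prop:primer:modalities}) to rewrite dependent functions into a fiber as lifts. By the universal property, these lifts correspond to $\GM$-lifts of the transpose $\prn{\eta\circ A}^\flat : X^{\Int^n} \to \Uni^{\Int^n}$ through $p$. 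Naturality of the adjunction in $X$ identifies this transpose with $x \mapsto A \circ x$. Lifts through $p$ are exactly $\Modify[\GM]{\prn{x : X^{\Int^n}} \to \phi\prn{A\circ x}}$. Both sides are $\GM$ of propositions, so it suffices to give maps in both directions.

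I expect the main obstacle to be the claim that the correspondence on $\Modify[\GM]{B \to -}$ respects the lifting problem, i.e. that the right adjoint sends the embedding $p$ to an embedding and that lifts against $p_{\Int^n}$ correspond to lifts against $p$. To handle this, I would prove from \cref{ax:amazing-right-adjoint} alone a pullback-preservation statement: $\Modify[\GM]{B \to P_{\Int}}$ is equivalent to the pullback of $\Modify[\GM]{B \to \prn{\Uni^{\Int^n}}_{\Int}}$ along $\Modify[\GM]{B^\Int \to P}$. This is checked via the adjunction equivalence and the fact that $\Modify[\GM]{-}$ commutes with pullbacks. \cref{ax:crisp-ind} is needed to move identity types inside $\GM$ when comparing fibers. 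Once this is established for $n=1$, induction on $n$ handles $\Int^n$.
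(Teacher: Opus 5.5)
You have the paper's construction, but you never show that $\bar{\phi}$ lands in $\Prop$, and that step is needed.

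The construction itself matches. Because $\prn{-}_{\Int^n}$ preserves pullbacks, your embedding $p_{\Int^n}$ is the pullback of $\top_{\Int^n} : \prn{\ObjTerm{}}_{\Int^n} \to \Prop_{\Int^n}$ along the action of $\prn{-}_{\Int^n}$ on $\phi$. That action, precomposed with your unit $\eta$, is exactly the transpose $\Uni \to \Prop_{\Int^n}$ of $\phi$ (the paper's $\phi_\Int$ when $n = 1$). So your family over $\Uni$ is the paper's. Your comparison of lifts of $\eta \circ A$ through $p_{\Int^n}$ with lifts of $A_*$ through $p$ is the same transposition the paper performs. The naturality check identifying the transpose of $\eta\circ A$ with $x \mapsto A \circ x$ is also correct.

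The missing piece is size. The lemma asks for $\bar{\phi} : \Uni \to \Prop$, where $\Prop = \Sum{A : \Uni}\IsProp\prn{A}$ is the same type of small propositions that is the codomain of $\phi$. However, $\Uni^{\Int^n}$ and $P$ live one universe higher, and so do $P_{\Int^n}$ and $\prn{\Uni^{\Int^n}}_{\Int^n}$. This is because \cref{ax:amazing-right-adjoint} returns $A_\Int$ in the universe of $A$. Hence $\mathrm{fib}_{p_{\Int^n}}\prn{\eta\,A}$ is a proposition of the larger universe. Without propositional resizing, nothing in your argument shows it is equivalent to a small one. As written, you prove the lemma only for a predicate valued in large propositions.

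The paper addresses exactly this point. Following \citet[\S 3.3]{gratzer:2024}, it observes that the single tautological family $\top_{\Int} : \prn{\ObjTerm{}}_{\Int} \to \Prop_{\Int}$ is small, i.e.\ classified by a map $\Prop_{\Int} \to \Prop$. It then takes $\bar{\phi}$ to be this classifying map composed with the transpose of $\phi$.

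Your setup admits the same repair. By the pullback preservation you already plan to prove, each fibre of $p_{\Int^n}$ over $\eta\,A$ is a fibre of $\top_{\Int^n}$. So smallness of that one universal family is all you need to add. It is not supplied by the universal property you use, and needs its own argument or that citation. With it, your proof goes through and agrees with the paper's.
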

\begin{proof}
  Fix a predicate $\DeclVar{\phi}{\GM}{\Uni^\Int \to \Prop}$ (for simplicity, we handle only the case of
  $n = 1$ as the case general case is identical modulo notational clutter). We begin by using
  \cref{ax:amazing-right-adjoint} to obtain a map  $\DeclVar{\phi_\Int}{\GM}{\Uni \to \Prop_\Int}$
  where $\Prop_\Int$ is the unique type arising from applying the amazing right adjoint to $\Prop$.
  Next, following \citet[\S 3.3]{gratzer:2024}, we observe that the tautological family
  $\ObjTerm{\Int} \to \Prop_\Int$ is classified by a map $\Prop_\Int \to \Prop$ (in particular, it
  is a small family) and, composing this with $\phi_\Int$, we obtain our desired $\bar{\phi} : \Uni
  \to \Prop$. In total, we have the following diagram:
  \[
    \begin{tikzpicture}[diagram]
      \SpliceDiagramSquare{
        nw = \ObjTerm{\Int},
        ne = \ObjTerm{},
        sw = \Prop_\Int,
        se = \Prop,
        nw/style = pullback,
      }
      \node[left = of sw] (A) {$\Uni$};
      \path[->] (A) edge node[above] {$\phi_\Int$} (sw);

      \path[->, bend right] (A) edge node[below] {$\bar{\phi}$} (se);
    \end{tikzpicture}
  \]

  To show that the desired equivalence holds, fix $\DeclVar{A}{\GM}{X \to \Uni}$. We wish to show
  that $\Prod{x : X} \bar{\phi}\prn{A\,x}$ holds if and only if
  $\Prod{x : X^\Int} \bar{\phi}\prn{A \circ x}$. The former holds if and only if there is a
  (necessarily unique) extension of the above diagram:
  \[
    \begin{tikzpicture}[diagram]
      \SpliceDiagramSquare{
        nw = \ObjTerm{\Int},
        ne = \ObjTerm{},
        sw = \Prop_\Int,
        se = \Prop,
        nw/style = pullback,
      }
      \node[left = of sw] (A) {$\Uni$};
      \path[->] (A) edge node[below] {$\phi_\Int$} (sw);
      \node[left = of nw] (X) {$X$};
      \path[->] (X) edge node[left] {$A$} (A);
      \path[->, densely dotted] (X) edge (nw);
    \end{tikzpicture}    
  \]
  After transposing $\prn{-}_\Int$ (and discarding the right-hand vertical map as it is redundant
  for our purposes), we see that the left-hand triangle of this diagram is precisely equivalent to
  the following:
  \[
    \begin{tikzpicture}[diagram]
      \node (nw) {$\ObjTerm{}$};
      \node[below = 1.5cm of nw] (sw) {$\Prop{}$};
      \path[->] (nw) edge (sw);
      \node[left = 2cm of sw] (A) {$\Uni^\Int$};
      \path[->] (A) edge node[below] {$\phi$} (sw);
      \node[left = of nw] (X) {$X^\Int$};
      \path[->] (X) edge node[left] {$A_*$} (A);
      \path[->, densely dotted] (X) edge (nw);
    \end{tikzpicture}    
  \]
  Examining this diagram yields the desired conclusion.
\end{proof}
Finally, \citet{gratzer:2024} have shown that \TTT{} (\MTT{} with this mode theory
extended with all of the above axioms) has a model in cubical spaces. They further show, based on a
result of \citet{riehl:2017}, that categories in \TTT{} are realized by a standard model of
$\infty$-categories: complete Segal spaces~\citep{rezk:2001}.
\begin{theorem}
  \label{thm:primer:soundness}
  There is a model of \TTT{} in cubical spaces $\PSH[\SSET]{\CUBE}$. In this model, categories
  are realized by complete Segal spaces.
\end{theorem}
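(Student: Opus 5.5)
This theorem is not proved type-theoretically. It is a semantic statement about the intended model, and it rests on \citet{gratzer:2024} and \citet{riehl:2017}. The plan therefore has two parts: assemble the model in the presheaf $\infty$-topos $\PSH[\SSET]{\CUBE}$ on Dedekind cubes, then identify the interpretation of the predicate ``is a category''.

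\textbf{Constructing the model.} $\PSH[\SSET]{\CUBE}$ is a presheaf $\infty$-topos, so it models \HOTT{} with univalent universes and HITs. Both facts are obtained by strictifying to a type-theoretic model topos, using a Reedy or injective model structure on cubical simplicial presheaves. The data are then interpreted as follows.
\begin{itemize}
  \item $\Int$ is the representable $\yo\prn{\brk{1}}$. It is a set and a bounded distributive lattice with $0\neq 1$, because Dedekind cubes carry exactly this structure.
  \item The modalities are interpreted as in \cref{sec:primer:modal-ttt}, as a triple of adjoint (co)monads:
  \begin{itemize}
    \item discrete $\dashv$ codiscrete, arising from the fact that $\CUBE$ has a terminal object;
    \item $\mathrm{op}$, arising from the involution of $\CUBE$.
  \end{itemize}
  Since these are induced by functors on the base category, they extend to dependent right adjoints. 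This gives an \MTT{} model following \citet{gratzer:2020}, and \cref{ax:crisp-ind} holds because the functors preserve identity types.
\end{itemize}

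\textbf{Checking the axioms.}
\begin{itemize}
  \item \cref{ax:int-global-points}: the global points of a representable are its $\brk{0}$-points, and $\brk{1}$ has exactly two.
  \item \cref{ax:int-detects-discrete}: a presheaf $X$ with $X\simeq X^{\Int}$ has all restriction maps invertible, since $\CUBE$ is generated by the $\brk{1}$ factors. Hence $X$ is constant.
  \item \cref{ax:cubes-separate}: this is Yoneda. The $\GM$-maps $\Int^n\to X$ are precisely the values $X\prn{\brk{n}}$, so a map inducing equivalences on all of them is a levelwise equivalence.
  \item \cref{ax:amazing-right-adjoint}: $-\times \brk{1}$ is an endofunctor of $\CUBE$, because cartesian cubes have products. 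Precomposition with it supplies a right adjoint to $\prn{-}^{\Int}$ on global types.
  \item \cref{ax:simp-stability}: simplicial spaces sit inside the model as the subtopos of sheaves for covers $i\le j\lor j\le i$, with $\Simp$ its sheafification. Simplices $\Delta^n$ are sheaves, and sheafification does not change maps out of them.
\end{itemize}

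\textbf{Identifying categories.} A simplicial type is a sheaf, so it corresponds to a simplicial space, that is, a presheaf on $\Delta$. Under this identification, the Segal and Rezk conditions become orthogonality to $\Lambda^2_1\hookrightarrow\Delta^2$ and to $\IntIso\to\ObjTerm{}$, computed in simplicial spaces. By \citet{riehl:2017}, these are exactly the Segal and completeness conditions of \citet{rezk:2001}.

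\textbf{Main obstacle.} The delicate step is coherence: the amazing right adjoint and the modalities must be strictified so that they act on a strict universe, with dependent right adjoints and substitution stability. I would follow \citet{licata:2018} together with \citet{gratzer:2024}. That route works with a type-theoretic model topos in which all the relevant functors are induced by endofunctors of $\CUBE$, so they lift strictly.
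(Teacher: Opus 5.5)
Your route is the one the paper relies on. The paper gives no argument of its own here; it cites \citet{gratzer:2024} for the model and, through it, \citet{riehl:2017} for complete Segal spaces. Your checks of \cref{ax:int-global-points,ax:int-detects-discrete,ax:cubes-separate} are fine as sketches. The step for \cref{ax:simp-stability}, however, does not work as argued. Being a sheaf is a condition on maps \emph{into} an object and says nothing about whether $\Simp$ preserves maps \emph{out of} it. For instance, $\Int^2$ is also simplicial. Take $U = \Sum{(i,j) : \Int^2} \prn{i \le j \lor j \le i}$; then $\Simp U \simeq \Int^2$, since $\Simp$ nullifies exactly these propositions. Yet the identity in $\Modify[\GM]{\Int^2 \to \Simp U}$ does not come from $\Modify[\GM]{\Int^2 \to U}$, because the two generators of the free distributive lattice on two generators are incomparable.

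What you are missing is that simplices are totally ordered. Concretely, $\Delta^n$ is a retract of the representable $\Int^n$ via $(i_0,\ldots,i_{n-1}) \mapsto (i_0, i_0 \land i_1, \ldots, i_0 \land \cdots \land i_{n-1})$. So the simplex category sits fully inside the idempotent completion of $\CUBE$, and restriction $\iota^*$ to simplices has fully faithful adjoints on both sides. Moreover, $\iota^*$ inverts the generating covers, since every map $\Delta^n \to \Int^2$ lands in one of the two triangles. Once $\Simp$ is identified with $\iota_*\iota^*$ (using that cubes are covered by their simplices), stability is just $\iota^*\iota_* \simeq \ArrId{}$. Your step ``a simplicial type is a sheaf, so it corresponds to a simplicial space'' silently assumes this same identification, so it must be supplied there too.

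Two smaller corrections. First, for \cref{ax:amazing-right-adjoint} you have the adjunction backwards. Since the Dedekind cube category has finite products, precomposition with $-\times[1]$ \emph{is} $(-)^{\Int}$. The amazing right adjoint is right Kan extension along $-\times[1]$, and it exists only as an external functor, which is why the axiom carries $\GM$ annotations. Second, \TTT{} includes the remaining axioms of \cref{sec:axioms}, which you do not check. The $\OM$ axiom is immediate from the involution of $\CUBE$. Duality, however, is not a Yoneda triviality: finitely presented $\Int$-algebras need not be free, and their spectra need not be retracts of cubes. The paper uses duality in the case analyses of \cref{lem:cocartesian:glue-is-isoinner,thm:cocartesian:glue-is-cocart}.
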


\subsection{Category theory in triangulated type theory}

We require some of the category theory developed previously in \TTT{} and
\STT{}~\citep{riehl:2017,buchholtz:2023,gratzer:2024,gratzer:2025}. To keep this paper more
self-contained, we recall the relevant results and definitions here.

As noted by \citet{riehl:2017}, a natural transformation between functors $f,g : C \to D$---\ie{}, an
element of $\Hom{f}{g}$---corresponds precisely to a family $\Prod{c : C} \Hom{f\,c}{g\,c}$.
Consequently, a pointwise invertible natural transformation is invertible. We note a refinement of
this statement by \citet{gratzer:2025} which further reduces this to $\GM$ elements (\ie{}, objects) of
$C$:%
\footnote{\citet{gratzer:2025} prove
  \cref{lem:primer:pw-iso-to-iso,lem:primer:pw-left-adjoint-to-left-adjoint} using the twisted arrow
  modality, which we have chosen not to include in \TTT{} for simplicity. More elementary proofs
  merely relying on \cref{ax:cubes-separate} are possible and so there is no issue with their use in
  \TTT{}.}
\begin{lemma}
  \label{lem:primer:pw-iso-to-iso} If $\DeclVar{C,D}{\GM}{\Uni}$ are categories and
  $\DeclVar{f,g}{\GM}{C \to D}$, then a natural transformation $\DeclVar{\alpha}{\GM}{\Hom{f}{g}}$
  is invertible if and only if for all $\DeclVar{c}{\GM}{C}$ the map $\alpha\prn{c} :
  \Hom{f\,c}{g\,c}$ is invertible.
\end{lemma}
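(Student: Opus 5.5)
The forward direction is immediate: if $\alpha$ is invertible in $C \to D$, evaluation at any $c$ is a functor and therefore preserves invertibility, so each $\alpha\prn{c}$ is invertible. For the converse, I would reduce to the fact recalled just above from \citet{riehl:2017}: a natural transformation that is invertible at \emph{every} element $c : C$ (not only at $\GM$-elements) is invertible. The task is therefore to upgrade the hypothesis from global objects $\DeclVar{c}{\GM}{C}$ to arbitrary elements, and \cref{ax:cubes-separate} is the tool for this.

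\textbf{Reformulation as an equivalence of types.} Regard $\alpha$ as a $\GM$-map $\hat{\alpha} : C \to D^{\Int}$. Let $\Con{Iso}\prn{D} \subseteq D^{\Int}$ be the subtype of invertible arrows; being invertible is a proposition in a Segal type. Consider the pullback
\[
  P = C \times_{D^\Int} \Con{Iso}\prn{D},
\]
whose first projection $p : P \to C$ is an embedding. The pointwise statement "$\alpha\prn{c}$ is invertible for every $c : C$'' is exactly the claim that $p$ is an equivalence. Since $C$, $D$ and $\alpha$ are all $\GM$-annotated, so are $P$ and $p$. By \cref{ax:cubes-separate}, it then suffices to show that for every $n$ the induced map $\Modify[\GM]{\Int^n \to P} \to \Modify[\GM]{\Int^n \to C}$ is an equivalence. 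This map is already an embedding, because $\Modify[\GM]{-}$ and exponentials preserve embeddings, using \cref{prop:primer:modalities}. So what remains is surjectivity: given a global cube $\DeclVar{x}{\GM}{\Int^n \to C}$, we must show that $\alpha\prn{x\,t}$ is invertible for all $t : \Int^n$.

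\textbf{Handling cubes.} Here I would use that $C \to D$ is a category. The $\GM$-map $\alpha \circ x$ is a natural transformation between the functors $f\circ x$ and $g\circ x$ from $\Int^n$ to $D$. Its components at the vertices of $\Int^n$ are components of $\alpha$ at global objects of $C$: those vertices are global points built from $0$ and $1$, by \cref{ax:int-global-points}. Hence these components are invertible by hypothesis. It remains to see that invertibility at the vertices forces invertibility at every $t : \Int^n$. The argument is that any $t$ is joined to a vertex by a morphism in $\Int^n$; for instance there is a morphism $\bar 0 \to t$, since $0 \le t$ componentwise. Naturality then gives a commutative square in $D$ relating $\alpha\prn{x\,\bar 0}$ and $\alpha\prn{x\,t}$. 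To exploit this square I would use the Rezk condition together with \cref{ax:int-detects-discrete}, most likely by showing that the subtype $\Con{Iso}\prn{D}$ of $D^{\Int}$ is closed under morphisms in the sense relevant here.

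\textbf{Main obstacle.} The most delicate step is this passage from vertices to interior points of the cube. A commutative square with invertible left side does not, by itself, make the right side invertible. A cleaner route is to induct on $n$, peeling off one coordinate at a time. It would then suffice to treat $n = 1$ together with a fiberwise version: $\alpha \circ x$ for $x : \Int \to C$ is a map $\Int \to \Con{Iso}$-candidates whose endpoints are known to be isomorphisms, and one needs every point in between to be an isomorphism. I would first try to obtain this using the distributive-lattice structure, for example the connection map $\Int \times \Int \to \Int$ sending $\prn{i, s}$ to $i \land s$, to transport the hypothesis from the endpoint $0$ or $1$ along $x$. I expect the precise argument for this step to require the most care.
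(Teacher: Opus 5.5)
\textbf{There is a genuine gap: the vertex-to-interior step is left unproven, and the tools you suggest for it cannot work.}

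Your outer reduction is correct: the forward direction, the embedding $P \to C$, and the use of \cref{ax:cubes-separate} to reduce to one claim. That claim is: for a cube $\DeclVar{x}{\GM}{\Int^n \to C}$, the transformation $\alpha \circ x$ of functors $\Int^n \to D$, known to be invertible at the $2^n$ vertices, is invertible at every $t : \Int^n$. This is the route the paper's footnote alludes to. The paper gives no proof of its own; it defers to \citet{gratzer:2025}, whose argument uses the twisted arrow modality. However, the step from vertices to interior points is the entire non-formal content of the lemma, and you leave it open.

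The approaches you float for that step fail. Take a non-invertible $u : \Hom{a}{b}$. The connection square $\lambda s\,i.\,u\prn{s \land i}$ is a morphism $\Hom[D^\Int]{\ArrId{a}}{u}$. Read as a transformation of functors $\Int \to D$, its component at $0$ is $\ArrId{a}$ and its component at $1$ is $u$. So $\Con{Iso}\prn{D} \subseteq D^\Int$ is \emph{not} closed under morphisms of $D^\Int$. It also follows that no argument transporting invertibility from a single endpoint along a connection $i \land s$ can succeed.

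What is missing is an argument that uses both endpoints at once, together with the Segal, simplicial and Rezk conditions. For instance, take any category $E$ and $F : \Int \to E$. Consider the type of pairs $\prn{G,\beta}$ with $\beta : \Hom[E^\Int]{F}{G}$ whose components at $0$ and $1$ are invertible; this type is contractible, as follows.
\begin{itemize}
\item Since $E$ is simplicial, a square $\Int \times \Int \to E$ is the same as two $2$-simplices glued along the diagonal.
\item By the Segal condition, the type is then equivalent to tuples consisting of $G_0, G_1 : E$, $g : \Hom{G_0}{G_1}$, invertible $\beta_0 : \Hom{F\,0}{G_0}$ and $\beta_1 : \Hom{F\,1}{G_1}$, and a path $\beta_1 \circ F = g \circ \beta_0$ (viewing $F$ as an arrow).
\item The Rezk condition contracts $\prn{G_0,\beta_0}$ and $\prn{G_1,\beta_1}$ onto identities. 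What remains is $\Sum{g : \Hom{F\,0}{F\,1}} F = g$, which is contractible.
\end{itemize}
Hence every such $\beta$ is $\ArrId{F}$ up to paths, and in particular invertible.

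Now apply this with $E = D^{\Int^{n-1}}$, which is again a category, and induct on $n$. This shows that a transformation of functors $\Int^n \to D$ that is invertible at all vertices is invertible, hence invertible at every $t$. That is exactly what your reduction needs. This cube step does not even require $\GM$-annotations.
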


We similarly have a synthetic version of the classical result that full, faithful and essentially
surjective functors are equivalences.
\begin{lemma}
  \label{lem:primer:ff-ess-surj}
  If $\DeclVar{C,D}{\GM}{\Uni[\IsCat]}$, then $\DeclVar{f}{\GM}{C \to D}$ is invertible
  iff $f$ is essentially surjective and fully faithful on $\GM$-elements of $C$.
\end{lemma}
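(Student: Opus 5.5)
The forward direction is routine: an equivalence $f$ has an inverse $g$, which gives essential surjectivity. Since $f$ induces equivalences $\Int \to C \simeq \Int \to D$, and these commute with evaluation at the endpoints, $f$ induces equivalences on fibres of endpoint evaluation, so it is fully faithful on all elements and in particular on $\GM$-elements. The content is the converse, and for it I will use \cref{ax:cubes-separate}. It suffices to show that for every $n$ the map $\prn{f_*}^\dagger : \Modify[\GM]{\Int^n \to C} \to \Modify[\GM]{\Int^n \to D}$ is an equivalence. Since $C$ and $D$ are simplicial, $\Int^n \to C$ agrees with maps out of the simplicial reflection of $\Int^n$. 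By the Segal condition, a map out of a cube or simplex into a Segal type is determined by its restriction to a spine of composable arrows. So I will reduce the $n$-cube case to the case of spines $\Int \cup_{\ObjTerm{}} \cdots \cup_{\ObjTerm{}} \Int$.

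\textbf{Step 1: reduction to spines.} The Segal and simplicial conditions give
\[
\Modify[\GM]{\Int^n \to C} \Equiv \Modify[\GM]{\mathrm{Sp} \to C},
\]
naturally in $C$, for a suitable finite colimit $\mathrm{Sp}$ of intervals glued at points. The cube is simplicially the union of the $n!$ $n$-simplices glued along faces, and each simplex reduces to its spine by iterated inner-horn filling. Because $\Modify[\GM]{-}$ commutes with pullbacks by \cref{prop:primer:modalities}, this turns $\Modify[\GM]{\Int^n \to C}$ into an iterated pullback of copies of $\Modify[\GM]{C^\Int}$ over $\Modify[\GM]{C}$. Hence it suffices to treat the two cases $n=0$ and $n=1$.

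\textbf{Step 2: objects and arrows.} For $n=0$ we need $f^\dagger : \Modify[\GM]{C} \to \Modify[\GM]{D}$ to be an equivalence.

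\emph{Fibres over $\Modify[\GM]{C^\Int}$.} Write $\Modify[\GM]{C^\Int} \Equiv \Sum{\DeclVar{a,b}{\GM}{C}} \Modify[\GM]{\Hom{a}{b}}$, using \cref{ax:crisp-ind} to move the endpoint equations inside the modality. Full faithfulness on $\GM$-elements makes $\Modify[\GM]{\Hom[C]{a}{b}} \to \Modify[\GM]{\Hom[D]{fa}{fb}}$ an equivalence. So the $n=1$ case follows from the $n=0$ case.

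\emph{Fibres of $f^\dagger$.} For $n=0$, let $\DeclVar{d}{\GM}{D}$. Essential surjectivity, read crisply, gives some $\DeclVar{c}{\GM}{C}$ with an isomorphism $fc \cong d$, so the fibre is inhabited. For contractibility, compare two points $(c,p)$ and $(c',p')$, where $p : fc = d$; by Rezk-ness of $D$ we may equivalently take $p$ to be an isomorphism. Full faithfulness lifts the composite isomorphism $fc \cong fc'$ to an isomorphism $c \cong c'$, and Rezk-ness of $C$ turns it into a path compatible with $p$ and $p'$. More cleanly, $\Modify[\GM]{C}$ is the crisp core groupoid, and the computation above shows $f^\dagger$ is fully faithful on identity types, since crisp identity types equal crisp isomorphism spaces by Rezk. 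Together with surjectivity this gives an equivalence.

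\textbf{Main obstacle.} I expect the hardest step to be Step 1: establishing that $\Modify[\GM]{\Int^n \to C}$ is controlled by spines for a simplicial Segal type $C$, naturally in $C$. The available tool is \cref{ax:simp-stability}, together with the fact that $\Simp(\Int^n)$ is glued from simplices. If this is awkward, I would fall back on a different argument: build an inverse $g$ crisply from the $n=0,1$ data, then use \cref{lem:primer:pw-iso-to-iso} to check that the unit and counit are invertible pointwise on $\GM$-elements.
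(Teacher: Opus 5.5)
Your proposal is correct and follows the route the paper intends. The paper states this lemma without proof, but your reduction is the pattern it uses elsewhere, for instance in showing that $U$ is an embedding: apply \cref{ax:cubes-separate}, use simpliciality and the Segal condition to cut down to $n=0,1$, then handle objects with \cref{ax:crisp-ind} and Rezk-ness and arrows fibrewise with full faithfulness on crisp objects.

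One correction to Step~1: $\Modify[\GM]{\Int^n \to C}$ is not an iterated pullback of copies of $\Modify[\GM]{C^\Int}$ over $\Modify[\GM]{C}$, because a square is not a spine (its two composites must be identified). It is instead a finite limit, natural in $C$, of the types $\Modify[\GM]{\Delta^k \to C}$ along face maps, each of which is such an iterated pullback, so componentwise equivalence still gives the reduction. This uses only the simpliciality of $C$ and $D$, not \cref{ax:simp-stability}.
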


Our calculations with cocartesian fibrations in \cref{sec:cocartesian,sec:cat} will rely on the
theory of adjunctions in \TTT{}. To begin with, an adjunction between two functors $f : C \to D$ and
$g : D \to C$ is given by a collection of equivalences:
\[
  \alpha : \Prod{c : C}\Prod{d : D} \Hom{f\,c}{d} \Equiv \Hom{c}{g\,d}
\]
Note that we do not require any additional naturality constraints on $\alpha$, these are
automatically enforced by virtue of working synthetically. We say $f$ is a left adjoint if there exists
a (necessarily unique) $\prn{g,\alpha}$, and dually that $g$ is a right adjoint if there exists
$\prn{f,\alpha}$. It is often difficult to construct such a family of equivalences directly, so we
often use the following result of \citet{gratzer:2025}:
\begin{lemma}
  \label{lem:primer:pw-left-adjoint-to-left-adjoint}
  If $\DeclVar{C,D}{\GM}{\Uni}$ are categories, then $\DeclVar{f}{\GM}{C \to D}$ is a left adjoint
  iff for all $\DeclVar{d}{\GM}{D}$ there exists $\DeclVar{c}{\GM}{C}$ and
  $\DeclVar{\epsilon}{\GM}{\Hom{f\prn{c}}{d}}$ such that the following is an equivalence for all
  $\DeclVar{c'}{\GM}{C}$:
  \[
    \epsilon_* \circ f : \Hom{c'}{c} \to \Hom{f\prn{c'}}{d}
  \]
\end{lemma}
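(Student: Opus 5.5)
The forward direction is immediate. If $f$ is a left adjoint with right adjoint $g$ and equivalences $\alpha$, then for each $\DeclVar{d}{\GM}{D}$ we take $c \defeq g\,d$ and let $\epsilon$ be the image of $\ArrId{g\,d}$ under $\alpha^{-1}$. Because composition in the Segal type $D$ is natural, $\epsilon_* \circ f$ agrees with $\alpha^{-1}$ precomposed along $\Hom{c'}{g\,d}$. It is therefore an equivalence for every $c'$, and in particular for $\GM$-elements.

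For the converse, the first task is to build $g$. The plan is to assemble the pointwise data into a functor. Write $\mathcal{R}$ for the type of triples $\prn{d,c,\epsilon}$ with $\epsilon : \Hom{f\,c}{d}$ such that $\epsilon_* \circ f$ is an equivalence for \emph{all} $c' : C$. Call these \emph{terminal objects of the comma category $f \downarrow d$}, and note this condition is a proposition. There is a projection $p : \mathcal{R} \to D$ whose fibres are the types of terminal objects of $f\downarrow d$. Those fibres are propositions, since terminal objects in a category are unique up to contractible choice, using Rezk-ness of $C$ and $D$.

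So $g$ exists once $p$ is shown to be an equivalence, and then $g \defeq \pi_2 \circ p^{-1}$. To prove $p$ is an equivalence it suffices, as $p$ is an embedding, to show it is surjective. By \cref{ax:cubes-separate} applied to the $\GM$-map $p$, it suffices that $\Modify[\GM]{\Int^n \to \mathcal{R}} \to \Modify[\GM]{\Int^n \to D}$ is an equivalence for each $n$. Here $\mathcal{R}$ is $\GM$-definable because $C,D,f$ are $\GM$-annotated.

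A $\GM$-map $\Int^n \to D$ can be transposed along the comma construction: $C \downarrow_f D^{\Int^n}$ is a category, and a terminal object there pointwise yields a family. So the question reduces to the $n=0$ case for the functor $f^{\Int^n} : C^{\Int^n} \to D^{\Int^n}$, together with stability of the terminality condition under cubes. The main obstacle is exactly this step: showing that if $\epsilon$ is terminal for $\GM$-elements $c'$, then it is terminal for all $c'$, and that terminality is detected pointwise in $\Int^n$.

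For the first half of that obstacle I would argue as follows. The map $\epsilon_*\circ f : \Hom{c'}{c} \to \Hom{f\,c'}{d}$ is a $\GM$-family over $C$ (once $c$ and $d$ are fixed crisply). By \cref{lem:primer:amazing-transpose} with $n=0$, or more directly by \cref{ax:cubes-separate} applied to the induced map of total spaces over $C$, an invertibility statement that holds at every $\GM$-element and every cube $\Int^n \to C$ holds globally. The cube case reduces to the point case because the hom-types of the functor category $C^{\Int^n}$ are computed pointwise, as in the proof of \cref{lem:primer:pw-iso-to-iso}.

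Once $g$ and the family of terminal $\epsilon$ exist, the final step is to define $\alpha\prn{c,d}$ as the inverse of $\epsilon_{d*}\circ f$. This gives the required family of equivalences $\Hom{f\,c}{d}\Equiv\Hom{c}{g\,d}$, and no further naturality needs to be checked.
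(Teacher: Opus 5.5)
Your skeleton is reasonable: the subtype $\mathcal{R}$ of universal arrows, the observation that $p : \mathcal{R} \to D$ is an embedding, and a reduction through \cref{ax:cubes-separate}. It is the kind of elementary argument the paper's footnote alludes to; the paper itself gives no proof and defers to \citet{gratzer:2025}, who use the twisted arrow modality. However, the step that carries all of the content is missing. To lift a $\GM$-map $\delta : \Int^n \to D$ to $\mathcal{R}$ you need two things:
(a) a cube $\gamma : \Int^n \to C$ with $e : \Prod{s : \Int^n} \Hom{f\,(\gamma\,s)}{\delta\,s}$;
(b) a proof that $e(s)$ is universal at \emph{every} $s : \Int^n$, not only at the $\GM$-vertices.
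Your reduction to ``the $n=0$ case for $f^{\Int^n}$'' does not provide (a), because the hypothesis concerns $f$ at $\GM$-objects of $D$, not $f^{\Int^n}$ at $\GM$-objects of $D^{\Int^n}$. Step (b), which you yourself name as the second half of the obstacle, is never argued. It is not automatic: a family of propositions over $\Int$ can hold at $0$ and $1$ without holding for all $t : \Int$ (e.g.\ $t = 0 \lor t = 1$). Already for $n = 1$ the problem is visible. Take $u : \Hom{c_0}{c_1}$ to be the factorization of $\delta \circ \epsilon_0$ through $\epsilon_1$ and fill the square using that $D$ is Segal. You must still show that $e(t)_* \circ f$ is an equivalence for generic $t$, and that is precisely what makes $g$ act on morphisms.

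Your treatment of the first half is also circular as written. Unfolding homs in $C^{\Int^n}$ pointwise gives $\Prod{s : \Int^n} \Hom{\gamma\,s}{c}$, which involves the non-$\GM$ objects $\gamma\,s$, the very case being proved. It can be rescued because $\vec{1}$ is terminal in $\Int^n$, so this type is equivalent to $\Hom{\gamma\,\vec{1}}{c}$ with $\gamma\,\vec{1}$ a $\GM$-object.

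One uniform repair handles both halves. Take $\GM$-data $\gamma : X \to C$, $\delta : X \to D$ and $e$ over a category $X$. The induced functor $\Sum{x : X}\Sum{c' : C}\Hom{c'}{\gamma\,x} \to \Sum{x : X}\Sum{c' : C}\Hom{f\,c'}{\delta\,x}$ is essentially surjective and fully faithful on $\GM$-objects as soon as $e(x)$ is universal against $\GM$-objects $c'$ at $\GM$-points $x$; full faithfulness uses naturality of $e$. So \cref{lem:primer:ff-ess-surj} makes it an equivalence, i.e.\ $e(x)$ is universal against all $c'$ at all $x$. Now take $X = \Int^m \times \Delta^k$, whose $\GM$-points are vertices by \cref{ax:int-global-points}. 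Combined with \cref{ax:cubes-separate}, this shows that $\mathcal{R}$ is a full subcategory of the comma category and that $p$ is fully faithful. Applying \cref{lem:primer:ff-ess-surj} to $p$ then finishes the proof, with no need to build higher cubes in $C$ by hand.
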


We shall also have use for the various concrete examples of categories constructed by
\citet{gratzer:2024}. Foremost among these is the category of groupoids $\Space$---the
$\infty$-categorical analog of the category of sets. Like our eventual definition of the
category of categories, this is characterized through a universal property.
\begin{definition}
  A family of types $A : X \to \Uni$ is \emph{covariant} if it is right orthogonal to the inclusion
  $\brc{0} \to \Int$.
\end{definition}
More intuitively, covariant families are families of groupoids such that synthetic homomorphisms of the
base lift coherently to functors of the fibers. We shall give more exposition of this idea indirectly
in \cref{sec:cocartesian} when we study their generalization: cocartesian families. Covariant
families are closed under numerous properties, including precomposition, $\Sum{}$-types, \etc{} We
now define $\Space$ as the base of the universal covariant family:
\begin{definition}
  $\Space$ is the unique subtype of the universe such that $\Space \to \Uni$ is covariant, and for
  all $\DeclVar{X}{\GM}{\Uni}$, the canonical map $\Modify[\GM]{X \to \Space} \to
  \Modify[\GM]{\Sum{A : X \to \Uni} \IsCov\prn{A}}$ is an equivalence.
\end{definition}

Consequently, the type of objects of $\Space$, \ie, $\Modify[\GM]{\Space}$, is immediately seen to be
equivalent to $\GM$-covariant families over $\ObjTerm{}$. These, in turn, are equivalent to
$\GM$-groupoids. The main result of \citet{gratzer:2024} extends this characterization to synthetic
morphisms:
\begin{theorem}
  $\Space$ is a category, and there is a canonical equivalence $\Hom[\Space]{A}{B} \Equiv \prn{A \to
  B}$. Moreover, under this equivalence, identities and composition of synthetic morphisms are
  realized by identity functions and ordinary function composition.
\end{theorem}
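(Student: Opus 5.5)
Since the statement is quoted from \citet{gratzer:2024}, the plan reconstructs their argument, organized around the universal property of $\Space$. Write $\Space_\bullet = \Sum{A : \Space} A$ for the universal covariant family. The central idea is to compute the type of synthetic morphisms $\Int \to \Space$ from the universal property at $X = \Int$: covariant families over $\Int$ correspond to pairs of groupoids $A, B$ together with a function $A \to B$. The relevant function is transport along the covariant lift, i.e., the composite $E(0) \to \Sum{t:\Int} E(t)$ followed by the unique covariant extension to $E(1)$. Concretely, for a covariant $E : \Int \to \Uni$, orthogonality to $\brc{0} \to \Int$ says that $\Prod{t:\Int} E\,t \Equiv E\,0$. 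Any family over $\Int$ then determines the map $E\,0 \to E\,1$ obtained by extending and evaluating at $1$. Conversely, a map $f : A \to B$ between groupoids determines a covariant family. This is the ``directed mapping cylinder'', the family over $\Int$ obtained by gluing. It is most naturally built as the pushout $A \times \Int \Pushout{A \times \brc{1}} B$ fibred over $\Int$. Alternatively it can be built as the family $t \mapsto \Sum{b:B}\Sum{a:A} \ldots$ using the right-orthogonality of groupoids. Its covariance can be checked by hand using that $A$ and $B$ are $\Int$-null.

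The first step is therefore a lemma in \TTT{}: for $\DeclVar{A,B}{\GM}{\Space}$, the type of $\GM$-covariant families $E$ over $\Int$ with $E\,0 = A$ and $E\,1 = B$ is equivalent to $\Modify[\GM]{A \to B}$. The second step upgrades this to a non-modal statement. By the universal property with $X = \Int$, we have $\Modify[\GM]{\Hom[\Space]{A}{B}} \Equiv \Modify[\GM]{A \to B}$. To remove the $\GM$, I would repeat the argument with $X = \Int^{n+1}$, or with $X = \Int^n \times \Int$. This compares $\Modify[\GM]{\Int^n \to \Hom[\Space]{A}{B}}$ with $\Modify[\GM]{\Int^n \to (A \to B)}$, where the latter uses the fact that covariant families over $Y \times \Int$ which are constant in $Y$ correspond to families of maps. \cref{ax:cubes-separate} then promotes the comparison map, built once uniformly via $\Con{ua}$-style transport, to a genuine equivalence $\Hom[\Space]{A}{B} \Equiv (A \to B)$. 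I would first define the map $(A \to B) \to \Hom[\Space]{A}{B}$ internally via the cylinder construction, and only afterwards check invertibility on cubes.

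The third step is to show that $\Space$ is a category. Simpliciality and the Segal condition both follow from computing maps $\Delta^2 \to \Space$ and $\Lambda^2_1 \to \Space$ as covariant families. A covariant family over $\Delta^2$ is determined by its restriction to $\brc{0}$ in a way compatible with the cylinder description. Hence both sides are equivalent to composable pairs of functions, and restriction is an equivalence. Here one uses that $\Lambda^2_1 \to \Delta^2$ is left anodyne relative to $\brc{0}$, and \cref{ax:simp-stability} to handle the simplicial part. Rezkness reduces, via the hom-equivalence, to showing that isomorphisms in $\Space$ are equivalences of groupoids, which then follows from univalence of $\Uni$. The identification of identities and composition is read off from the same $\Delta^2$ computation: the composite edge $\lambda i.\,\tau(i,i)$ classifies the composite function.

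I expect the main obstacle to be the de-modalization in the second step. The universal property only speaks about $\GM$-families, so coherently identifying $\Int^n$-families of homs with $\Int^n$-families of functions requires the extension argument for covariant families over $\Int^n \times \Int$ to be natural in $n$. I would attempt this by proving the cylinder correspondence over an arbitrary $\GM$-base $Y$, namely that covariant families over $Y \times \Int$ correspond to maps of families over $Y$, and then specializing to $Y = \Int^n$.
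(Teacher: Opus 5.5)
The paper does not prove this statement; it is quoted from \citet{gratzer:2024}. Its treatment of $\Cat$ in \cref{sec:dua} is the same argument, and your outline follows it: classify covariant families over $Y \times \Int$ by a gluing construction, transfer this through the universal property of $\Space$, remove the modality using \cref{ax:cubes-separate} at $Y = \Int^n$, and get Rezk from univalence. You also see correctly why no $\Modify[\GM]{-}$ survives for $\Space$: every fibrewise map between covariant families is automatically cocartesian, since all arrows of a covariant family are.

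\textbf{Gap 1: the classification lemma is asserted, not proved.} This lemma carries all the content. You build maps in both directions, but nothing shows that a covariant family $E$ over $\Int$ is recovered from its transport map, i.e.\ that $E$ is equivalent to the cylinder on $E\,0 \to E\,1$. The missing idea is the one the paper uses for $\Con{glue}$. Write down a comparison map of families (cylinder $\to E$, or $E \to \Glue$) and show it is a fibrewise equivalence by checking only fibres over $\GM$-points of the base. For covariant families this reduction is easy: by covariance a $\GM$-cube $\Int^n \to \TotalTy{E}$ is a base cube $u$ together with a point of $E\prn{u\prn{\vec{0}}}$, so \cref{ax:cubes-separate} reduces everything to $\GM$-points. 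Then \cref{ax:int-global-points} says these lie over $0$ and $1$ only, where the comparison is an equivalence by construction. That axiom is indispensable: an interval with an intermediate global point would carry covariant families with an arbitrary fibre there. Separately, your pushout cylinder is less convenient than the paper's $\Glue$. $\Glue$ is built from limits, so its fibres are groupoids for free. For $A \Pushout{A \times \prn{t = 1}} \prn{B \times \prn{t = 1}}$, even $\Int$-nullness needs an argument, since nullness is not preserved by pushouts in general.

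\textbf{Gap 2: the $\Delta^2$ step rests on a false claim.} A covariant family over $\Delta^2$ is \emph{not} determined by its restriction to the initial vertex; only its sections are. Likewise, the left-anodyne property of $\Lambda^2_1 \to \Delta^2$ controls sections, not families. Unique extension of covariant \emph{families} along $\Lambda^2_1 \to \Delta^2$ is exactly the Segal condition for $\Space$ that you are trying to prove. You need a separate classification over $Y \times \Delta^2$, applied at $Y = \Int^n$. In the paper this is the iterated glue $B_1$ restricted to $Y \times \Delta^2$, with $\Con{glue}_2$ again checked on the three global vertices.

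\textbf{Gap 3: simpliciality is not established.} It cannot be read off from $\Delta^2$ and $\Lambda^2_1$ computations, nor obtained from \cref{ax:simp-stability} directly. That axiom says precisely that $\eta : \Space \to \Simp \Space$ is \emph{already} an equivalence on $\GM$-simplices, so no information about simplices distinguishes $\Space$ from $\Simp \Space$; simpliciality is a statement about cubes. The argument that works is the paper's argument for $\Cat$, which transfers verbatim and comes after Segal and Rezk:
\begin{enumerate}
  \item By \cref{cor:proofs:simp-cat}, $\Simp \Space$ and $\Simp \Space_\bullet$ are categories.
  \item So one may check on $\GM$-simplices that $\Simp \Space_\bullet \to \Simp \Space$ is covariant, using \cref{ax:simp-stability} to pull the data back to $\Space_\bullet \to \Space$.
  \item Its classifying map $\Simp \Space \to \Space$ is then a retraction of $\eta$, so $\eta$ is an equivalence.
\end{enumerate}
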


Finally, we record a minor but useful result stating that $\Int$,
and some types derived from it, form categories~\citep{gratzer:2024}.
\begin{lemma}
  $\Int$, $\Int^n$, and $\Delta^n$ all form categories.
\end{lemma}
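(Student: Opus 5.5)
We need to show that $\Int$, $\Int^n$ and $\Delta^n$ are each simplicial, Segal and Rezk. We treat $\Int$ first and then obtain the others by closure properties.

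\emph{Closure properties.} Categories are the modal types of an idempotent monad (by the Remark, after \citet{rijke:2020}). Hence they are closed under limits, in particular finite products and equalizers/pullbacks, and under subtypes cut out by identity types between elements of categories. So once $\Int$ is a category, $\Int^n$ is a category because it is a finite product. $\Delta^n$ is the subtype of $\Int^n$ given by $i_k \ge i_{k+1}$, i.e.\ $i_k \land i_{k+1} = i_{k+1}$. This is an equalizer of two maps $\Int^n \to \Int^{n-1}$, and each of the conditions simplicial, Segal and Rezk is an orthogonality (equivalently, a modality) condition stable under limits.

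\emph{$\Int$ is Segal.} Since $\Int$ is a set, a map $\Lambda^2_1 \to \Int$ is a pair $f,g : \Int \to \Int$ with $f\,1 = g\,0$. To extend it, I would define the candidate filler $\tau(i,j) = f(i) \lor g(j)$ on $\Delta^2$, or something like it, and use the lattice structure together with the fact that maps $\Int \to \Int$ are monotone to check it on the horn. Uniqueness is harder. Here I would use that $\Int$ is a set, so the relevant types are propositions, and argue that the restriction $\Int^{\Delta^2} \to \Int^{\Lambda^2_1}$ is injective and surjective. Surjectivity does not follow from lattice algebra alone, because arbitrary functions $\Int \to \Int$ need not be lattice polynomials. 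This is the \textbf{main obstacle}, and I expect the axioms on $\GM$ to be needed. Concretely, I would use \cref{ax:cubes-separate} to reduce the claim to $\GM$-maps $\Int^m \to \Int^{\Delta^2}$ versus $\Int^m \to \Int^{\Lambda^2_1}$. There one can hope to characterize global maps $\Int^k \to \Int$ as the monotone Boolean functions determined on the $2^k$ vertices, via \cref{ax:int-global-points} in the form $\Modify[\GM]{\Int}\Equiv\Modify[\GM]{\Bool}$ and its analogue for cubes. With that characterization the horn-filling becomes a finite combinatorial check on posets: the poset $\Lambda^2_1\times\square^m$ is a retract/joins-closed sub-poset of $\Delta^2\times\square^m$, and monotone maps into $\{0<1\}$ extend uniquely. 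If that characterization is not available from the stated axioms, the fallback is to cite \citet{gratzer:2024}, where it is established in the model via \cref{thm:primer:soundness}.

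\emph{$\Int$ is Rezk and simplicial.} For Rezk, a map $\IntIso \to \Int$ picks out an arrow $f$ with two-sided inverses. From $g \circ f = \ArrId{}$ and monotonicity we get $f\,0 \le f\,1 \le g\,1 = f\,0$. So $f$ is constant (using antisymmetry in a set), and the space of such data is contractible over its value. This shows $\Int \to \Int^{\IntIso}$ is an equivalence. For simplicial, we must show $\Int \to (i\le j \lor j \le i \to \Int)$ is an equivalence. Since $\Int$ is a set, it suffices that $\Int$ is a sheaf for the cover of $i\le j\lor j\le i$ by the two propositions. Any map out of the propositional truncation into a set is determined by compatible values on each disjunct, which agree on the overlap. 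Thus only the elimination of the truncation into a set is needed, which is standard.
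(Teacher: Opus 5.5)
Your reduction of $\Int^n$ and $\Delta^n$ to $\Int$ is fine, since categories are closed under products and pullbacks. The argument for $\Int$ itself, however, has two genuine gaps.

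\textbf{Missing ingredient: duality.} The paper gives no proof of this lemma and simply cites \citet{gratzer:2024}. What makes the lemma provable in \TTT{} is the duality axiom listed in \cref{sec:axioms}, which you never use. Applied to $\Int[x_1,\dots,x_n]$, it says every function $\Int^n \to \Int$ \emph{is} a lattice polynomial with coefficients in $\Int$. So your claim that functions $\Int \to \Int$ need not be polynomials is false in \TTT{}. Duality is also what supplies the monotonicity you use without justification in both the Segal and Rezk steps.

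Your proposed substitute does not work:
\begin{itemize}
  \item \cref{ax:int-global-points} only describes $\Modify[\GM]{\Int}$, not $\GM$-maps $\Int^k \to \Int$.
  \item The ``analogue for cubes'' you would need is not among the axioms; it is essentially duality.
  \item Truth in the cubical-spaces model (\cref{thm:primer:soundness}) gives no derivation inside the theory.
\end{itemize}

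With duality, the Segal check is finite lattice algebra. We have $\Int^{\Int} \cong \{(a,b) \mid a \le b\}$ via endpoints, so a horn is a chain $a \le b \le c$ of vertex values. Every element of $\Int^{\Delta^2} \cong \Int[x,y]/(y \le x)$ has the form $a \lor (b \land x) \lor (c \land y)$, hence is determined by its three vertex values. The filler is $\tau(i,j) = a \lor (b \land i) \lor (c \land j)$. Your candidate $f(i) \lor g(j)$ is wrong: at $j = 0$ it equals $f(i) \lor f(1) = f(1)$.

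\textbf{Simpliciality.} Your argument here is a non sequitur. Write $P = (i \le j)$ and $Q = (j \le i)$.
\begin{itemize}
  \item That maps $P \lor Q \to \Int$ are compatible pairs of maps out of $P$ and $Q$ holds for any target. It is the easy half.
  \item Simpliciality instead asks that the constant map $\Int \to (P \lor Q \to \Int)$ be an equivalence. That is, an element of $\Int$ must be recoverable from a partial element defined only under a hypothesis that need not hold.
  \item Your reasoning uses only that $\Int$ is a set. It would therefore equally make the proposition $P \lor Q$ null with respect to itself, which forces $i \le j \lor j \le i$. That is exactly the total order \TTT{} declines to assume.
\end{itemize}

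Using duality once more, $(P \to \Int)$, $(Q \to \Int)$ and $(P \land Q \to \Int)$ are the quotient lattices $\Int/(i \le j)$, $\Int/(j \le i)$ and $\Int/(i = j)$. What has to be shown is the distributive-lattice fact that $\Int$ is the fibre product of the first two over the third. Concretely, compatible classes $x$ on $i \le j$ and $y$ on $j \le i$ glue uniquely to $(x \land y) \lor (x \land j) \lor (y \land i)$.

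Granting duality, your Rezk argument then goes through as written.
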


\subsection{Technical results on iso-inner fibrations}

Before moving on to the main focus of this article, we record a few technical results about
iso-inner fibrations and the $\Simp$ modality.

\begin{definition}
  We write $\Spine^n$ for the iterated pushout $\Int \Pushout{\ObjTerm{}} \dots \Pushout{\ObjTerm{}}
  \Int$ which glues together $n$ copies of $\Int$ attaching $0$ to $1$ and $1$ to $0$.
\end{definition}

\begin{lemma}
  \label{lem:proofs:spine-from-segal} If $\DeclVar{f}{}{X \to Y}$ is inner it is orthogonal to
  $\Spine^n \to \Delta^n$ for all $n \ge 2$.
\end{lemma}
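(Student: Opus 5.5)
The plan is to show that $\Spine^n \to \Delta^n$ lies in the saturated class generated by $\Lambda^2_1 \to \Delta^2$, that is, the smallest class of maps containing the inner horn inclusion and closed under pushouts, composition, retracts and products with arbitrary objects (Leibniz/pushout-products). Right orthogonality of a fixed map $f$ to a class of maps is closed under all of these operations. Concretely, if $f$ is right orthogonal to $i : I \to J$, then it is right orthogonal to any pushout of $i$, since mapping out of a pushout turns it into a pullback of mapping types. It is also right orthogonal to composites of such maps, since pasting of pullback squares handles this. Finally, it is right orthogonal to $i \times K \cup J' \times I$ style pushout-products. This last point holds because orthogonality to $I\to J$ for the family $Y \to \Uni$ is stable under exponentiating by any type, as the relative condition quantifies over all maps into $Y$. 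With this, the lemma reduces to a combinatorial decomposition of $\Spine^n \to \Delta^n$.

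I will proceed by induction on $n$. For $n = 2$, the map $\Spine^2 \to \Delta^2$ is literally $\Lambda^2_1 \to \Delta^2$, since $\Lambda^2_1$ is the union of the two edges glued at the middle vertex. This identification is a pushout of two copies of $\Int$ over a point and is verified directly from the description $i = 1 \lor j = 0$. For the inductive step, I factor
\[
  \Spine^{n} \to \Delta^{n-1} \Pushout{\ObjTerm{}} \Int \to \Delta^n .
\]
The first map is a pushout of $\Spine^{n-1} \to \Delta^{n-1}$ along the inclusion $\Spine^{n-1} \to \Spine^n$, so it is handled by the induction hypothesis. The second map, which includes the union of the face $\Delta^{n-1}$ and the last edge into $\Delta^n$, must be shown to be generated by inner horns.

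For the second map, I will use the standard trick rather than filling one horn at a time. I exhibit $\Delta^{n-1}\Pushout{\ObjTerm{}}\Int \to \Delta^n$ as a retract of a pushout-product of $\Lambda^2_1 \to \Delta^2$ with the boundary-type inclusion built from the lattice structure. Concretely, I map $\Delta^n$ into $\Delta^{n-1} \times \Delta^2$ and retract back using $\land,\lor$ on $\Int$. This works because in \TTT{} the simplices are defined as subtypes of cubes $\Int^n$ cut out by inequalities, so the retraction maps can be written as explicit lattice polynomials. Inner families are orthogonal to the pushout-product $(\Lambda^2_1\to\Delta^2)\,\hat\times\,(A \to B)$ by the exponentiation stability above, and retracts then finish the step. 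An alternative, which avoids retracts, is to work directly with the Segal condition as in \citet{riehl:2017}. There one shows that $\Delta^n$ is obtained from $\Spine^n$ by iterated inner-horn fillings of the triangles $(0,k,k+1)$, then uses uniqueness of composites to fill higher simplices, again as a retract argument.

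The main obstacle is the fact that in \TTT{} $\Int$ is not totally ordered. Pushouts such as $\Lambda^2_1$, defined via the disjunction $i = 1 \lor j = 0$, must be checked to actually coincide with the corresponding homotopy pushouts of intervals. Likewise, the retraction formulas must respect the strict subtypes. I expect the identification $\Lambda^2_1 \simeq \Spine^2$, together with its higher analogues, to require the cofibration and disjunction axioms of \cref{sec:axioms}. That disjunctions of such propositions are pushouts (a union of subshapes) is what I would invoke first. If it fails outside simplicial types, I would instead exploit that $\Delta^n$ and $\Spine^n$ map to simplicial types, using \cref{ax:simp-stability} to reduce to the simplicial setting.
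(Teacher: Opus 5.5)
Your proposal follows the paper's proof essentially exactly: the same induction, the same factorization $\Spine^{n} \to \Delta^{n-1}\Pushout{\ObjTerm{}}\Int \to \Delta^{n}$ with the first map a pushout of the inductive case, and the second map exhibited via lattice polynomials as a retract of a product of $\Lambda^2_1 \to \Delta^2$ with a cube (in your indexing the paper's version is $\Int^{n-2}\times\Lambda^2_1\to\Int^{n-2}\times\Delta^2$). Two small corrections: the retract must be of the plain product with an object, not a pushout-product with a genuine boundary inclusion, because the retraction would not send the extra piece $\Delta^2\times\partial(-)$ into $\Delta^{n-1}\Pushout{\ObjTerm{}}\Int$; and $\Lambda^2_1\simeq\Spine^2$ needs no extra axioms, since a disjunction of propositions is their pushout over their conjunction, which is why the paper treats $n=2$ as holding by definition.
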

\begin{proof}
  The $n = 2$ case is by definition, so we proceed by induction. By induction hypothesis, $\Spine^{n
  + 1} \to \Delta^n \Pushout{\ObjTerm{}} \Int$ is orthogonal to all inner maps(left maps are closed
  under pushouts). It suffices to show that $\Delta^n \Pushout{\ObjTerm{}} \Int \to \Delta^{n + 1}$
  is orthogonal to all inner maps. Unfolding these conditionals, we must show the following:
  \[ \Compr{\prn{v,i}}{v\prn{n} = 1 \lor i = 0} \to \Compr{\prn{v,i}}{v\prn{n} \ge i} \] This, in
  turn, is a retract of $\Int^{n - 1} \times \Lambda^2_1 \to \Int^{n- 1} \times \Delta^2$.
\end{proof}

The converse of this lemma also holds if we further assume that the relevant types are simplicial.
It is an easy corollary of the following result:
\begin{lemma}\label{lem:proofs:segal-from-spine}
  If $\DeclVar{X,Y}{\GM}{\Uni[\Simp]}$ and $\DeclVar{f}{\GM}{X \to Y}$ is
  $\GM$-orthogonal to $\Spine^n \to \Delta^n$ so too is $f^\Int$
\end{lemma}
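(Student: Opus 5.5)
The plan is to transpose the problem: $f^\Int$ is $\GM$-orthogonal to $\Spine^n \to \Delta^n$ exactly when $f$ is $\GM$-orthogonal to the pushout-product
\[
  \Int \times \Spine^n \to \Int \times \Delta^n .
\]
This follows by currying $\Modify[\GM]{\Int \times K \to X} \Equiv \Modify[\GM]{K \to X^\Int}$ and the analogous identification for $Y$ and for the fiber products. So it suffices to show that this map lies in the class of maps (``left class'') to which $f$ is $\GM$-orthogonal. Call a $\GM$-map $i : A \to B$ \emph{$f$-anodyne} if $f$ is $\GM$-orthogonal to it.

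The $f$-anodyne maps are closed under the usual operations:
\begin{itemize}
  \item pushouts, composition and retracts;
  \item right cancellation (two-out-of-three for a composite whose second factor is $f$-anodyne).
\end{itemize}
They contain $\Spine^m \to \Delta^m$ for every $m$ by hypothesis. The first step is to enlarge this collection to all inner horns $\Lambda^m_k \to \Delta^m$ with $0<k<m$. For this, factor
\[
  \Spine^m \to \Lambda^m_k \to \Delta^m .
\]
The first map is built by iterated pushouts along spine inclusions $\Spine^l \to \Delta^l$ for $l<m$, filling each face of the horn, since each such face contains its own spine. Cancellation then shows $\Lambda^m_k \to \Delta^m$ is $f$-anodyne. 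This is the usual inductive argument of Joyal and Rezk, carried out on the level of the shapes.

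The second step uses simpliciality of $X$ and $Y$. Since both are $\Simp$-modal, a map $K \to X$ factors uniquely through $\Simp K$. Consequently $f$ is $\GM$-orthogonal to $i$ iff it is $\GM$-orthogonal to $\Simp i$, and any map that becomes an equivalence after applying $\Simp$ is automatically $f$-anodyne. Now $\Int \times \Delta^n$ is the subtype of $\Int^{n+1}$ cut out by $i_0 \ge \dots \ge i_{n-1}$, with one extra free coordinate $t$. The union of the loci where $t$ is comparable to each $i_k$ is exactly the standard triangulation
\[
  T = \Delta^{n+1} \cup_{\Delta^n} \dots \cup_{\Delta^n} \Delta^{n+1}
\]
into $n+1$ simplices. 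The inclusion $T \to \Int \times \Delta^n$ is inverted by $\Simp$, because it is built from the maps $\prn{i \le j \lor j \le i} \to \ObjTerm{}$ that $\Simp$ localizes at. The same holds for the restriction of the triangulation to $\Int \times \Spine^n$. To verify these equivalences on $\GM$-points I would use \cref{ax:simp-stability} together with \cref{ax:cubes-separate}.

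Hence it suffices to show that the comparison map
\[
  \prn{\Int \times \Spine^n} \cup T_{\le} \to T
\]
is $f$-anodyne, where $T_{\le}$ denotes the triangulated lower boundary $\brc{0} \times \Delta^n \cup \brc{1}\times\Delta^n$ as needed. This is the classical combinatorial fact that the pushout-product of $\partial\Delta^1 \to \Delta^1$ (or $\brc{\epsilon}\to\Delta^1$) with an inner anodyne map is inner anodyne. Here the relevant map is the product of $\Int$ with a spine inclusion. I would prove it by adding the $n+1$ simplices of $T$ one at a time in the standard shuffle order. At each stage the new simplex meets what has already been glued along an inner horn, or along a union of faces reachable from an inner horn by spine fillings, and this is where the first step is used.

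The main obstacle is this last step: keeping track of the shuffle decomposition of $\Int \times \Delta^n$ and checking that each attaching map is genuinely inner. Because we start from $\Int \times \Spine^n$ rather than $\Int \times \Lambda^n_k$, some stages require filling the intermediate faces $\Int \times \Delta^{\brc{k,k+1}}$ first. I expect to handle this by first reducing $\Int \times \Spine^n \to \Int \times \Delta^n$ to the case $n=2$ (spine inclusions are composites of pushouts of $\Lambda^2_1 \to \Delta^2$ modulo lower-dimensional spines). In that case the triangulation of $\Int \times \Delta^2$ has only three $3$-simplices, and the attaching maps can be checked by hand to be $\Lambda^3_1$ or $\Lambda^3_2$ horns, or pushouts of $\Lambda^2_1$.
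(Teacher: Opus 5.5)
Your opening reduction agrees with the paper's. You transpose to $\GM$-orthogonality of $f$ against $\Int\times\Spine^n\to\Int\times\Delta^n$; note this is a plain product with $\Int$, not a pushout-product. You then use simpliciality of $X,Y$ to replace $\Int\times\Delta^n$ by the union of the $n+1$ simplices $\Phi_k=\{v_k\ge i\ge v_{k+1}\}$.

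The gap is the combinatorial endgame. You leave it unverified, and the concrete claims you make there fail.

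\begin{itemize}
\item Starting from $\Int\times\Spine^n$, no top simplex is attached along an inner horn. Already for $n=2$, even after adding $\partial\Int\times\Delta^2$, each of the three $3$-simplices of $\Int\times\Delta^2$ is missing two faces.
\item Your remedy, reducing to $n=2$, does not work. Pushouts of $\Lambda^2_1\to\Delta^2$ only attach $2$-cells. The natural inductive step $\Int\times(\Delta^{n-1}\cup_{\ObjTerm{}}\Int)\to\Int\times\Delta^n$ is, by the argument of \cref{lem:proofs:spine-from-segal}, a retract of $\Int^{n-1}\times\Lambda^2_1\to\Int^{n-1}\times\Delta^2$. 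So it needs $f^{\Int^{n-1}}$ to be orthogonal to $\Lambda^2_1\to\Delta^2$. Getting that by iterating the case $n=2$ requires $f^\Int$ to be orthogonal to $\Spine^3\to\Delta^3$, because the case $n=2$ itself uses the $3$-dimensional hypothesis. That is the case $n=3$, so the reduction is circular.
\item Appealing instead to Joyal's pushout-product theorem could be made to work. It needs a transfer from inner anodyne maps of simplicial sets to shapes in triangulated type theory, which you only gesture at.
\item Your first step is also misjustified. For $0<i<m$ the face $d_i\Delta^m$ does not have its spine inside $\Spine^m$, since that spine uses the edge $\{i-1,i+1\}$.
\end{itemize}

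The missing observation is that no inner horns are needed; you only have to see how each $\Phi_k$ is attached. Starting from $\Int\times\Spine^n$, glue the $\Phi_k$ in order. Since $\Phi_j\cap\Phi_k\subseteq\Phi_{k-1}\cap\Phi_k$ for $j<k$, the $k$-th simplex meets what is already there in $(\Int\times\Spine^n)\cap\Phi_k$, together with the facet $\Phi_{k-1}\cap\Phi_k$ when $k\ge 1$.

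Label the vertices of $\Phi_k\cong\Delta^{n+1}$ as $w_0,\dots,w_{n+1}=(\bar 0,0),\dots,(\bar k,0),(\bar k,1),\dots,(\bar n,1)$, where $\bar j$ is the $j$-th vertex of $\Delta^n$. Then:
\begin{itemize}
\item $(\Int\times\Spine^n)\cap\Phi_k$ is the spine of $\Phi_k$ plus whichever of the $2$-simplices $w_{k-1}w_kw_{k+1}$ and $w_kw_{k+1}w_{k+2}$ exist. Each of these is attached by a pushout of $\Spine^2\to\Delta^2$.
\item The facet $\Phi_{k-1}\cap\Phi_k$, which omits $w_k$, meets this exactly in its own spine. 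So it is attached by a pushout of $\Spine^n\to\Delta^n$.
\end{itemize}
Thus the attaching locus is reached from the spine of $\Phi_k$ by left maps. Right cancellation against $\Spine^{n+1}\to\Delta^{n+1}$ then makes each attachment left.

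This is, in substance, the paper's proof: decompose along the $\Phi_k$, then use pushouts and 2-for-3. It needs neither inner horns, nor $\partial\Int\times\Delta^n$, nor any reduction in $n$.
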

\begin{proof}
  We must show that $\Spine^n \times \Int \to \Delta^n \times \Int$ is $\GM$-orthogonal
  $f$. To this end, we note the following identification:
  \[
    \Delta^n \times \Int 
    =
    \Compr{
      \prn{v,i} : \Delta^n \times \Int
    }{
      \exists k \in \brc{0, \dots, n+1}.\,v\prn{k} \ge i \ge v\prn{k + 1}
    }
  \]
  Here, by convention, we treat $v\prn{0} = 1$ and $v\prn{n+1} = v\prn{n+2} = 0$. In what follows,
  we write $\Phi_k$ for the condition $v\prn{k} \ge i \ge v\prn{k + 1}$.

  Consequently, to show the desired lifting it suffices to show that there is a unique such lift for
  $\Compr{\prn{v,i} : \Delta^n \times \Int}{\Phi_{k_0} \times \dots \times \Phi_{k_l}}$. A moment's
  thought reveals that each such intersection is a sub-simplex of $\Delta^n \times \Int$. In
  particular, each $\Compr{\prn{v,i}}{\Phi_k}$ is $\Delta^{n+1}$ and each higher intersection is a
  smaller simplex. Moreover, its intersection with $\Spine^n \times \Int$ is either exactly the spine of this smaller
  simplex (in the cases of $\Phi_0$ and $\Phi_n$ or higher intersections) or
  $\Int \Pushout{\ObjTerm{}} \dots \Pushout{\ObjTerm{}} \Delta^2 \Pushout{\ObjTerm{}} \dots$.
  In either case, the unique lifting exists by assumption on $f$ (in the latter case, by 2-for-3 and
  the closure of left classes under pushouts, as one may see by observing the following
  decomposition:
  $\Spine^{n + 1} 
  \to
  \prn{
    \Int \Pushout{\ObjTerm{}} \dots
    \Pushout{\ObjTerm{}} \Delta^2 \Pushout{\ObjTerm{}}
    \dots \Pushout{\ObjTerm{}} \Int
  }
  \to
  \Delta^{n + 1}$).
\end{proof}

\begin{corollary}
  \label{cor:proofs:spines}
  If $\DeclVar{X}{\GM}{\Uni}$ is simplicial, it suffices to show that it is $\GM$-orthogonal to
  $\Spine^n \to \Delta^n$ to prove that it is Segal.
\end{corollary}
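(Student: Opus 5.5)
We have to show that a $\GM$-simplicial type $X$ that is $\GM$-orthogonal to every spine inclusion $\Spine^n \to \Delta^n$ is in fact Segal. The obstacle is that Segal-ness asks for \emph{internal} orthogonality to $\Lambda^2_1 \to \Delta^2$, i.e.\ that the map $X^{\Delta^2} \to X^{\Lambda^2_1}$ be an equivalence as a map of types. Our hypothesis only controls its $\GM$-points. So the plan is to turn the $\GM$-level hypothesis into an internal statement using \cref{ax:cubes-separate}.

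\textbf{Step 1 (reduce to a $\GM$-statement).} Note that $\Spine^2 = \Lambda^2_1$. Write $r : X^{\Delta^2} \to X^{\Lambda^2_1}$ for the restriction map. Since $X$ is $\GM$-annotated, $r$ is a $\GM$-map. By \cref{ax:cubes-separate}, it suffices to show that for every $n$ the induced map
\[
\Modify[\GM]{\Int^n \to X^{\Delta^2}} \to \Modify[\GM]{\Int^n \to X^{\Lambda^2_1}}
\]
is an equivalence. Currying, this says that the map $X^{\Int^n} \to \ObjTerm{}$ is $\GM$-orthogonal to $\Spine^2 \to \Delta^2$. Equivalently, $X$ is $\GM$-orthogonal to $\Int^n \times \Spine^2 \to \Int^n \times \Delta^2$.

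\textbf{Step 2 (induct on $n$).} We view $X \to \ObjTerm{}$ as a $\GM$-map between simplicial types; $\ObjTerm{}$ is trivially simplicial. We then apply \cref{lem:proofs:segal-from-spine} repeatedly. It says that $\GM$-orthogonality to \emph{all} spine inclusions $\Spine^m \to \Delta^m$ passes from $f$ to $f^\Int$. Two things need checking before we iterate:
\begin{itemize}
\item $X^{\Int}$ is again simplicial. This holds because simplicial types are the modal types of the lex idempotent monad $\Simp$, and modal types are closed under exponentials.
\item $X^\Int$ is still $\GM$-annotated, which is clear.
\end{itemize}
Induction on $n$ then gives that $X^{\Int^n} \to \ObjTerm{}$ is $\GM$-orthogonal to every $\Spine^m \to \Delta^m$. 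In particular this holds for $m = 2$, which is exactly what Step 1 requires.

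\textbf{Step 3 (conclude).} Step 2 shows that $(r_*)^\dagger$ is an equivalence at every cube $\Int^n$. By \cref{ax:cubes-separate}, $r$ is then an equivalence, so $X$ is Segal.

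The main point of care is the use of \cref{lem:proofs:segal-from-spine}. It must be applied to all $m$ simultaneously, since proving the $\Int$-shifted $m$-spine condition needs the $(m+1)$-spine condition. This is why the hypothesis has to be taken for every $n$ and not just $n = 2$. Beyond that, we only need the routine bookkeeping identifying $\Modify[\GM]{\Int^n \to X^{J}}$ with $\Modify[\GM]{J \to X^{\Int^n}}$.
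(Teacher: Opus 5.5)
Your proposal is correct and is essentially the argument the paper intends. The paper states the corollary as an immediate consequence of \cref{lem:proofs:segal-from-spine}: use \cref{ax:cubes-separate} to reduce the Segal condition to $\GM$-orthogonality of each $X^{\Int^n}$ against $\Lambda^2_1 \to \Delta^2$, then iterate the lemma, which is possible because $X^{\Int^n}$ stays simplicial. Your remark that the hypothesis must be used for all spines at once is also right, since the lemma's proof uses $\Spine^{n+1} \to \Delta^{n+1}$ to handle $\Spine^n \times \Int$.
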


\begin{lemma}
  \label{lem:proofs:simp-preserves-segal}
  If $\DeclVar{X}{\GM}{\Uni}$ is Segal so too is $\Simp X$.
\end{lemma}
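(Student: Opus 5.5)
Since $\Simp X$ is simplicial by construction, \cref{cor:proofs:spines} reduces the claim to showing that $\Simp X$ is $\GM$-orthogonal to $\Spine^n \to \Delta^n$ for every $n \ge 2$. Concretely, the task is to prove that the restriction map
\[
  \Modify[\GM]{\Delta^n \to \Simp X} \to \Modify[\GM]{\Spine^n \to \Simp X}
\]
is an equivalence, stated as the global ($\GM$) version of orthogonality. I also plan to check that this global form is exactly the hypothesis that \cref{lem:proofs:segal-from-spine} consumes, so that invoking the corollary is legitimate.

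The comparison with $X$ goes through the square induced by $\eta : X \to \Simp X$, whose top row is the restriction map for $X$ and whose bottom row is the one above:
\[
  \begin{array}{ccc}
    \Modify[\GM]{\Delta^n \to X} & \longrightarrow & \Modify[\GM]{\Spine^n \to X}
    \\
    \downarrow & & \downarrow
    \\
    \Modify[\GM]{\Delta^n \to \Simp X} & \longrightarrow & \Modify[\GM]{\Spine^n \to \Simp X}
  \end{array}
\]
The vertical maps are postcomposition with $\eta$. The top map is an equivalence because $X$ is Segal: by \cref{lem:proofs:spine-from-segal}, applied to $X \to \ObjTerm{}$, the type $X$ is orthogonal to $\Spine^n \to \Delta^n$, and this remains true after applying $\Modify[\GM]{-}$. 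The left vertical map is an equivalence by \cref{ax:simp-stability}. By 2-out-of-3, the claim therefore reduces to showing that the right vertical map is an equivalence.

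The right vertical map is the main step. Since $\Spine^n$ is an iterated pushout of copies of $\Int$ glued along points, the type $\Spine^n \to A$ is an iterated pullback of copies of $\Int \to A$ over copies of $A$. By \cref{prop:primer:modalities}, $\Modify[\GM]{-}$ commutes with pullbacks, so the map decomposes into the maps
\[
  \Modify[\GM]{\Int \to X} \to \Modify[\GM]{\Int \to \Simp X}
  \qquad\text{and}\qquad
  \Modify[\GM]{X} \to \Modify[\GM]{\Simp X}.
\]
These are exactly the $n=1$ and $n=0$ instances of \cref{ax:simp-stability}, since $\Delta^1 = \Int$ and $\Delta^0 = \ObjTerm{}$. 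A pullback of equivalences is an equivalence, so the right vertical map is one.

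The one place I expect care to be needed is the naturality bookkeeping. I must verify that commuting $\Modify[\GM]{-}$ past the pullback, via \cref{ax:crisp-ind}, is compatible with postcomposition by $\eta$, so that the decomposed map really is the pullback of the stability equivalences. I must also check that orthogonality of $X$ to $\Spine^n \to \Delta^n$ passes through $\Modify[\GM]{-}$. Both should follow from functoriality of the modality.
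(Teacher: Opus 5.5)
Your proposal is correct and is essentially the paper's proof: reduce via \cref{cor:proofs:spines} to $\GM$-orthogonality, identify $\Modify[\GM]{\Spine^n \to \Simp X}$ with $\Modify[\GM]{\Spine^n \to X}$ by commuting $\Modify[\GM]{-}$ past the iterated pullback and applying \cref{ax:simp-stability} at $n = 0,1$ (and at $n$ itself for the $\Delta^n$ side), then conclude from \cref{lem:proofs:spine-from-segal} applied to $X \to \ObjTerm{}$. Your explicit square with 2-out-of-3 just spells out the paper's phrase ``we are reduced to the same question for $X \to 1$''.
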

\begin{proof}
  To prove this, we argue that $\Simp X$ is $\GM$-orthogonal\footnote{Meaning, orthogonal when
  we restrict our attention to $\GM$-annotated maps} to $\Spine^n \to \Delta^n$.
  We note that this follows immediately from simplicial stability (\cref{ax:simp-stability}) along with
  the fact that $\Modify[\GM]{-}$ commutes with limits:
  \begin{align*}
    & \Modify[\GM]{\Spine^n \to \Simp X}
    \\
    &\Equiv \Modify[\GM]{\Int \to \Simp X}
    \times_{\Modify[\GM]{\Simp X}} \Modify[\GM]{\Int \to \Simp X}
    \times_{\Modify[\GM]{\Simp X}} \dots
    \\
    &\Equiv \Modify[\GM]{\Int \to X} \times_{\Modify[\GM]{X}} \Modify[\GM]{\Int \to X} \times_{\Modify[\GM]{X}} \dots
    \\
    &\Equiv \Modify[\GM]{\Spine^n \to X}
  \end{align*}
  Consequently, we are reduced to the same question for $X \to 1$ which follows from
  \cref{lem:proofs:spine-from-segal}.
\end{proof}

\begin{lemma}
  \label{lem:proofs:simp-preserves-rezk}
  If $\DeclVar{X}{\GM}{\Uni}$ is Segal and Rezk then $\Simp X$ is Rezk.
\end{lemma}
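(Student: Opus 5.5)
We prove that $\Simp X$ is Rezk by the same strategy as \cref{lem:proofs:simp-preserves-segal}: reduce the Rezk condition to a statement about $\GM$-maps out of simplices, where simplicial stability (\cref{ax:simp-stability}) applies.

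\textbf{Step 1: reduce to $\GM$-points.} Since $\Simp X$ is $\GM$-annotated, \cref{ax:cubes-separate} applied to $\Con{const} : \Simp X \to \prn{\Simp X}^{\IntIso}$ shows it suffices to prove that, for every $n$, the induced map
\[
  \Modify[\GM]{\Int^n \to \Simp X} \to \Modify[\GM]{\Int^n \times \IntIso \to \Simp X}
\]
is an equivalence. The Rezk condition is an orthogonality condition against $\IntIso \to \ObjTerm{}$, and $\prn{\Simp X}^{\Int^n}$ is again Segal. So by the same bootstrapping as in \cref{lem:proofs:segal-from-spine}, it is cleaner to prove directly that $\Simp X$ is $\GM$-orthogonal to the maps $\Delta^n \times \IntIso \to \Delta^n$ for all $n$. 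The cubes $\Int^n$ are colimits of their simplices $\Compr{v}{\Phi_{k}}$ via the decomposition used in that lemma, and $\Simp X$ is simplicial, so these suffice.

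\textbf{Step 2: rewrite $\IntIso$ via simplices.} $\IntIso = \Delta^2 \Pushout{\Lambda^2_2} \Int \Pushout{\Lambda^2_0} \Delta^2$ is a finite colimit of simplices. Likewise $\Delta^n \times \IntIso$ is a finite colimit of products $\Delta^n \times \Delta^m$, and each of these is a union of $(n+m)$-simplices. Because $\Modify[\GM]{-}$ commutes with pullbacks, \cref{prop:primer:modalities} turns $\Modify[\GM]{\Delta^n \times \IntIso \to \Simp X}$ into an iterated fibre product of spaces $\Modify[\GM]{\Delta^k \to \Simp X}$. By \cref{ax:simp-stability}, each factor agrees with $\Modify[\GM]{\Delta^k \to X}$, compatibly with the face maps, exactly as in the chain of equivalences in \cref{lem:proofs:simp-preserves-segal}. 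Hence
\[
  \Modify[\GM]{\Delta^n \times \IntIso \to \Simp X} \Equiv \Modify[\GM]{\Delta^n \times \IntIso \to X},
\]
and similarly for $\Delta^n$ itself. Under these identifications the map induced by $\IntIso \to \ObjTerm{}$ becomes the corresponding map for $X$, which is an equivalence because $X$ is Rezk.

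\textbf{Main obstacle.} The difficulty is that $\Delta^n \times \Int$ and $\Delta^n\times\IntIso$ are not simplices. We only get them as gluings of simplices \emph{after} simplicialization, since the decomposition of $\Delta^n \times \Int$ into the pieces $\Phi_k$ uses the linear-order disjunction, which holds only in simplicial types. To handle this, I would exploit that the target $\Simp X$ is simplicial: maps into it out of a union $\bigcup_k \Compr{}{\Phi_k}$ are determined by compatible maps out of the pieces, just as in \cref{lem:proofs:segal-from-spine}. This reduces the problem to pieces that are honest simplices, where stability applies.

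The Segal hypothesis on $X$ enters at exactly this point. Different simplicial decompositions of the shape must give the same answer, and compatibility on the overlaps is guaranteed by $X$ (and, by \cref{lem:proofs:simp-preserves-segal}, $\Simp X$) being Segal, via spine reductions from \cref{lem:proofs:spine-from-segal}. Finally, we should check that $\IntIso$ may be replaced by an equivalent simplicial-style model, namely the free-living isomorphism built from the glued simplices above. I expect this bookkeeping, rather than any conceptual step, to be the bulk of the work.
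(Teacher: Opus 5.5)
\textbf{There is a genuine gap in Step 2.} You claim an equivalence $\Modify[\GM]{\Delta^n \times \IntIso \to \Simp X} \Equiv \Modify[\GM]{\Delta^n \times \IntIso \to X}$, compatible with restriction. Since $X$ is Rezk, this already amounts to the lemma. Your justification for it only works when $n = 0$.

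The chain of equivalences in \cref{lem:proofs:simp-preserves-segal} is valid because $\Spine^n$ is an honest pushout of intervals, so it computes maps into \emph{any} type. By contrast, $\Delta^n \times \Delta^m$ is a union of $(n+m)$-simplices only as seen by simplicial types, because the cover by the pieces $\Phi_k$ uses $i \le j \lor j \le i$. So when you decompose and apply \cref{ax:simp-stability} piece by piece, you identify $\Modify[\GM]{\Delta^n \times \IntIso \to \Simp X}$ with $\Modify[\GM]{T \to X}$. Here $T$ is the \emph{formal} gluing of the simplices of the triangulated prism. Since $X$ is not simplicial, $\Modify[\GM]{T \to X}$ need not agree with $\Modify[\GM]{\Delta^n \times \IntIso \to X}$, and the Rezk condition for $X$ does not directly apply to $T$.

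Your ``main obstacle'' paragraph notices half of this. However, the proposed fix (use simpliciality of $\Simp X$, then stability on the pieces) stops exactly at $\Modify[\GM]{T \to X}$. What is missing is a proof that the map $\Modify[\GM]{\Delta^n \to X} \to \Modify[\GM]{T \to X}$ induced by $T \to \Delta^n$ is an equivalence. That is where the Segal and Rezk hypotheses on $X$ must do real work. The role you assign to Segal-ness, guaranteeing compatibility on overlaps, is not a real issue: compatibility is automatic for maps out of a colimit.

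\textbf{How the paper supplies the missing step.} It first uses that $\Simp X$ is already Segal (\cref{lem:proofs:simp-preserves-segal}) to reduce orthogonality against $\Delta^n \times \IntIso \to \Delta^n$ to $n \in \{0,1\}$, which keeps the combinatorics finite.
\begin{itemize}
\item For $n = 0$, $\IntIso$ is a genuine pushout of simplices, so your argument goes through verbatim.
\item For $n = 1$, stability together with the $n = 0$ case presents the data as a square in $X$. This square is made of two $2$-simplices glued along the diagonal, with invertible vertical edges.
\end{itemize}
The paper then contracts the $n = 1$ data in three moves:
\begin{enumerate}
\item Segal-ness of $X$ makes the upper $2$-simplex redundant, since it is determined by its inner horn.
\item The $n = 0$ case (iso-induction in $X$) lets one take the relevant invertible edge to be an identity, after which that portion of the data is contractible.
\item The lower triangle is replaced by its inner horn, and the $n = 0$ case is applied once more.
\end{enumerate}
This shows the whole diagram is determined by its bottom edge. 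Some explicit contraction of this kind, using Segal and Rezk for $X$ on the triangulated shape, is the step your proposal needs and does not supply.
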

\begin{proof}
  We must show that $\Simp X$ is $\GM$-orthogonal to $\IntIso \times \Delta^n \to \Delta^n$ and, since
  $\Simp X$ is Segal, we may reduce immediately to the case where $n = 0$ or $n = 1$. The first case
  is an immediate consequence of simplicial stability, as $\IntIso$ is built by pushing out various
  simplices and therefore maps $\Modify[\GM]{\IntIso \to \Simp X}$ correspond to those maps which factor
  through $X$.

  For the $n = 1$ case, we must show that diagrams of the following shape in $X$ are determined by
  the bottom-most edge

  \begin{equation*}
    \begin{tikzpicture}[diagram]
      \node (A) {$x_0$};
      \node [right = of A] (B) {$x_1$};
      \node [below = 1.5cm of A] (C) {$x_2$};
      \node [right = of C] (D) {$x_3$};
      \path[->] (A) edge node [above] {$f$} (B);
      \path[->] (A) edge node[left] {$\iota_1$\rotatebox[origin=c]{-90}{$\sim$}} (C);
      \path[->] (B) edge node[right] {\rotatebox[origin=c]{-90}{$\sim$} $\iota_0$} (D);
      \path[->] (C) edge node[below] {$g$} (D);
      \path[->] (A) edge (D);
    \end{tikzpicture}
  \end{equation*}

  By simplicial stability along with the previous case, we may safely assume that each of these
  components (including the relevant 2-cells and section-retraction pairs) all come from $X$.
  Moreover, since $X$ is Segal, the top simplex is redundant. In particular, it is equivalent to the
  type
  $\Sum{\DeclVar{\iota_1}{\GM}{\IntIso \to X}}
  \Sum{\DeclVar{f}{\GM}{\Int \to X}}
  \Sum{\DeclVar{p}{\GM}{\iota\prn{1} = f\prn{0}}}
  \iota \circ_p f = g \circ \iota_1$.
  By the $n = 0$ case we have already discussed, we may assume $\iota_1 = \ArrId{}$ and replace
  $\DeclVar{\iota_1}{\GM}{\IntIso \to X}$ by simply $\DeclVar{x}{\GM}{X}$. After this replacement, the
  whole type collapses to a singleton type.

  It finally suffices to show that the bottom triangle is equivalent to the bottom edge. However, we
  may replace this triangle by the corresponding inner horn, whereafter another application of the
  Rezk condition for $n = 0$ finishes things off.
\end{proof}

\begin{corollary}
  \label{cor:proofs:simp-cat}
  If $\DeclVar{X}{\GM}{\Uni}$ is Segal and Rezk, then $\Simp X$ is a category.
\end{corollary}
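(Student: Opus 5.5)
The statement follows by combining \cref{lem:proofs:simp-preserves-segal} and \cref{lem:proofs:simp-preserves-rezk} with the fact that $\Simp X$ is simplicial by construction. Recall that a category is a type which is simplicial, Segal, and Rezk. We check the three conditions in turn.

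\emph{Simplicial.} $\Simp$ is the idempotent monad whose modal types are exactly the simplicial types, and $\Simp X$ is the universal simplicial type receiving a map from $X$. Hence $\IsSimp\prn{\Simp X}$ holds by construction, and this step needs no hypothesis on $X$.

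\emph{Segal.} Since $X$ is $\GM$-annotated and Segal, \cref{lem:proofs:simp-preserves-segal} applies directly and shows that $\Simp X$ is Segal. If one were worried that this lemma only yields $\GM$-orthogonality to $\Spine^n \to \Delta^n$, \cref{cor:proofs:spines} upgrades that to the full Segal condition, because $\Simp X$ is simplicial. This is precisely why simpliciality is checked first.

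\emph{Rezk.} Since $X$ is $\GM$-annotated and both Segal and Rezk, \cref{lem:proofs:simp-preserves-rezk} gives that $\Simp X$ is Rezk.

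Putting the three together, $\Simp X$ is a category. There is no real obstacle: the substantive work lies in the two preceding lemmas, which rest on \cref{ax:simp-stability}. The only point needing care is that both lemmas require $X$ to be $\GM$-annotated, and this is exactly the hypothesis of the corollary.
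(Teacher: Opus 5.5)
Your proof is correct and is the paper's argument: the corollary is stated there without separate proof, as the immediate combination of \cref{lem:proofs:simp-preserves-segal}, \cref{lem:proofs:simp-preserves-rezk}, and the fact that $\Simp X$ is simplicial by construction. Your remark that the Segal lemma really establishes $\GM$-orthogonality to the spine inclusions, which \cref{cor:proofs:spines} upgrades to the full Segal condition once simpliciality is known, accurately reflects how that lemma is proved.
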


\section{Recollections on cocartesian fibrations}
\label{sec:cocartesian}

Our eventual goal of characterizing $\Modify[\GM]{X \to \Cat}$ crucially depends on the theory
of \emph{cocartesian families}~\citep{joyal:2008,lurie:2009}. These are a subset of type families $A : X \to \Uni$ for which
(1) each $A\,x$ is a category, and (2) each morphism $f : \Hom{x}{y}$ in $X$ induces a transport
function $A\,x \to A\,y$. We shall require that these transport functions are
functorial, and that the coherences enforcing functoriality are themselves coherent, \etc{}
In order to structure all of this data, we ask for $A$ to satisfy a number of propositions
that, when combined, give rise to (1) and (2) above. While somewhat indirect, this
accounts for the infinite hierarchy of coherences that would otherwise be impossible to write
down. This material has been developed within \STT{}~\citep{buchholtz:2023}. We recall it
for the reader's benefit.

\subsection{The definition of cocartesian families}
Consider a family $A : X \to \Uni$ and write $\rho^A$ for the canonical map given by restriction and
projection
$\TotalTy{A}^{\Delta^2} \to \TotalTy{A}^{\Lambda^2_0} \times_{X^{\Lambda^2_0}} X^{\Delta^2}$.
Note that the restriction $\TotalTy{A}^{\Delta^2}$ to $\TotalTy{A}^{\brc{\bar{0} \to \bar{1}}}$
along with the corresponding map
$\TotalTy{A}^{\Lambda^2_0} \times_{{\Lambda^2_0} \to X} X^{\Delta^2} \to \TotalTy{A}^{\brc{\bar{0} \to \bar{1}}}$
commute with $\rho^A$, so we may view $\rho^A$ as a map between two families of types over
$\Int \to \TotalTy{A}$. Given $x : \Int \to \TotalTy{A}$, we denote by $\rho^A_x$ the
restriction of $\rho^A$ to the fibers of these families over $x$.

\begin{definition}\label{def:cocartesian:arrow}
  A morphism $f : \Int \to \TotalTy{A}$ is \emph{cocartesian} in $A$
  (written $\Con{isCocartArr}_A\prn{f}$) if $\rho^A_f$ is an equivalence.
\end{definition}

Informally, $f : \Int \to \TotalTy{A}$ is cocartesian if diagrams of the following shape have a
unique lift whenever $\Delta^{\brc{\bar{0} \to \bar{1}}} \to \Delta^2 \to \TotalTy{A}$ is $f$:
\[
  \DiagramSquare{
    height = 1.2cm,
    nw = \Lambda^2_0,
    sw = \Delta^2,
    ne = \TotalTy{A},
    se = X,
  }
\]

\begin{definition}
  A family $\DeclVar{A}{}{X \to \Uni}$ is
  \emph{cocartesian} if it is iso-inner, each $A\prn{x}$ is simplicial, and
  the following holds:
  \begin{align*}
    \Prod{\DeclVar{u}{}{\Int \to X}}
    \Prod{\DeclVar{a}{}{A(u\,0)}}
    \Sum{\DeclVar{f}{}{\hom_u^A(a,\bullet)}}
    \mathsf{isCocartArr}_A(f)
  \end{align*}
\end{definition}
\noindent
This requirement is a proposition~\citep{buchholtz:2023} and each fiber of such a family is a
category, satisfying the first of our requirements.
\begin{example}
  For every category $A$ the codomain map ${A^\Int \to A}$ is cocartesian. The domain projection is
  cocartesian iff $A$ has pushouts.
\end{example}
\begin{remark}
  An arrow is \emph{vertical} if it maps to an isomorphism. In a cocartesian fibration, every arrow
  factors as ``vertical $\circ$ cocartesian''.
\end{remark}

In case that $X$ is a category and $A$ is simplicial and iso-inner, a slick characterization of
cocartesian families becomes available.
\begin{proposition}[\citet{buchholtz:2023}]
  \label{prop:cocartesian:lari}
  An iso-inner family $\DeclVar{A}{}{X \to \Uni[\Simp]}$ over a category $X$ is cocartesian iff the map
  $(\TotalTy{A})^\Int \to (\TotalTy{A})^{\{0\}} \times_{X^{\{0\}}} X^\Int$ has a left adjoint
  right inverse.\footnote{That is, a left adjoint where the unit map is an isomorphism.}
\end{proposition}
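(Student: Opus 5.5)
We work with the map $r : \TotalTy{A}^\Int \to \TotalTy{A}^{\brc{0}} \times_{X^{\brc{0}}} X^\Int$, which sends $f$ to the pair $(f\,0, \pi \circ f)$. Both source and target are categories: $\TotalTy{A}$ is a category since $X$ is one and $A$ is iso-inner with simplicial fibers, and categories are closed under exponentials and pullbacks. An object of the target is a pair $(a,u)$ with $a : A(u\,0)$ and $u : \Int \to X$. The natural candidate for the left adjoint is $\ell(a,u) = $ the cocartesian lift of $u$ at $a$, and its unit is the identity. So the plan is to show that cocartesian arrows are exactly the arrows that serve as $r$-universal objects over $(a,u)$ with invertible unit. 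We then assemble these objects into an adjoint using \cref{lem:primer:pw-left-adjoint-to-left-adjoint} (or its dual for left adjoints, proved by the same argument).

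\textbf{Forward direction.} Fix $f : \Int \to \TotalTy{A}$ with $r(f) = (a,u)$, and let $g : \Int \to \TotalTy{A}$ be another arrow. A morphism $r(f) \to r(g)$ in the target is the data of
\begin{itemize}
  \item an arrow $a \to g\,0$ in $\TotalTy{A}$, and
  \item a commutative square in $X$ from $u$ to $\pi g$ compatible with that arrow.
\end{itemize}
A morphism $f \to g$ in $\TotalTy{A}^\Int$ is a square $\Int\times\Int \to \TotalTy{A}$. Decompose $\Int \times \Int$ into two $2$-simplices glued along the diagonal. Using the Segal property of $\TotalTy{A}$ (relative Segal for $\pi$ together with Segal $X$), a square from $f$ amounts to an arrow $a \to g\,0$, composed with $g$, plus a triangle in $\TotalTy{A}$ whose $\bar0\to\bar1$ edge is $f$. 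Its $\Lambda^2_0$-horn is determined by the given data, and its image in $X$ is determined by the square in $X$.

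Cocartesianness of $f$ says $\rho^A_f$ is an equivalence, so this triangle is uniquely determined. Hence $r_* : \Hom{f}{g} \to \Hom{r f}{r g}$ is an equivalence for every $g$. By \cref{lem:primer:pw-left-adjoint-to-left-adjoint}, taking $\epsilon = \ArrId{}$ (or its dual for left adjoints), the existence of cocartesian lifts for all $\GM$-objects $(a,u)$ yields a left adjoint $\ell$ with identity unit, i.e.\ a left adjoint right inverse.

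\textbf{Converse.} Suppose $\ell$ is a left adjoint right inverse. For $(a,u)$, the arrow $f \defeq \ell(a,u)$ satisfies $r f = (a,u)$, and for every $g$ the map $\Hom{f}{g} \Equiv \Hom{(a,u)}{r g}$ is an equivalence. Running the same decomposition backwards shows that $\rho^A_f$ is an equivalence on fibers. The key input is that every $\Lambda^2_0$-horn in $\TotalTy{A}$ starting with $f$, together with a filler of its image in $X$, arises from such a square, by taking $g$ to be the third edge. So $f$ is a cocartesian lift, and $A$ is cocartesian.

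\textbf{Main obstacle.} The delicate step is the precise identification of $\Hom[\TotalTy{A}^\Int]{f}{g}$ relative to $\Hom{r f}{r g}$ with the fiber of $\rho^A_f$. This has to be an equivalence of types, not merely a correspondence of points, and it must be natural enough in $g$ that the $\GM$-pointwise criterion applies. To get it, we first use the relative Segal condition to replace $\Int\times\Int$ by the union of two $\Delta^2$'s along the diagonal, and then use the iterated-spine results (\cref{lem:proofs:spine-from-segal}) to reorganize $\Delta^2$ glued with $\Delta^2$ as $\Lambda^2_0$-extensions. Checking that the resulting equivalences commute with projection to $X$ is the tedious but essential part.
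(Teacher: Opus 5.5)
Your argument is correct. Note that the paper does not prove this proposition itself; it imports it from \citet{buchholtz:2023}, where cocartesian arrows are characterized through hom types. The paper uses that characterization, \citet[Proposition 5.1.10]{buchholtz:2023}, in the proof of \cref{thm:cocartesian:lcc-vs-cc}. Your proof is a self-contained unwinding of the same idea, in two steps:
\begin{enumerate}
  \item A left adjoint right inverse of $r$ amounts to choosing, over each $(a,u)$, an $f$ with $r f = (a,u)$ for which $r_* : \Hom{f}{g} \to \Hom{r f}{r g}$ is an equivalence for every $g$.
  \item Cutting a square $\Int \times \Int \to \TotalTy{A}$ along its diagonal is legitimate because $\TotalTy{A}$ and $X$ are simplicial and Segal. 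It identifies the fiber of $r_*$ over $(\phi,\tau)$ with the fiber of $\rho^A_f$ over the horn $\brk{f, g\circ\phi}$ together with the filler in $X$ determined by $\tau$. This is exactly \cref{def:cocartesian:arrow}.
\end{enumerate}

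Two points need tidying.

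First, drop the appeal to \cref{lem:primer:pw-left-adjoint-to-left-adjoint} and its dual. That lemma requires $\GM$-annotated categories and functors, while the proposition is stated for arbitrary $X$ and $A$. You also do not need it. The function supplying cocartesian lifts already defines $\ell$ with $r \circ \ell = \ArrId{}$. You established that $r_*$ is an equivalence for all $(a,u)$ and $g$, not just $\GM$ ones. That family of equivalences is precisely an adjunction in the synthetic sense, with identity unit.

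Second, in the converse, a horn $\brk{f,k,p}$ arises from the square with $g \defeq k$ (the $\bar{0} \to \bar{2}$ edge of the horn) and $\phi \defeq \ArrId{}$, so the lower triangle is degenerate. The $\bar{1} \to \bar{2}$ edge is what the filler produces, so ``taking $g$ to be the third edge'' is wrong if read as that edge. You should also justify that the transposition map of the given adjunction is $r_*$, since that is what lets you run the decomposition backwards. The unit is an isomorphism in the Rezk type $\TotalTy{A}^{\brc{0}} \times_{X^{\brc{0}}} X^\Int$, hence a path $r\,\ell(a,u) = (a,u)$. By Yoneda, the transposition is then $\psi \mapsto r(\psi) \circ \eta$, which is $r_*$.
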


This characterization implies many closure properties such as under composition, pullback, and Leibniz
cotensors~\citep{buchholtz:2023}.

For the second desideratum of cocartesian families, we define the cocartesian transport
operation, providing the desired functors between fibers. If $A : X \to \Uni$ is cocartesian and
$u : \Hom{x}{y}$, transport $\DeclVar{u_!}{}{A\,x \to A\,y}$ is defined by mapping
$\DeclVar{a}{}{A\,x}$ to the codomain of the (unique) cocartesian lift of $u$ starting at $a$.
\begin{proposition}
  If $\DeclVar{A}{}{X \to \Uni}$ is cocartesian, then cocartesian transport is \emph{functorial},
  \ie{}, $(vu)_! = v_! \circ u_!$ and $(\ArrId{})_! = \ArrId{}$.
\end{proposition}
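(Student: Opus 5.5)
The plan is to reduce both equations to the uniqueness of cocartesian lifts together with the fact that composites of cocartesian arrows are cocartesian. Fix $\DeclVar{a}{}{A\,x}$, and for $u : \Hom{x}{y}$ write $\bar u_a : \hom^A_u(a, u_!a)$ for the chosen cocartesian lift. The type $\Sum{\DeclVar{f}{}{\hom_u^A(a,\bullet)}} \Con{isCocartArr}_A(f)$ is contractible, since it is a proposition (as recalled above) and is inhabited by the definition of a cocartesian family. Consequently, to identify $(vu)_!\,a$ with $v_!(u_!\,a)$ it suffices to exhibit \emph{some} cocartesian dependent arrow over $vu$ starting at $a$ whose codomain is $v_!(u_!\,a)$. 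All identifications constructed this way are canonical, so they assemble into an identification of functions $A\,x \to A\,z$ by function extensionality.

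For the identity law, I would show that the constant dependent arrow $\lambda t.\,a$ over $\ArrId{x} = \lambda t.\,x$ is cocartesian. Unfolding \cref{def:cocartesian:arrow}, we need $\rho^A$ restricted to the fiber over a degenerate arrow to be an equivalence. I would argue that the relevant $\Delta^2 \to \TotalTy{A}$ lifting problems, when the edge $\bar 0 \to \bar 1$ is degenerate, are equivalent to lifting problems along the retraction $\Delta^2 \to \Lambda^2_0$ collapsing that edge. Concretely, the map $(i,j)\mapsto (i, j)$ composed with the degeneracy identifies $\Delta^2$ relative to the degenerate edge with $\Lambda^2_0$ up to the inner-horn/iso-inner structure. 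The cleanest route, however, is to use the total-space formulation: the identity of $a$ in $\TotalTy{A}$ is an isomorphism, and isomorphisms in an iso-inner family are cocartesian, because precomposition with an invertible arrow is an equivalence on dependent homs once $A$ is inner. Either way, the codomain of this lift is $a$, which gives $(\ArrId{})_!\,a = a$.

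For composition, let $u : \Hom{x}{y}$ and $v : \Hom{y}{z}$, and use innerness of $A$ to form the dependent composite $\bar v_{u_!a} \circ \bar u_a$ over $vu$; its codomain is $v_!(u_!a)$. The remaining claim is that cocartesian dependent arrows are closed under composition. I expect this to be the main obstacle. I would prove it by the usual pasting argument on $\Delta^3$. A lifting problem $\Lambda^2_0 \to \TotalTy{A}$ over a base $\Delta^2$ whose first edge is the composite extends, by the Segal condition in the base (\cref{lem:proofs:spine-from-segal} applied to the inner map $\TotalTy{A}\to X$), to a problem indexed by $\Delta^3$. That problem decomposes into two $\Lambda^2_0$-type problems, the first solved by cocartesianness of $\bar u_a$ and the second by cocartesianness of $\bar v_{u_!a}$. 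Because $A$ is inner and both solutions are unique, the fiber of $\rho^A$ over the composite is equivalent to an iterated fiber of contractible types, and is therefore contractible. To manage the coherence bookkeeping, I would phrase this as a 2-out-of-3 statement for right-orthogonality relative to the two faces $d_1, d_2$ of $\Delta^3$, reducing everything to the definition of $\rho^A$ and the orthogonality to $\Lambda^2_1 \to \Delta^2$ and $\Spine^3 \to \Delta^3$.
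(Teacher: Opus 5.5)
Your proposal is correct and follows the standard argument. The paper does not reprove this proposition but recalls it from \citet{buchholtz:2023}, where functoriality of transport comes from uniqueness of cocartesian lifts together with two facts: identities are cocartesian, and cocartesian arrows compose (their Proposition~5.1.8, which your $\Delta^3$ pasting re-derives). Both your identity step (isomorphisms are cocartesian) and your extension to a base $3$-simplex use that $X$ is Segal, which the statement tacitly assumes in order to form $vu$.
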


\begin{definition}
  For $\DeclVar{A,B}{}{X \to \Uni}$ cocartesian, we say that
  $f : \Prod{x : X} A\,x \to B\,x$ is a \emph{cocartesian functor} if $f$ preserves cocartesian arrows. We write $A
  \CoCartTo B$ for the type of cocartesian functors.
\end{definition}

For our construction of $\Cat$, it will be helpful to develop the theory of \emph{locally}
cocartesian fibrations. These are families $A : X \to \Uni$ that are cocartesian after
restriction $A \circ f$ for every $f : \Int \to X$. As setup,
for a family $A : \Int \to \Uni$, we call an edge $a : \prn{i : \Int} \to A\,i$
\emph{locally cocartesian} if the following proposition holds:
\begin{align*}
  &\Con{isLocallyCoCart} : \prn{A : \Int \to \Uni} \to \prn{\prn{i : \Int} \to A\,i} \to \Prop
  \\
  &\Con{isLocallyCoCart}\,A\,a =
  \Prod{b : \prn{i : \Int} \to A\,i} \Prod{p : a\,0 = b\,0}
  \\
  &\quad \IsContr\prn{\Sum{t : \prn{i,j : \Delta^2} \to A\,i} \Restrict{t}{\Lambda^2_0} = \brk{a,b,p}}
\end{align*}
We then define the structure of having locally cocartesian lifts:
\begin{align*}
  &\Con{hasLCCLifts} : \prn{\Int \to \Uni} \to \Uni
  \\ 
  &\Con{hasLCCLifts}\,A =
  \Prod{a_0 : A\,0} \Sum{a : \Hom[A]{a_0}{\bullet}} \Con{isLocallyCoCart}\,a
\end{align*}

Unlike for cocartesian edges, locally cocartesian edges need not compose. Let us quickly
isolate what it means for them to do so:
\begin{align*}
  &\Con{LCCLiftsCompose} : \prn{\Delta^2 \to \Uni} \to \Prop
  \\
  &\Con{LCCLiftsCompose}\,A = \prn{a : \Prod{s : \Delta^2} A\,s}
  \\
  &\quad \to \Con{isLocallyCoCart}\prn{a\prn{-,0}} \times \Con{isLocallyCoCart}\prn{a\prn{1,-}}
  \\
  &\quad \to \Con{isLocallyCoCart}\prn{\lambda i.\,a\prn{i,i}}
\end{align*}

We extend the preceding two definitions to general families $A : X \to \Uni$ by stating
that they hold for $A$ if they hold for $A \circ f : \Int \to \Uni$ for all arrows $f :
\Int \to X$ (or squares $h : \Int \times \Int \to X$, respectively). We overload the predicates, writing, \eg{},
$\Con{hasLCCLifts}\prn{A}$ also for a general family $A$. A \emph{locally cocartesian family} is one with
$\Con{hasLCCLifts}$ structure.

\begin{lemma}
  \label{lem:proofs:lcc-char}
  If $A : X \to \Uni[\Simp]$ is an iso-inner fibration, then a dependent edge $a : \prn{i :
  \Int} \to A\prn{x\,i}$ over $x : \Int \to X$ is locally cocartesian if and only if the following
  map is an equivalence:
  \[
    a^* :
    \prn{\Sum{f : \Int \to A\prn{x\,1}} f\prn{0} = a\prn{1}}
    \to
    \prn{\Sum{f : \prn{i : \Int} \to A\prn{x\,i}} f\prn{0} = a\prn{0}}
  \]
\end{lemma}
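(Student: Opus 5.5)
The idea is to reformulate the locally cocartesian condition as a statement about maps out of $\Lambda^2_0 \hookrightarrow \Delta^2$ and then use the iso-inner hypothesis to cut $\Delta^2$ down along its long edge. By definition, $a$ is locally cocartesian if and only if the restriction map
\[
  r : \Big(\prod_{s : \Delta^2} A\,(x\,s_0)\Big)_{a} \to \Big(\prod_{s : \Lambda^2_0} A\,(x\,s_0)\Big)_{a}
\]
has contractible fibers, i.e.\ is an equivalence. Here the subscript $a$ means we fix the edge $t(-,0) = a$. The codomain of $r$ is exactly $\Sum{b : \prn{i:\Int} \to A\prn{x\,i}} b\,0 = a\,0$, which is the codomain of $a^*$. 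So it suffices to identify the domain of $r$ with the domain of $a^*$, compatibly with the maps.

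For the domain, I will use the inner horn. A simplex $t$ with $t(-,0) = a$ amounts to a dependent inner horn $\brk{a, f, p}$ together with a filler. Here $f = t(1,-)$ is an edge in the fiber $A\prn{x\,1}$, since $x$ applied to $(1,j)$ is constant. Because $A$ is inner, and hence orthogonal to $\Lambda^2_1 \to \Delta^2$ along the degenerate simplex $(i,j) \mapsto x\,i$ in $X$, the filler space is contractible. This gives
\[
  \Big(\prod_{s : \Delta^2} A\,(x\,s_0)\Big)_{a} \;\simeq\; \Sum{f : \Int \to A\prn{x\,1}} f\prn{0} = a\prn{1}.
\]
Under this equivalence, the map $r$ sends $f$ to the diagonal $\lambda i.\,t(i,i)$, which is the dependent composite $f \circ a$ over $x$. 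That is precisely what $a^*$ is meant to denote.

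The step that needs care is checking that this composite agrees with the paper's $a^*$ and that the identification is coherent with the endpoint paths $p$. If $a^*$ is defined as precomposition/composition, this is immediate once composition is interpreted via the Segal filler. Where things are more delicate, I would use $\Simp$-valuedness: each fiber is simplicial, so $\Delta^2$ can be decomposed as $\Lambda^2_1$ plus a contractible filler without passing through $\Int^2$. The Rezk part of iso-inner is not needed beyond guaranteeing that the fibers are categories.

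Putting this together, $r$ factors as the inverse of the filler equivalence followed by $a^*$. By 2-out-of-3, $r$ is an equivalence if and only if $a^*$ is.
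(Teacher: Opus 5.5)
Your proof is correct and is essentially the paper's argument. The paper fixes $(b,p)$ and uses innerness to identify the space of $\Lambda^2_0$-fillers of $[a,b,p]$ with the fiber of $a^*$ over $(b,p)$, which is just the fiberwise form of your identification of the domain of $r$ with the domain of $a^*$ followed by 2-out-of-3; your hedging remarks about simpliciality and the Rezk condition can be dropped, since neither is used.
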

\begin{proof}
  This question is once more restricted to a particular $x : \Int \to X$, we may pull back $A$ along
  this map to suppose that $A : \Int \to \Uni[\Simp]$ and that $a : \prn{i : \Int} \to A\,i$. In
  this case, $a$ is locally cocartesian if and only if the following holds (by definition):
  \[
    \Prod{b : \prn{i : \Int} \to A\,i} \Prod{p : a\,0 = b\,0}
    \IsContr\prn{\Sum{t : \prn{i,j : \Delta^2} \to A\,i} \Restrict{t}{\Lambda^2_0} = \brk{a,b,p}}
  \]

  Fix $b : \prn{i : \Int} \to A\,i$ and $p : a\,0 = b\,0$. Then 
  $\Sum{t : \prn{i,j : \Delta^2} \to A\,i} \Restrict{t}{\Lambda^2_0} = \brk{a,b,p}$ is equivalent
  (by innerness) to $c : \Int \to A\,1$ along with
  $q : c\prn{0} = a\prn{1}$, $\theta : c \circ_q a = b$, and $\theta_0 : \Ap_{-\prn{0}}\prn{\theta}
  = p$. However, this is precisely the fiber of $a^*$, so an edge is locally cocartesian if and only
  if each fiber of $a^*$ is contractible.
\end{proof}

\begin{theorem}\label{thm:cocartesian:lcc-vs-cc}
  If $A : X \to \Uni[\Simp]$ is iso-inner and locally cocartesian where locally cocartesian edges
  compose, then locally cocartesian edges are cocartesian and $A$ is cocartesian.
\end{theorem}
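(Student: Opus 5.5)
It suffices to prove the first claim, that every locally cocartesian edge is cocartesian. Given that, \Con{hasLCCLifts} produces, for every $u : \Int \to X$ and $a : A(u\,0)$, a dependent edge over $u$ starting at $a$ which is cocartesian. Since $A$ is iso-inner with simplicial fibers, this is exactly the definition of a cocartesian family.

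So fix $x : \Int \to X$ and a locally cocartesian $a : \prn{i:\Int} \to A\prn{x\,i}$. We must show that $\rho^A_a$ is an equivalence. Concretely, for every $\sigma : \Delta^2 \to X$ with $\sigma\prn{-,0} = x$, and every dependent edge $b$ over $\lambda i.\,\sigma\prn{i,i}$ with $b\,0 = a\,0$, the type of fillers $t : \prn{s : \Delta^2} \to A\prn{\sigma\,s}$ extending the $\Lambda^2_0$-horn $\brk{a,b}$ must be contractible. We may pull $A$ back along $\sigma$, so we assume $X = \Delta^2$ and $A : \Delta^2 \to \Uni[\Simp]$ iso-inner. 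Everything is then local to a 2-simplex, which is exactly the setting in which \Con{LCCLiftsCompose} applies.

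The plan is to factor $b$ through the locally cocartesian lift of the remaining edge. Let $e$ be the locally cocartesian lift of $\sigma\prn{1,-}$ starting at $a\,1$, which exists by \Con{hasLCCLifts}. By \Con{LCCLiftsCompose}, the Segal composite $e \circ a$ is locally cocartesian over the diagonal $\lambda i.\,\sigma\prn{i,i}$. Now use \cref{lem:proofs:lcc-char} in its reformulated form: precomposition with a locally cocartesian edge is an equivalence from vertical arrows in the target fiber to dependent arrows out of its source.

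Applying this to $e \circ a$, the edge $b$ corresponds to a unique vertical arrow $v : \Hom[A(\sigma\,\bar 2)]{(e\circ a)\,1}{b\,1}$ with $v \circ (e \circ a) = b$. Then the Segal condition for $A$ (iso-inner, so orthogonal to $\Spine^n \to \Delta^n$ by \cref{lem:proofs:spine-from-segal}) lets us assemble a filler $t$ from the 2-simplex witnessing $e \circ a$, followed by $v$. Its boundary edges are $a$, the composite $v \circ e$, and $b$.

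Uniqueness is the main step. Any filler $t$ has edge $t\prn{1,-} =: c$ over $\sigma\prn{1,-}$ starting at $a\,1$. Applying \cref{lem:proofs:lcc-char} to $e$, we can write $c = w \circ e$ for a unique vertical $w$. The filler $t$ then amounts to a 2-simplex with boundary $a$, $w \circ e$, $b$. Using innerness over the 3-simplex built from $\Delta^2$ with $e$ inserted, this becomes the data of $w$ together with an identification $w \circ (e \circ a) = b$. By \cref{lem:proofs:lcc-char} applied to $e \circ a$, that type is contractible.

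I expect this reduction to be the main obstacle. It requires carefully comparing the type of $\Lambda^2_0$-fillers with fibers of the precomposition map $(e\circ a)^*$. The cleanest route is to show that both are equivalent to the same iterated $\Sum{}$-type, using orthogonality of $A$ against $\Spine^3 \to \Delta^3$ restricted to the degenerate 3-simplex over $\sigma$. Throughout, \cref{lem:proofs:lcc-char} is applied in the fibers over $\Int$, which are simplicial, so the iso-inner hypothesis is used exactly there.
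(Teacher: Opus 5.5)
Your proposal is correct and is essentially the paper's proof. After pulling back to $\Delta^2$, you take the locally cocartesian lift $e$ of $\sigma\prn{1,-}$ at $a\,1$, use \Con{LCCLiftsCompose} to see that $e \circ a$ is locally cocartesian, and conclude via \cref{lem:proofs:lcc-char} applied to $e$ and to $e\circ a$. The paper differs only in packaging: it identifies horn fillers with fibers of $a^*$ by citing \cite[Proposition 5.1.10]{buchholtz:2023} and then applies 3-for-2 to $a^* \circ e^* = \prn{e\circ a}^*$, which is exactly your fiberwise uniqueness computation.
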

\begin{proof}
  We wish to show that $\Lambda^2_0 \to \Delta^2$ is orthogonal to $\tilde{A} \to X$ if the
  $0 \to 1$ edge of $\Lambda^2_0$ is sent to a locally cocartesian edge. Since this property can be
  tested after pulling back $\Sum{x : X} A\,x \to X$, we may assume that $X = \Delta^2$ and concern
  ourselves only with the tautological 2-simplex. Moreover, in this situation $\Sum{x : X} A\,x$ is
  a (simplicial) category.

  Let us now fix $\brk{f,g,p} : \Lambda^2_0 \to \Sum{x : X} A\,x$ which lifts $\ArrId{}$ such that $f$
  is locally cocartesian. Let us write $x = f\prn{0}$, $y = f\prn{1}$, and $z = g\prn{1}$. With this
  notation, $p : x = g\prn{0}$.

  We wish to show that $\brk{f,g,p}$ extends uniquely to a 2-simplex (such a 2-simplex will
  necessarily lie correctly over the unique non-degenerate 2-simplex in $X$ and---since $X$ is a
  set---there is no interesting data in how it lies over this simplex). To this end, it suffices to
  show that the following map is an equivalence~\cite[Proposition 5.1.10]{buchholtz:2023}:
  \[
    f^* : \Hom{y}{z} \to \Hom{x}{z} 
  \]

  If so, we may choose the (unique) preimage of $\prn{g,p,\Refl}$ to conclude the proof. Now, let
  us choose a locally cocartesian lift of $\prn{1,-} : \Int \to \Delta^2$ with starting point $y$.
  This is a morphism $h'$. Let us write $w$ for the target of $h'$. Since $h' \circ f$ is locally
  cocartesian by assumption, we conclude that the following maps are equivalences by
  \cref{lem:proofs:lcc-char}:
  \[
    (h' \circ f)^* : \Hom{w}{z} \to \Hom{x}{z} 
    \qquad
    h'^* : \Hom{w}{z} \to \Hom{y}{z} 
  \]
  By 3-for-2, the same is then true of $f^* : \Hom{y}{z} \to \Hom{x}{z}$ as required.
\end{proof}
In this case, locally cocartesian lifts are unique, since cocartesian lifts are.
This will be important shortly.

\begin{lemma}\label{lem:cat:lcc-prop}
  If $A : X \to \Uni[\Simp]$ is iso-inner, then $\Con{hasLCCLifts}\prn{A}$ and
  $\Con{LCCLiftsCompose}\prn{A}$ are propositions.
\end{lemma}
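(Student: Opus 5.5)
Both statements reduce to the case of a family over $\Int$ or $\Delta^2$. $\Con{hasLCCLifts}\prn{A}$ and $\Con{LCCLiftsCompose}\prn{A}$ are defined by quantifying over all $f : \Int \to X$ (resp.\ squares/2-simplices) of the corresponding predicates for $A \circ f$. Dependent products of propositions are propositions, and iso-innerness and simpliciality of fibers are stable under pullback along $f$. So it suffices to treat $A : \Int \to \Uni[\Simp]$ iso-inner for $\Con{hasLCCLifts}$, and $A : \Delta^2 \to \Uni[\Simp]$ iso-inner for $\Con{LCCLiftsCompose}$.

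$\Con{LCCLiftsCompose}$ is the easy half. It is a $\Pi$-type (over $a$ and the two hypotheses) valued in $\Con{isLocallyCoCart}\prn{\lambda i.\,a\prn{i,i}}$. This is a proposition, being a $\Pi$ of $\IsContr$-types, so the whole thing is a proposition by function extensionality. No use of iso-innerness is needed here.

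For $\Con{hasLCCLifts}$, it is a $\Pi$ over $a_0 : A\,0$. Hence we must show that for fixed $a_0$ the type
\[
  L(a_0) \defeq \Sum{a : \Hom[A]{a_0}{\bullet}} \Con{isLocallyCoCart}\,a
\]
is a proposition, i.e.\ that locally cocartesian lifts are unique up to a contractible space of choices. I would argue as for uniqueness of cocartesian lifts, using \cref{lem:proofs:lcc-char}. Given two locally cocartesian lifts $a, a'$ starting at $a_0$, the characterization says precomposition
\[
  a^* : \Sum{g : \Int \to A\,1} g\,0 = a\,1 \;\Equiv\; \Sum{g : \prn{i:\Int} \to A\,i} g\,0 = a_0 .
\]
In other words, $a\,1$ corepresents the functor $\Hom[\Int]{a_0}{-}$ restricted to the fiber $A\,1$. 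Applying $a^*$ to $a'$ gives an arrow $u : \Hom[A\,1]{a\,1}{a'\,1}$ with $u \circ a = a'$. Symmetrically, we get $u'$ going back. By uniqueness in the equivalences, both composites are identities, so $u$ is an isomorphism in $A\,1$. Since $A\,1$ is a category (a simplicial, Segal, Rezk type, because $A$ is iso-inner), Rezk-ness converts $u$ into an identification $a\,1 = a'\,1$, and this transports $a$ to $a'$.

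To obtain full contractibility rather than mere existence, I would phrase this as a Yoneda argument. $L(a_0)$ is equivalent to the type of pairs $\prn{b : A\,1,\, a : \Hom{a_0}{b}}$ for which $a^*$ is an equivalence, i.e.\ to the type of representing objects for a fixed copresheaf on the category $A\,1$. Such a type is a proposition in any Rezk category, by the standard argument that initial objects of a category form a proposition. Concretely, the type is equivalent to the type of initial objects of the coslice $a_0 \dblslash A\,1$, built via the relative hom over $\Int$.

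The main obstacle I expect is making this identification coherent. One must check that the equivalence of \cref{lem:proofs:lcc-char} expresses exactly initiality in a suitable coslice (a category by iso-innerness), so that the known fact ``the type of initial objects of a category is a proposition'' applies without hand-waving about higher coherences.
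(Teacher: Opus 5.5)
Your proposal is correct and takes essentially the same route as the paper. You reduce to families over $\Int$ (resp.\ squares), note that $\Con{LCCLiftsCompose}$ is trivially a proposition, and show that two locally cocartesian lifts differ by a unique vertical isomorphism $u$ with $u \circ a = a'$; iso-innerness then turns this into an identification (the paper uses the total category over $\Int$ being a category, you use Rezkness of $A\,1$ or the coslice). The coherence worry you flag at the end is not a real obstacle: an explicit function $\Prod{x,y : L(a_0)} x = y$ already shows $L(a_0)$ is a proposition, so your direct argument suffices without the Yoneda reformulation.
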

\begin{proof}
  Note that it suffices to prove that for all $x : \Int \to X$ (respectively, $x : \Int^2 \to X$)
  that $\Con{hasLCCLifts}\prn{A \circ x}$ (respectively, $\Con{LCCLiftsCompose}\prn{A \circ x}$) is
  a proposition.

  Only the first of these obligations is non-trivial, since $\Con{isLocallyCoCart}$ is manifestly
  valued propositions. Note that if $a$ and $a'$ are both locally cocartesian lifts, then by
  construction there is a unique vertical isomorphism $\iota : a\prn{1} \cong a'\prn{1}$ such that
  $\iota \circ a = a'$. We may then consider these as isomorphic arrows in the fiber
  $\Int \times_{X} \Sum{x : X} A\,x$, which is a category by iso-innerness. Consequently, $a = a'$
  in the fiber $\Int \times_{X} \Sum{x : X} A\,x$ whereby they are equal in $\Sum{x : X} A\,x$.
  The type of locally cocartesian lifts is therefore a proposition as required.
\end{proof}

\subsection{The directed gluing of cocartesian families}
We close this section by generalizing the \emph{directed gluing type} inspired by
\citet{weaver:2020} and used in this context by \citet{gratzer:2024}. Roughly, this type takes
two cocartesian families over $X$ and a cocartesian functor between them and bundles them into a single
cocartesian family over $X \times \Int$. This is a key ingredient of our proof of directed
univalence, which will eventually amount to a proof that $\Glue$ lifts to an equivalence.

Fix cocartesians fibrations $\DeclVar{F_0,F_1}{}{X \to \Uni[\Simp]}$ and a cocartesian functor
$\DeclVar{\alpha}{}{\Prod{x : X} F_0\,x \to F_1\,x}$. The \emph{directed gluing} of this data is

\iblock{
  \mrow{\Glue\prn{F_0,F_1,\alpha} : X \times \Int \to \Uni[\Simp]}
  \mrow{
    \Glue\prn{F_0,F_1,\alpha}\prn{x,i} =
    \Sum{f : F_1\prn{x}} i = 0 \to \alpha\prn{x}^{-1}\prn{f}
  }
}

We note that the fibers over $\prn{x,0}$ and $\prn{x,1}$ are given by $F_0\prn{x}$ and $F_1\prn{x}$,
respectively.  Moreover, for each $w : F_0\prn{x}$ there is a map over $\lambda i.\,\prn{x,i}$
connecting $w$ to $\alpha\prn{x,w}$. We show that $\Glue\prn{F_0,F_1,\alpha}$ is iso-inner over
$\Int \times X$ and that the aforementioned collection of edges make this family cocartesian
with transport functor $\alpha$.

In what follows, we assume that $X,F_0,F_1$ and $\alpha$ are all $\GM$-annotated.
These proofs are all routine applications of orthogonality properties, combined with
\cref{lem:primer:pw-left-adjoint-to-left-adjoint}.

\begin{lemma}\label{lem:cocartesian:glue-is-isoinner}
  If $X$ is simplicial, then $\Glue\prn{F_0,F_1,\alpha}$ is iso-inner.
\end{lemma}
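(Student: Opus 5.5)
We show that the projection $\pi : \TotalTy{\Glue\prn{F_0,F_1,\alpha}} \to X \times \Int$ is right orthogonal to $\Lambda^2_1 \to \Delta^2$ and to $\IntIso \to \ObjTerm{}$. Since everything is $\GM$-annotated, it suffices by \cref{ax:cubes-separate} to check orthogonality against $\GM$-maps out of cubes. The key observation is that the total space of $\Glue$ decomposes as a pushout-like "strict gluing". Write $E_k = \TotalTy{F_k}$. A map $h : J \to X \times \Int$ splits into $h_X : J \to X$ and $h_\Int : J \to \Int$. A lift of $h$ then amounts to a section $s_1$ of $F_1$ over $h_X$ together with a section $s_0$ of $F_0$ over the subshape $J_0 = \Compr{j}{h_\Int\prn{j} = 0}$, such that $\alpha \circ s_0 = \Restrict{s_1}{J_0}$. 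This follows by unfolding the definition $\Sum{f : F_1\prn{x}} i = 0 \to \alpha\prn{x}^{-1}\prn{f}$ and using that $i = 0$ is a proposition, so $\prod$ over $J$ of a function out of it is a section over $J_0$. Equivalently, the lifting problem for $\pi$ against $I \to J$ decomposes as a pullback of two problems:
\begin{itemize}
  \item a lifting problem for $E_1 \to X$ against $I \to J$, and
  \item a lifting problem for $E_0 \to X$ against $I_0 \to J_0$ (the restriction to the $0$-face),
\end{itemize}
compatible over $\alpha$. Since orthogonality is stable under pullbacks of such diagrams (the relevant types of lifts form a pullback), it suffices that both $F_1$ and $F_0$ are orthogonal to $I \to J$ and to $I_0 \to J_0$ respectively.

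For $F_1$ this is immediate, as it is iso-inner (being cocartesian), and orthogonality holds against $\Lambda^2_1 \to \Delta^2$ and $\IntIso \to \ObjTerm{}$ for any base map. For $F_0$ we must identify $I_0 \to J_0$ for each relevant simplex $h_\Int : \Delta^2 \to \Int$ (respectively $\IntIso \to \Int$). This is the main obstacle. For $\IntIso$, simpliciality of $X$ matters less than the fact that $\Int$ is a category in which every map $\IntIso \to \Int$ is constant. So either $J_0 = \emptyset$ or $J_0 = J$, and both cases are trivial or covered by iso-innerness of $F_0$.

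For $\Delta^2 \to \Int$, we first reduce via the Segal property: $\Int$ is a category, so a $2$-simplex in $\Int$ is determined by its spine, with values $i_0 \le i_1 \le i_2$ on vertices. Here we must be careful, since $\Int$ is not totally ordered and the $0$-locus need not be a sub-simplex. I would therefore work only with $\GM$-maps $h_\Int$, where \cref{ax:int-global-points} together with $\Delta^2 \to \Int$ being a $\GM$-map of categories should force $h_\Int$ to factor through a degeneracy onto an initial face. The $0$-locus $J_0$ is then a downward-closed face ($\emptyset$, $\brc{\bar0}$, $\Delta^{\brc{\bar0\to\bar1}}$, or all of $\Delta^2$), and $I_0 = J_0 \cap \Lambda^2_1$. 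In each case $I_0 \to J_0$ is either an isomorphism or $\Lambda^2_1 \to \Delta^2$, so orthogonality follows from innerness of $F_0$. This case analysis on $\GM$-simplices of $\Int$, and justifying that it suffices (using simpliciality of $X$ and of the fibers to restrict to simplices as in \cref{cor:proofs:spines}), is where I expect the real work to lie. The decomposition and the $F_1$ part are routine.
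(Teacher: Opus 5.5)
Your overall route is the paper's: reduce via \cref{cor:proofs:spines} to $\GM$-simplices $b : \Delta^n \to X \times \Int$, observe that the zero locus of the $\GM$-map $\Proj[1] \circ b$ is an initial face, conclude by innerness, and treat the Rezk part separately. Your explicit splitting of sections (a section of $F_1$ over $J$ and one of $F_0$ over $J_0$, matched by $\alpha$ on $J_0$) plays the role of the paper's reduction to $F_1 = \lambda \_.\,\ObjTerm{}$ via the factorization $\Glue\prn{F_0,F_1,\alpha} \to \prn{\Sum{x : X} F_1\prn{x}} \times \Int \to X \times \Int$. It works, but the pullback has a third corner, so you also need $F_1$ to be orthogonal to $I_0 \to J_0$. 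This holds in every case you list.

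The step you flag as the real work is not justified by what you cite, though. What you cite, \cref{ax:int-global-points}, only concerns the global points of $\Int$. It says nothing about which $\GM$-maps $\Int \to \Int$ exist, let alone $\Delta^n \to \Int$, and functoriality does not fill this in. The missing ingredient is the duality axiom of \cref{sec:axioms}: a $\GM$-map $\Delta^n \to \Int$ is a $\GM$-element of $\Int\brk{x_0 \le \dots \le x_n}$, hence $0$, $1$, or a coordinate.

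Second, \cref{cor:proofs:spines} requires $\GM$-orthogonality against $\Spine^n \to \Delta^n$ for \emph{every} $n$. Its proof, via \cref{lem:proofs:segal-from-spine}, uses the $(n+1)$-simplex case, so analysing $\Lambda^2_1 \to \Delta^2$ alone does not suffice. Your argument does extend verbatim: the zero locus of the $k$-th coordinate is the initial face $\Delta^k$, which meets $\Spine^n$ in $\Spine^k$, and \cref{lem:proofs:spine-from-segal} applies to $F_0$ and $F_1$.

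Finally, for the Rezk part your dichotomy $J_0 \in \brc{\emptyset, J}$ only covers global points, whereas cube separation would force you to handle maps $\Int^m \to X \times \Int$. It is simpler to observe, as the paper does, that orthogonality to $\IntIso \to \ObjTerm{}$ is a fiberwise condition. Each fiber $F_1\prn{x} \times_{F_1\prn{x}^{\prn{i = 0}}} F_0\prn{x}^{\prn{i = 0}}$ is then Rezk because Rezk types are closed under pullbacks and exponentials.
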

\begin{proof}
  First, we note by elementary manipulation of closure properties, it suffices to consider the case
  where $F_1 = \lambda \_.\,\ObjTerm{}$
  (use the 3-for-2 fact available for iso-inner families with the factorization
  $\Glue\prn{F_0,F_1,\alpha} \to \prn{\Sum{x : X} F_1\prn{x}} \times \Int \to X \times \Int$).

  We must show that $\Spine^n \to \Delta^n$ is $\GM$-orthogonal to $\Glue\prn{F_0,F_1,\alpha}$ by
  \cref{cor:proofs:spines}. To this end, fix a map $\DeclVar{b}{\GM}{\Delta^n \to X \times \Int}$
  along with a partial section:
  \[
    \DeclVar{s_0}{\GM}{\prn{v : \Spine^n} \to \Glue\prn{F_0,F_1,\alpha}\prn{b\prn{v}}}
  \]
  We must show that $s_0$ extends uniquely. Let us begin by investigating $i = \Proj[1] \circ b$, a
  $\GM$-annotated map $\Delta^n \to \Int$. By duality, such a map corresponds to a $\GM$ element of
  $\Int\brk{x_0 \le \dots \le x_n}$ so it is either $0$, $1$, or $x_k$ for some particular $0 \le k
  \le n$. If we are in the case of $i = \lambda \_.\,0$ or $i = \lambda \_.\,1$, then the conclusion
  is immediate from the innerness of $F_0$. If we are instead in the case where $i =
  \Con{dual}\prn{x_k}$, we must proceed differently. We note that in such a case, $i\prn{v} = 0$ if
  and only if $v_k = 0$. This means that $s_0$ has the type  $\prn{v : \Spine^{k}} \to
  F_0\prn{b\prn{v, 0 \dots}}$ and we must construct a unique extension $s$ of type $\prn{v :
  \Delta^{k}} \to F_0\prn{b\prn{v, 0, \dots}}$. This is immediate by the innerness of
  $F_0\prn{b\prn{-, 0 \dots}}$ which in turn is a consequence of the innerness of $F$.

  To show the ``iso'' part of iso-innerness, we note that this can be checked fiberwise. However,
  over each fiber, this is immediate by the fact that Rezk types are an exponential ideal and $F_0$
  is iso-inner.
\end{proof}

\begin{lemma}\label{thm:cocartesian:glue-is-cocart}
  If $X$ is a category, then $\Glue\prn{F_0,F_1,\alpha}$ is cocartesian.
\end{lemma}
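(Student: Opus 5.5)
We show that $G \defeq \Glue\prn{F_0,F_1,\alpha}$ is cocartesian over $X \times \Int$ by verifying the hypotheses of \cref{thm:cocartesian:lcc-vs-cc}. By \cref{lem:cocartesian:glue-is-isoinner}, $G$ is iso-inner, since a category $X$ is in particular simplicial. Each fiber $G\prn{x,i}$ is simplicial: it is built from $F_1\prn{x}$ and fibers of $\alpha\prn{x}$ by $\Sum{}$ and exponentiation by the proposition $i = 0$, and simplicial types are closed under these constructions. It therefore remains to produce locally cocartesian lifts and to show that they compose. Both conditions can be tested after pulling back along arrows $\Int \to X \times \Int$ (respectively, squares or 2-simplices), and $\GM$-annotated test maps suffice.

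\textbf{Step 1: locally cocartesian lifts.} Fix an arrow $\prn{u,i} : \Int \to X \times \Int$. As in the proof of \cref{lem:cocartesian:glue-is-isoinner}, the $\GM$-map $i : \Int \to \Int$ is either constant $0$, constant $1$, or the identity (it cannot be a negation or decreasing map). In the constant cases the restricted family is $F_0 \circ u$ or $F_1 \circ u$, and we take its cocartesian lifts. In the identity case, given $w : F_0\prn{u\,0}$, we take the composite of the cocartesian lift $\bar u$ of $u$ at $w$ in $F_0$ (over $\prn{u,0}$) with the canonical gluing edge from $u_!w$ to $\alpha\prn{u\,1,u_!w}$ over $\prn{u\,1,-}$. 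Equivalently, we use the diagonal of the square $\lambda \prn{s,t}.\,\prn{u\,s,t}$.

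We then check local cocartesianness using \cref{lem:proofs:lcc-char}. Reduce to $X = \Int$. A dependent arrow out of $w$ over $\lambda t.\,\prn{u\,t,t}$ is, by definition of $\Glue$, an arrow $w \to \bullet$ in $F_1$ over $u$ whose $0$-end is identified with $\alpha\prn{w}$. Such arrows correspond via the cocartesian lift of $u$ in $F_1$ to vertical arrows out of $u_!\alpha\prn{w}$. Because $\alpha$ is a cocartesian functor, $u_!\alpha\prn{w} = \alpha\prn{u_!w}$, which is exactly the codomain of our lift. This gives the equivalence $a^*$ required by \cref{lem:proofs:lcc-char}.

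\textbf{Step 2: locally cocartesian lifts compose.} Given a 2-simplex in $X \times \Int$, its $\Int$-component is again a $\GM$-map $\Delta^2 \to \Int$, so it is constant or a coordinate projection. In the constant cases we reduce to composition of cocartesian arrows in $F_0$ or $F_1$. In the mixed cases, the composite of a lift in $F_0$ followed by a gluing-type lift is the gluing-type lift of the composite, by functoriality of cocartesian transport; the same holds for a gluing-type lift followed by a lift in $F_1$. Each case again uses the identification $u_!\alpha = \alpha u_!$.

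The main obstacle is Step 1 in the gluing case: identifying the dependent homs out of $w$ in $G$ with homs in $F_1$ out of $\alpha\prn{w}$, while correctly tracking the proposition $i = 0$ along the arrow. To handle this, I would first pull back all the way to $X = \Int$ and use the right-orthogonality of the inclusion $\brc{0} \to \Int$ against $F_1$'s cocartesian edges. This reduces the question to a single square $\Int \times \Int$, where the computation is a fiberwise comparison of $\Sum{}$-types.
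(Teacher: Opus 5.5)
\textbf{There is a genuine gap.} Your route differs from the paper's: you verify the hypotheses of \cref{thm:cocartesian:lcc-vs-cc}, whereas the paper uses the LARI criterion of \cref{prop:cocartesian:lari}. Your computations on $\GM$ data are sound. Your lift has $F_1$-component $\alpha\prn{\bar u}$, which is cocartesian in $F_1$, so it agrees with the paper's $f^1 = \lambda j.\,x\prn{- \land j}_!\prn{f^1_0}$. The check via \cref{lem:proofs:lcc-char} and the mixed composition cases are also fine for $\GM$ arrows.

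The gap is the sentence ``$\GM$-annotated test maps suffice.'' To apply \cref{thm:cocartesian:lcc-vs-cc} you need $\Con{hasLCCLifts}$ and $\Con{LCCLiftsCompose}$ as internal statements, quantifying over \emph{all} $f : \Int \to X \times \Int$ (resp.\ all squares). These are propositions quantified over a non-discrete type, not equivalences of types. Such a proposition can hold at every $\GM$-point without holding internally. For example, $\Prod{i : \Int} \prn{i = 0} \lor \prn{i = 1}$ holds at every $\GM$-point by \cref{ax:int-global-points}, yet fails internally. Applying \cref{ax:cubes-separate} to the total space of the predicate does not rescue this. It asks you to treat $\GM$-families $\Int^n \times \Int \to X \times \Int$ at an internal parameter $t : \Int^n$. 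The $\Int$-component of such an arrow can be, e.g., $\lambda s.\,t_1 \land s$. So the trichotomy ``constant $0$, constant $1$, or identity,'' on which both of your steps rest, is unavailable.

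This is exactly where the paper spends its key tool. It rephrases cocartesianness as the existence of a left adjoint right inverse to $\prn{\TotalTy{G}}^\Int \to \prn{\TotalTy{G}}^{\brc{0}} \times_{X \times \Int} \prn{X \times \Int}^\Int$. Then \cref{lem:primer:pw-left-adjoint-to-left-adjoint} legitimately lets that adjoint be built on $\GM$-objects only, i.e.\ a $\GM$ arrow of $X \times \Int$ with a $\GM$ starting point. The initiality check is then done on $\GM$ data, where the case split over $\GM$ maps $\Int \times \Int \to \Int$ is valid.

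To repair your argument you would need one of two things:
\begin{itemize}
\item An adjoint-theoretic reduction of this kind. At that point you are essentially on the paper's route, and \cref{thm:cocartesian:lcc-vs-cc} becomes superfluous, since the adjoint's universal property already encodes full cocartesianness.
\item A uniform construction and verification of locally cocartesian lifts over arbitrary $i = \lambda t.\,a \lor \prn{b \land t}$ with non-global $a \le b$. Your proposal does not attempt this.
\end{itemize}
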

\begin{proof}
  In light of \cref{lem:cocartesian:glue-is-isoinner}, everything involved is a (simplicial) category.  
  Accordingly, we may use the LARI condition of \cite{buchholtz:2023} to prove this result.
  Applying the results of \cite{gratzer:2025}, we are then further reduced to constructing this
  adjoint on objects.

  Fix $\DeclVar{i}{\GM}{\Int \to \Int}$ along with $\DeclVar{x}{\GM}{\Int \to X}$,
  $\DeclVar{f^1_0}{\GM}{F_1\prn{x}}$ and
  $\DeclVar{f^0_0}{\GM}{i\prn{0} = 0 \to \alpha^{-1}_x\prn{f^1_0}}$. We begin by constructing a lift of
  $\prn{f^1_0,f^0_0}$ and then we argue that it is suitably initial.

  First, let us note that since $\alpha$ preserves cocartesian edges, if we have $g : F_0\prn{x}$
  which lies over $f : F_1\prn{x}$, then $x_!g_0$ lies over $x_!f_0$ by a contractible choice of
  path since $\alpha\prn{x_!g_0}$ is a cocartesian lift of the same data as $x_!f_0$. Consequently,
  we may construct the desired lifts:
  \[
    f^1 = \lambda j : \Int.\,x\prn{- \land j}_!\prn{f^1_0}
    \quad
    f^0 = \lambda j : \Int, z : i\prn{j} = 0.\,x\prn{- \land j}_!\prn{f^0_0\prn{\_}}
  \]

  Assume now we are given $\alpha : \Int \times \Int \to \Int$ and $\chi : \Int \times \Int \to X$
  such that $i = \alpha\prn{0,-}$ and $\chi\prn{0,-} = x$ (we silently replace $x$ and $i$ by
  transport along these paths to treat them as reflexivity in what follows). We also assume we are
  given partial lifts:
  $g^1 : \prn{j : \Int} \to F_1\prn{\chi\prn{1,j}}$
  $g^0 : \prn{j : \Int} \to \alpha\prn{1,j} = 0 \to \alpha_{\chi\prn{1,j}}^{-1}\prn{g^1\prn{j}}$
  and
  $h^1_0 : \prn{k : \Int} \to F_1\prn{\chi\prn{k,0}}$.
  $h^0_0 : \prn{k : \Int} \to \alpha\prn{k,0} = 0 \to \alpha^{-1}_{\chi\prn{k,0}}\prn{h^1_0\prn{k}}$.
  We may further assume that all of these are $\GM$-annotated and that there are paths
  $q : \prn{g^1\prn{0},g^0\prn{0}} = \prn{h^1_0\prn{1},h^0_0\prn{1}}$
  and
  $p : \prn{f^1\prn{0},f^0\prn{0}} = \prn{h^1_0\prn{0},h^0_0\prn{0}}$

  We wish to show that there is a unique extension $\prn{h^1_0,h^0_0}$ to all of $\Int \times \Int$
  which matches with $h_0$, $e$, and $f$ over $p$ and $q$. First, we note that we may uniquely
  extend $h^1_0$ to $h^1$ since $F_1$ is cocartesian. Let us therefore replace $h^1_0$, $f^1$, and
  $g^1$ with $h^1$ so that our new goal is to construct an extension of $h^0_0$ given the following:
  \begin{gather*}
    h^0_0 : \prn{k : \Int} \to \alpha\prn{k,0} = 0 \to \alpha^{-1}_{\chi\prn{k,0}}\prn{h^1\prn{k,0}}
    \\
    f^0 : \prn{j : \Int} \to \alpha\prn{0,j} = 0 \to \alpha^{-1}_{\chi\prn{0,j}}\prn{h^1\prn{k,0}}
    \\
    g^0 : \prn{j : \Int} \to \alpha\prn{1,j} = 0 \to \alpha^{-1}_{\chi\prn{1,j}}\prn{h^1\prn{k,1}}
  \end{gather*}

  To prove this, we must perform a somewhat lengthy case analysis on $\alpha$. Since it is
  $\GM$-annotated, we know that it is a $\GM$-element of $\Int\brk{x_0,x_1}$ and we can analyze it
  somewhat extensively.

  \begin{description}
    \item[Case.] $\alpha\prn{0,-} = \lambda \_.1$.
      \leavevmode

      In this case, $\alpha = \lambda \_.1$ by monotonicity, and so any extension is necessarily
      trivial.
    \item[Case.] $\alpha\prn{0,-} = \lambda \_.0$.
      \leavevmode

      Here, we have several sub-cases to consider:
      \begin{description}
        \item[Case.] $\alpha\prn{1,-} = \lambda \_.0$.
          \leavevmode

          In this case, the condition $\alpha\prn{-,-} = 0$ holds in all cases, so this reduces
          precisely to the fact that $F_0$ is cocartesian. In particular, we note that we may extend
          $h^0$ in $F_0$ (not in the fiber) using the fact that $f^0$ is cocartesian. This extension
          is unique by construction and, since the input to the extension lies over $h^1$, it lives
          in the correct fiber (uniquely).

        \item[Case.] $\alpha\prn{1,-} = \lambda j.\,j$.
          \leavevmode

          In this case, we must construct a lift of the following type:
          \[
            \prn{k,j : \Int} \to k \land j = 0 \to \alpha^{-1}_{\chi\prn{k,j}}\prn{h^1\prn{k,j}}
          \]
          Given $k,j : \Int$, since everything is simplicial we may assume that $k \le j$ or $j \le
          k$. In other words, our condition is equivalent to $k = 0$ or $j = 0$; any extension is
          fully determined by the boundary conditions.

        \item[Case.] $\alpha\prn{1,-} = \lambda \_.1$.
          \leavevmode

          In this case, $\alpha = \lambda \prn{k,j}.\, k$ and so we may just take $h_0 = h^0_0$.
      \end{description}

    \item[Case.] $\alpha\prn{0,-} = \lambda j\,.j$.
      \leavevmode

      Here, we have several sub-cases to consider:
      \begin{description}
        \item[Case.] $\alpha\prn{1,-} = \lambda j.\,j$.
          \leavevmode

          In this case, $\alpha\prn{k,j} = j$ and so we may simply take $h^0 = f^0$.

        \item[Case.] $\alpha\prn{1,-} = \lambda \_.1$.
          \leavevmode

          In this case, $\alpha\prn{k,j} = k \lor j$, so our condition $\alpha\prn{k,j} = 0$ amounts
          to $k = 0 \land j = 0$. Consequently, we may take $h = f^0_0$. \qedhere
      \end{description}
  \end{description}
\end{proof}

\begin{corollary} \label{cor:cocartesian:transport}
  Cocartesian transport from
  $\Glue\prn{F_0,F_1,\alpha}\prn{-,0}$ to $\Glue\prn{F_0,F_1,\alpha}\prn{-,1}$
  is given by $\alpha$.
\end{corollary}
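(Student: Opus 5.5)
Cocartesian transport along an edge $u$ sends $a$ to the codomain of the (unique) cocartesian lift of $u$ at $a$. So it suffices to exhibit, for each $x : X$ and $w : F_0\prn{x} = \Glue\prn{F_0,F_1,\alpha}\prn{x,0}$, a cocartesian lift of $\lambda i.\,\prn{x,i}$ starting at $w$ and ending at $\alpha\prn{x,w}$. The natural candidate is the edge from the discussion preceding \cref{lem:cocartesian:glue-is-isoinner}:
\[
  e_w = \lambda i.\,\prn{\alpha\prn{x,w}, \lambda \_.\,\prn{w,\Refl}} : \prn{i : \Int} \to \Glue\prn{F_0,F_1,\alpha}\prn{x,i}.
\]
This is well typed: on the locus $i = 0$ the second component must lie in $\alpha\prn{x}^{-1}\prn{\alpha\prn{x,w}}$, and $\prn{w,\Refl}$ does. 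Its endpoints are $w$ (using the identification of the fiber over $\prn{x,0}$ with $F_0\prn{x}$) and $\alpha\prn{x,w}$.

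The plan is to identify $e_w$ with the lift built in the proof of \cref{thm:cocartesian:glue-is-cocart}. There, for $\DeclVar{i}{\GM}{\Int\to\Int}$ and $\DeclVar{x}{\GM}{\Int \to X}$, the LARI is constructed by taking $f^1 = \lambda j.\,x\prn{-\land j}_!\prn{f^1_0}$ in $F_1$, with $f^0$ defined on the locus $i\prn{j} = 0$. Specialize to the base edge $j \mapsto \prn{x,j}$ with $x$ constant and $i = \ArrId{}$. Then $x\prn{-\land j}$ is an identity arrow, so $f^1$ is constant at $f^1_0 = \alpha\prn{x,w}$ by functoriality of transport ($\ArrId{}_! = \ArrId{}$). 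Likewise, the locus $i\prn{j} = 0$ reduces to $j = 0$, where $f^0$ is just $f^0_0 = \prn{w,\Refl}$. Hence this lift agrees with $e_w$, and the LARI argument of that proof shows it is cocartesian. Because cocartesian lifts are unique, the transport functor therefore sends $w \mapsto \alpha\prn{x,w}$.

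To pass from $\GM$-elements to all elements, I would argue as follows. The statement is an identification of two functions $\Prod{x:X} F_0\prn{x} \to F_1\prn{x}$ between categories (or, fiberwise, a natural isomorphism). Since everything is $\GM$-annotated, \cref{lem:primer:pw-iso-to-iso} reduces the comparison to $\GM$-elements. Alternatively, one can directly observe that $e_w$ is cocartesian for arbitrary $w$: \cref{lem:proofs:lcc-char}-style lifting is checked pointwise, and the same case analysis from \cref{thm:cocartesian:glue-is-cocart} applies.

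I expect the main obstacle to be this last step of justifying that the specific edge $e_w$ is cocartesian for arbitrary, not merely $\GM$, $w$. The proof of \cref{thm:cocartesian:glue-is-cocart} works with $\GM$-annotated data and the object-level criterion of \cite{gratzer:2025}. My first attempt would be to note that the LARI produced there is unique and sends each $\GM$-object to the lift computed above. The general $e_w$ and the LARI's value then agree as functions out of $\Sum{x:X}F_0\prn{x}$, because both are determined, via \cref{lem:primer:pw-iso-to-iso} applied to the comparison isomorphism, by their values on $\GM$-elements.
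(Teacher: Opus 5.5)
Your proposal is correct and follows essentially the same route as the paper: the same functorial family of edges $\lambda i.\,(\alpha(w), \lambda \_.\,(w,\Refl))$, a reduction to $\GM$-elements via the pointwise-invertibility result (\cref{lem:primer:pw-iso-to-iso}), and then reading off cocartesianness from the explicit lift in the proof of \cref{thm:cocartesian:glue-is-cocart}, where transport along a constant edge is trivial; your explicit specialization of $f^1$ and $f^0$ simply spells this out. Drop the aside about checking $e_w$ directly for arbitrary $w$ by the same case analysis: that analysis classifies maps into $\Int$ by duality and so only works for $\GM$-annotated base squares, which is exactly why the pointwise reduction is needed.
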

\begin{proof}
  First, we note that given $f : \Glue\prn{F_0,F_1,\alpha}\prn{c,0} \cong F_0\prn{c}$, there is a
  functorial choice of edges:
  \[
    \lambda i.\, \prn{\alpha\prn{f}, \lambda \_.\,\prn{f,\Refl}} :
    \prn{i : \Int} \to \Glue\prn{F_0,F_1,\alpha}\prn{c,i}
  \]
  To show the desired identification, it suffices to show that this edge is cocartesian and, using
  the standard result that a natural transformation is an equivalence if and only if it is pointwise
  such, we restrict our attention to the case where $\DeclVar{c}{\GM}{C}$. This, however, is
  immediate in light of the above proof---in particular, cocartesian transport along a constant edge
  in $C$ is trivial.
\end{proof}
\begin{corollary}
  \label{cor:cocartesian:proj-cocart}
  The projection map $\Proj[0] : \Glue\prn{F_0,F_1,\alpha} \to F_1 \circ \Proj[0]$ is a cocartesian
  functor over $X \times \Int$.
\end{corollary}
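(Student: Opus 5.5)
We have to show that the projection $\Proj[0]$, sending $\prn{f, g}$ to $f$, is a map of families over $X \times \Int$ from $\Glue\prn{F_0,F_1,\alpha}$ to $F_1 \circ \Proj[0]$ that preserves cocartesian arrows. Both families are cocartesian: the first by \cref{thm:cocartesian:glue-is-cocart}, the second because it is the pullback of a cocartesian family along $\Proj[0] : X \times \Int \to X$. Preserving cocartesian arrows is a proposition on the functor, and everything is $\GM$-annotated. We therefore argue as in \cref{cor:cocartesian:transport}: we first describe the cocartesian lifts in $\Glue\prn{F_0,F_1,\alpha}$ explicitly, and then observe that their images are cocartesian.

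\textbf{Step 1: identify the cocartesian lifts.} In a cocartesian family over a category, a morphism is cocartesian iff it is equivalent to the chosen lift of its image. So it suffices to check that $\Proj[0]$ sends the cocartesian lifts constructed in the proof of \cref{thm:cocartesian:glue-is-cocart} to cocartesian arrows of $F_1 \circ \Proj[0]$. That proof builds the lift of an arrow $\prn{x,i}$ starting at $\prn{f^1_0, f^0_0}$ as the pair $\prn{f^1, f^0}$, where
\[
  f^1 = \lambda j.\, x\prn{- \land j}_!\prn{f^1_0}.
\]
This is exactly the cocartesian lift in $F_1$ of $x$ starting at $f^1_0$. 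The projection discards $f^0$ and returns $f^1$.

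\textbf{Step 2: cocartesianness in the pulled-back family.} An arrow of $F_1 \circ \Proj[0]$ over $\prn{x,i}$ is cocartesian iff its underlying arrow in $F_1$ over $x$ is cocartesian. This holds because cocartesian arrows are stable under pullback: the lifting problem against $\Lambda^2_0 \to \Delta^2$ for the pullback reduces to the one for $F_1$, since the extra $\Int$-coordinate is determined by the base. Hence $f^1$ is cocartesian in $F_1 \circ \Proj[0]$.

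\textbf{Step 3: conclude.} By uniqueness of cocartesian lifts, every cocartesian arrow of $\Glue\prn{F_0,F_1,\alpha}$ is identified with one of these constructed lifts. Therefore $\Proj[0]$ sends every cocartesian arrow to a cocartesian arrow.

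The main obstacle is Step 1, namely confirming that the lifts written down in \cref{thm:cocartesian:glue-is-cocart} really are the cocartesian ones. That proof verifies the LARI condition only on $\GM$-objects, via the results of \citet{gratzer:2025}. To handle this, I would reduce to $\GM$-annotated arrows $\DeclVar{u}{\GM}{\Int \to X \times \Int}$ and starting points, which is legitimate because the statement is a $\GM$-stable proposition. Then I would read off from the case analysis there that, in every case, the $F_1$-component of the extension is obtained from the cocartesian extension in $F_1$. Indeed, the proof first extends $h^1_0$ to $h^1$ using only that $F_1$ is cocartesian, before handling the $F_0$-component.
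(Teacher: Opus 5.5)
Your argument is essentially the one the paper leaves implicit. The corollary is stated without proof and is read off from the construction in the proof of \cref{thm:cocartesian:glue-is-cocart}, where the $F_1$-component $f^1$ of each lift is the cocartesian lift in $F_1$, combined with pullback-stability and uniqueness of cocartesian lifts. The one step to tighten is the reduction to $\GM$-annotated arrows. Calling the statement a ``$\GM$-stable proposition'' does not by itself license checking only $\GM$-data: for instance, $t = 0 \lor t = 1$ holds at every $\GM$-point of $\Int$ but not internally. Instead, phrase preservation of cocartesian arrows as invertibility of the Beck--Chevalley comparison between the two LARIs, and check it on $\GM$-objects via \cref{lem:primer:pw-iso-to-iso}, exactly as the paper does for $\Con{glue}$.
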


\section{The universe of amazingly cocartesian types}
\label{sec:cat}

We now turn to the construction of $\Cat$. As mentioned in
the introduction, $\Cat$ will be a subtype of $\Uni$ and therefore must be classified by a proposition
$\Uni \to \Prop$. The most obvious choice of proposition is something akin to being
cocartesian,
but a moment's thought reveals this is unworkable: if we are to define a map $\Con{isCocartFib} :
\Uni \to \Prop$, what should the input be cocartesian over? Cocartesianness is a property of
families!

To fix this, we follow \citet{licata:2018} as refined in the context of directed type
theories~\citep{weaver:2020,gratzer:2024}. First, consider a general notion of fibration
$\DeclVar{\Con{isFib}_X}{}{\Uni^X \to \Prop}$. The goal is to define a predicate that
witnesses fibrancy of a type $A$ viewed as a family over the entire ambient context. From
$\DeclVar{\Con{isFib}_\Int}{\GM}{\Uni^\Int \to \Prop}$, \cref{lem:primer:amazing-transpose} yields
precisely such a notion of fibrancy. Moreover, this stronger notion of fibration can be shown to
agree with the classical notion when we restrict attention to $\GM$-annotated families.
We will now apply this construction, leveraging \cref{thm:cocartesian:lcc-vs-cc}.

Let us write $i : \Delta^2 \to \Int^2$ for the canonical inclusion.
Using \cref{lem:primer:amazing-transpose,lem:cat:lcc-prop}, we now transpose $\Con{isInner}\prn{-
\circ i}$,
$\Con{hasLCCLifts}$, and $\Con{LCCLiftsCompose}\prn{- \circ i}$ to obtain elements of $\Uni \to
\Uni$, namely $\Con{aisInner}$, $\Con{aHasLCCLifts}$, and $\Con{aLCCLiftsCompose}$. Here we have
used $i$ as, \eg{}, $\Con{isInner}$ takes $\Delta^2 \to \Uni$ not $\Int^2 \to \Uni$.
We then define $\Cat$:
\[
  \Cat \defeq
  \Sum{A : \Uni[\Simp]}
  \begin{array}{l}
    \Con{isRezk}\,A \times \Con{aIsInner}\,A 
    \\
    {} \times \Con{aHasLCCLifts}\,A \times \Con{aLCCLiftsCompose}\,A
  \end{array}
\]

\begin{lemma}
  $\Cat$ is a subtype of $\Uni[\Simp]$.
\end{lemma}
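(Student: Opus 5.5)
We must show that the projection $\Cat \to \Uni[\Simp]$ is an embedding. Equivalently, the predicate
\[
  A \mapsto \Con{isRezk}\,A \times \Con{aIsInner}\,A \times \Con{aHasLCCLifts}\,A \times \Con{aLCCLiftsCompose}\,A
\]
must be valued in propositions. Propositions are closed under finite products, so it suffices to check each factor separately.

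For $\Con{isRezk}\,A$ this is immediate. It is defined as $\IsEquiv\prn{\Con{const} : A \to A^{\IntIso}}$, and being an equivalence is a proposition.

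For the three transposed predicates the argument is uniform. By construction each of $\Con{aIsInner}$, $\Con{aHasLCCLifts}$ and $\Con{aLCCLiftsCompose}$ is an element $\bar{\phi}$ obtained from \cref{lem:primer:amazing-transpose}. That lemma produces $\bar{\phi}$ as a map $\Uni \to \Prop$, namely the composite $\Uni \to \Prop_\Int \to \Prop$. So $\bar{\phi}\prn{A}$ is a proposition for every $A$ by typing alone, and there is nothing further to prove once the transposition is available.

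The only substantive point is to justify applying \cref{lem:primer:amazing-transpose} in the first place. Its hypothesis is that each input predicate, $\Con{isInner}\prn{- \circ i}$, $\Con{hasLCCLifts}$ and $\Con{LCCLiftsCompose}\prn{- \circ i}$, is a $\GM$-annotated map $\Uni^{\Int^n} \to \Prop$ (with $n = 2$ or $n = 1$). I will check each in turn.
\begin{itemize}
  \item $\Con{isInner}$ is an $\IsEquiv$, hence a proposition.
  \item $\Con{LCCLiftsCompose}$ is a $\Pi$-type over families valued in $\Con{isLocallyCoCart}$, which is itself a $\Pi$ of $\IsContr$. It is therefore a proposition.
  \item $\Con{hasLCCLifts}$ is the delicate case. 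Being a proposition is exactly \cref{lem:cat:lcc-prop}, but that lemma assumes the family is iso-inner and valued in simplicial types, which an arbitrary family $\Int \to \Uni$ need not be.
\end{itemize}

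I expect this last point to be the main obstacle. To handle it, I would transpose the modified predicate $B \mapsto \Con{isInner}\prn{B} \times \Con{isRezk}\text{-fiberwise}\prn{B} \to \Con{hasLCCLifts}\prn{B}$, or more simply the product of the innerness condition with $\Con{hasLCCLifts}$. By \cref{lem:cat:lcc-prop} this is a proposition: under innerness, and with the simpliciality and Rezk assumptions on fibers, the type of lifts is a proposition. The lemma's conclusion then depends only on the predicate being $\Prop$-valued, which the product formulation secures.
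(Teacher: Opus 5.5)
Your argument is correct, and it is the argument the paper leaves implicit (the lemma is stated without proof). $\Cat$ is a $\Sigma$-type over $\Uni[\Simp]$ whose fiber is a product of $\Con{isRezk}\,A$, which is an $\IsEquiv$, and three predicates that are $\Prop$-valued by the construction in \cref{lem:primer:amazing-transpose}.

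You are also right that the paper, when it cites \cref{lem:cat:lcc-prop} to transpose $\Con{hasLCCLifts}$, glosses over that lemma's iso-inner and simplicial hypotheses. Your patch is sound: transpose the implication ``iso-inner and simplicial-valued $\to \Con{hasLCCLifts}$''. This implication is $\Prop$-valued on all of $\Uni^{\Int}$. The remaining conjuncts of $\Cat$ force its antecedent for $\GM$-families, so \cref{thm:cat:univ-cocart} goes through unchanged.

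Your ``more simply'' variant, pairing $\Con{hasLCCLifts}$ with innerness alone, does not suffice. Take the constant family over $\Int$ at a simplicial Segal type that is not Rezk; this family is inner. Its locally cocartesian lifts out of a point are exactly the isomorphisms out of that point, and these need not be unique. So the fiberwise Rezk hypothesis must be part of the predicate.
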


\univcocart

\begin{proof}
  Fix $\DeclVar{A}{\GM}{X \to \Uni[\Simp]}$. Our goal is to show that $A$ factors through $\Cat$ if
  and only if $A$ is cocartesian. To prove this, let us consider the data involved in a
  factorization through $\Cat$. By definition, this is equivalent to factoring through four distinct
  subobjects of $\Uni[\Simp]$: those carved out by $\Con{isRezk}$, $\Con{aIsInner}$,
  $\Con{aHasLCCLifts}$, and $\Con{aLCCLiftsCompose}$. In the latter three cases, we may analyze
  these subobjects using the transpositions used to define them.

  For instance, if $A$ factors through $\Sum{B : \Uni[\Simp]} \Con{aIsInner}\prn{B}$, then there is
  an element of the following type by \cref{lem:primer:amazing-transpose}:
  \[
    \Modify[\GM]{
      \Prod{x : X^{\Int \times \Int}}
      \Con{isInner}\prn{A \circ x \circ i}
    }
  \]
  Such an element exists if and only if $A$ is an inner fibration.

  This reasoning applies to $\Con{aHasLCCLifts}$ and $\Con{aLCCLiftsCompose}$ so we may conclude
  that $A$ factors through $\Cat$ if and only if it is iso-inner, locally cocartesian, and locally
  cocartesian edges compose. The desired bi-implication is then
  \cref{thm:cocartesian:lcc-vs-cc}.
\end{proof}

\section{The category of categories}
\label{sec:dua}

In this section, we leverage \cref{thm:cat:univ-cocart} to prove the crucial properties of $\Cat$.
Namely, we prove $\Cat$ is Segal and Rezk, satisfies directed univalence as described in
\cref{sec:intro}, and is simplicial. Combining these results together we show that $\Cat$ is a
category and, in particular, the category of categories.

\subsection{Classifying cocartesian fibrations}
The main input to the proofs that $\Cat$ is Segal and Rezk
is a characterization of cocartesian fibrations over
$\Delta^n \times C$ where $C$ is a category. To see why, note that by
\cref{thm:cat:univ-cocart}, we know that $\DeclVar{f}{\GM}{X \times \Delta^n \to \Cat}$ is
determined by a cocartesian family over $X \times \Delta^n$. By giving a precise description of
such families, we obtain a more tractable version of, \eg{}, the restriction map
$\Modify[\GM]{\Int^n \times \Delta^2 \to \Cat} \to \Modify[\GM]{\Int^n \times \Lambda^2_1 \to \Cat}$.
This version will be manifestly invertible, and so we can conclude that $\Cat$ is Segal.
A key lemma in this process is the following:
\begin{lemma}
  \label{lem:dua:fw-equiv}
  For $\DeclVar{X}{\GM}{\Uni}$ a category and $\DeclVar{A,B}{\GM}{X \to \Uni}$ cocartesian,
  a cocartesian functor $\alpha : \Prod{x : X} A\prn{x} \to B\prn{x}$ induces an equivalence of
  total categories $\TotalTy{\alpha} : \TotalTy{A} \Equiv \TotalTy{B}$ iff\/
  $\Prod{\DeclVar{x}{\GM}{X}} \IsEquiv\prn{\alpha\prn{x}}$ holds.
\end{lemma}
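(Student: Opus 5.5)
The forward direction is immediate. Suppose $\TotalTy{\alpha}$ is an equivalence. It commutes with the projections to $X$, so it induces equivalences on all fibers. For $\DeclVar{x}{\GM}{X}$, the fiber of $\TotalTy{A} \to X$ over $x$ is $A\prn{x}$, and similarly for $B$. Hence each $\alpha\prn{x}$ is an equivalence; in fact this holds for every $x : X$, not only $\GM$-elements.

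For the converse I would not argue fiberwise for arbitrary $x : X$, since only $\GM$-elements are controlled by the hypothesis. Instead I would use \cref{lem:primer:ff-ess-surj}.

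First, $\TotalTy{A}$ and $\TotalTy{B}$ are categories. Each is iso-inner over the category $X$ with simplicial fibers, and Segal, Rezk and simplicial types are closed under $\Sum{}$ of such families. So $\TotalTy{\alpha}$ is a $\GM$-functor between $\GM$-categories. It then suffices to show it is essentially surjective and fully faithful on $\GM$-elements of $\TotalTy{A}$.

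\textbf{Essential surjectivity.} A $\GM$-element of $\TotalTy{B}$ is a pair $\prn{x,b}$ with $\DeclVar{x}{\GM}{X}$ and $\DeclVar{b}{\GM}{B\prn{x}}$; this uses that $\Modify[\GM]{-}$ commutes with $\Sum{}$ via pullbacks, \cref{prop:primer:modalities}. Since $\alpha\prn{x}$ is an equivalence, $b$ lies in its image up to equality.

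\textbf{Full faithfulness.} Fix $\GM$-elements $\prn{x,a}$ and $\prn{y,a'}$ of $\TotalTy{A}$. The hom-type $\Hom[\TotalTy{A}]{\prn{x,a}}{\prn{y,a'}}$ fibers over $\Hom[X]{x}{y}$, and it suffices to compare the fibers over each $u : \Hom{x}{y}$. Because $A$ is cocartesian, each morphism factors uniquely as vertical after cocartesian. This identifies the fiber over $u$ with $\Hom[A\prn{y}]{u_!a}{a'}$, and likewise the $B$-side with $\Hom[B\prn{y}]{u_!\alpha\prn{x}\prn{a}}{\alpha\prn{y}\prn{a'}}$. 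I would derive this identification from \cref{prop:cocartesian:lari}, or directly from orthogonality to $\Lambda^2_0 \to \Delta^2$ as in the proof of \cref{thm:cocartesian:lcc-vs-cc}.

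Since $\alpha$ is a cocartesian functor, $\alpha\prn{y}\prn{u_!a} = u_!\prn{\alpha\prn{x}\prn{a}}$. Under the identifications above, the map on fibers induced by $\TotalTy{\alpha}$ becomes the action of $\alpha\prn{y}$ on hom-types. That action is an equivalence because $y$ is a $\GM$-element and $\alpha\prn{y}$ is an equivalence.

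\textbf{Main obstacle.} The hard part is the compatibility check in the last step: showing that the map on hom-fibers induced by $\TotalTy{\alpha}$ really corresponds, under the cocartesian factorization, to $\alpha\prn{y}$ on hom-types. This requires tracking that $\alpha$ sends the chosen cocartesian lift of $u$ at $a$ to a cocartesian lift at $\alpha\prn{x}\prn{a}$, and that transport along the resulting contractible identification is compatible with composition. I would handle this by working with the fiber of the restriction map $\rho$ from \cref{def:cocartesian:arrow}, which gives the equivalence between hom-fibers uniformly in $u$, so naturality in $\alpha$ comes for free.
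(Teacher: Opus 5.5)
Your proposal is correct and follows essentially the same route as the paper. Like the paper, you reduce to \cref{lem:primer:ff-ess-surj} and get essential surjectivity from $\alpha\prn{x}^{-1}$ on $\GM$-objects. You then get full faithfulness by factoring morphisms of $\TotalTy{A}$ as vertical after cocartesian, so that, because $\alpha$ preserves cocartesian edges, everything reduces to the action of $\alpha$ on hom-types of a fiber over a $\GM$-object, where it is invertible. The paper phrases this last step through the equivalence between $\Int \to \TotalTy{A}$ and $\Sum{x : \Int \to X}\Sum{a_0,a_1}\Hom[A\prn{x\,1}]{x_!a_0}{a_1}$, and leaves implicit both the trivial forward direction and the compatibility check you flag; you handle both correctly.
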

\begin{proof}
  Since cocartesian families are isofibrations, we know that $\TotalTy{A}$ and
  $\TotalTy{B}$ are both categories themselves. By \cref{lem:primer:ff-ess-surj}, we 
  check that $\TotalTy{\alpha}$ is fully faithful, and essentially surjective.

  Essential surjectivity is straightforward: given $\DeclVar{\prn{x,b}}{\GM}{\TotalTy{B}}$, we
  take $\prn{x,\alpha\prn{x}^{-1}\prn{b}}$ as $\alpha$ is invertible on $\GM$
  elements of $X$. To show that $\TotalTy{\alpha}$ is fully faithful, note that
  transport induces an equivalence between $\Int \to \TotalTy{A}$ and $\Sum{x : \Int \to X}
  \Sum{a_0 : A\prn{x\,0},a_1 : A\prn{x\,1}} \Hom[A\prn{x\,1}]{x_!a_0}{a_1}$. Since $\alpha$
  preserves cocartesian edges, it therefore suffices to show that $\Hom[A\prn{x_1}]{x_!a_0}{a_1}
  = \Hom[B\prn{x_1}]{\alpha\prn{x\,1,x_!a_0}}{\alpha\prn{x\,1,a_1}}$ for $\DeclVar{x}{\GM}{\Int \to X}$ and
  $\DeclVar{a_\epsilon}{\GM}{A\prn{x\,\epsilon}}$. This holds as
  $\alpha\prn{x\,1}$ is invertible.
\end{proof}

Next we show that every cocartesian family $\DeclVar{A}{\GM}{X \times \Int \to \Uni}$, where
$X$ is a category, is of the form $\Glue\prn{A\prn{-,0}, A\prn{-,1}, \lambda x.\,\prn{x,-}_!}$. To
this end, we note the following:
\begin{lemma}
  \label{lem:dua:transport-cocart}
  Given $A$ as above, the transport map
  $\alpha = \lambda x.\,\prn{x,-}_!$
  is a cocartesian functor $A\prn{-,0} \to A\prn{-,1}$.
\end{lemma}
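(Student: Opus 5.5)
We need to show that for each $\DeclVar{u}{}{\Hom[X]{x}{y}}$ and each $\DeclVar{a}{}{A\prn{x,0}}$, the functor $\alpha$ sends a cocartesian lift of $u$ in $A\prn{-,0}$ starting at $a$ to a cocartesian arrow of $A\prn{-,1}$. The plan is to work entirely inside the cocartesian family $A$ over $X \times \Int$ and use functoriality of cocartesian transport together with the interchange of the two coordinates.

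Concretely, let $\bar u : \Int \to A\prn{-,0}$ be the cocartesian lift of $\prn{u,0}$ at $a$, ending at $u_!a$. Consider the square $\prn{k,j} \mapsto \prn{u\,k, j}$ in $X \times \Int$. Its two boundary paths, $\prn{u,1}\circ\prn{x,-}$ and $\prn{y,-}\circ\prn{u,0}$, both compose to the diagonal $\lambda i.\,\prn{u\,i,i}$, by Segalness of $X\times\Int$ (a category) and the two 2-simplices of $\Int^2$. I then lift each path by cocartesian arrows starting at $a$. Along the first path the lift is the cocartesian arrow $a \to \alpha_x a$ followed by the cocartesian lift of $\prn{u,1}$ at $\alpha_x a$. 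Along the second it is $\bar u$ followed by the cocartesian arrow $u_!a \to \alpha_y\prn{u_!a}$.

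Since cocartesian arrows compose and are unique given their source and base, both composites are the cocartesian lift of the diagonal. Uniqueness of the factorization of this diagonal lift, via the 2-simplices of the square, yields a square in $\TotalTy{A}$. Its right edge is $\alpha\prn{\bar u} : \alpha_x a \to \alpha_y\prn{u_!a}$, a morphism of $A\prn{-,1}$ lying over $\prn{u,1}$. To get this I use functoriality of transport: transporting along $\prn{k,-}$ for each $k$ produces a map $\Int \to A\prn{-,1}$ whose value on $\bar u$ is $\alpha\prn{\bar u}$, obtained as the cocartesian extension of $\bar u$ along $\Int \times \Int$. This extension exists because $\prn{\Int\times\Int} \times_{X\times\Int} \TotalTy{A}$ is cocartesian over $\Int^2$, and lifting $\Int\times\brc{0}$ there uses the LARI characterization (\cref{prop:cocartesian:lari}) in the cotensored form. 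The same square also has the top-then-right composite equal to the cocartesian lift of $\prn{u,1}$ at $\alpha_x a$ after the cocartesian arrow $a\to\alpha_x a$. Comparing the two factorizations of the diagonal, $\alpha\prn{\bar u}$ and the cocartesian lift of $\prn{u,1}$ agree after precomposition with the cocartesian edge $a \to \alpha_x a$. Cocartesianness of that edge gives uniqueness of such factorizations (via \cref{lem:proofs:lcc-char}-style equivalence $f^*$ on homs), so $\alpha\prn{\bar u}$ equals the cocartesian lift, and in particular is cocartesian in $A\prn{-,1}$. Here we also use that an arrow of $A\prn{-,1}$ is cocartesian there iff it is cocartesian in $A$, since $A\prn{-,1}$ is a pullback of $A$ and cocartesian arrows are stable under pullback.

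I expect the main obstacle to be making precise that the edge $\alpha\prn{\bar u}$ produced by applying the transport functor coincides with the right edge of the square built from cocartesian extensions. This is essentially the statement that transport is computed by the cocartesian extension functor $\TotalTy{A}^{\brc{0}} \times X^\Int \to \TotalTy{A}^\Int$ (the LARI), applied in families over $\Int$, and that this functor is natural in the $\Int$-direction. To tame this I would first restrict, as in the proof of \cref{thm:cocartesian:lcc-vs-cc}, by pulling $A$ back along $\DeclVar{u}{}{\Int\to X}$. This reduces to a cocartesian family over $\Int \times \Int$, a category whose total space is a category, so that the whole argument becomes a statement about $\GM$-objects and 2-cells. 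We may then also invoke \cref{lem:primer:pw-iso-to-iso} to restrict attention to $\GM$-annotated $u$ and $a$.
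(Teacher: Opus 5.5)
Your argument is correct and is essentially the paper's proof. The paper applies the 3-for-2 (right-cancellation) property of cocartesian arrows to the commuting square $\alpha(\bar u)\circ e = e'\circ \bar u$, where $e : a \to \alpha_x a$ and $e' : u_!a \to \alpha_y(u_!a)$ are the cocartesian transport edges; your comparison of $\alpha(\bar u)$ with the cocartesian lift of $(u,1)$ at $\alpha_x a$ is that cancellation argument unfolded by hand, first via uniqueness of cocartesian lifts of the diagonal and then via uniqueness of factorizations through $e$.

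The obstacle you anticipate does not arise. The square with edge $\alpha(\bar u)$ is simply $\lambda (k,j).\,\ell_k(j)$, where $\ell_k$ is the cocartesian lift of $(u\,k,-)$ at $\bar u\,k$, and this is just the definition of $\alpha$ on arrows. So you need neither the LARI/cotensor construction nor any reduction to $\GM$-annotated $u$ and $a$.
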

\begin{proof}
  Unfolding, this follows from the 3-for-2 condition which holds for cocartesian
  arrows~\citep[Proposition 5.1.8]{buchholtz:2023}.
\end{proof}

Consequently, $B = \Glue\prn{A\prn{-,0},A\prn{-,1},\alpha}$ is a cocartesian family. Moreover, we
can produce a map of families $A \to B$:
\[
  \Con{glue}\prn{x,i,a} = \prn{x,i, \prn{\prn{x,i \lor -}_! a, \lambda p : i = 0.\,p_!a}}
\]
In words, we use cocartesian transport to move $a : A\prn{x,i}$ to $A\prn{x,1}$ and, if $i = 0$ to
begin with, record the original $a$ as well.
\begin{lemma}
  $\Con{glue} : \Prod{p : X \times \Int} A\prn{p} \to B\prn{p}$ is a cocartesian functor.
\end{lemma}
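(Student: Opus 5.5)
We must show that $\Con{glue}$ sends cocartesian arrows of $A$ (over $X \times \Int$) to cocartesian arrows of $B = \Glue\prn{A\prn{-,0},A\prn{-,1},\alpha}$. The plan is to reduce to $\GM$-annotated data and then to the two generating kinds of arrows in $X \times \Int$. Every arrow $\prn{u,k} : \Int \to X \times \Int$ factors as $\prn{u,k\prn{0}}$ followed by $\prn{u\prn{1},k}$, that is, an arrow in the $X$-direction at fixed $i$ followed by an arrow in the $\Int$-direction at fixed $x$. Cocartesian arrows compose and satisfy 3-for-2~\citep[Proposition 5.1.8]{buchholtz:2023}, and every cocartesian arrow of $A$ factors accordingly as a composite of cocartesian lifts of these two pieces. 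It therefore suffices to check that $\Con{glue}$ preserves cocartesian lifts of arrows of each form. Moreover, being a cocartesian functor is a proposition that can be tested on $\GM$-arrows; it is tested fiberwise after pulling back along $\Int \to X\times\Int$, and \cref{ax:cubes-separate} applies. So we may assume the arrow and its source point are $\GM$-annotated.

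\textbf{Arrows in the $\Int$-direction.} Take $\lambda j.\,\prn{x,j}$ and $a : A\prn{x,0}$, with cocartesian lift $j \mapsto \prn{x,j}_{!}$-transport of $a$. Applying $\Con{glue}$ at time $j$ gives first component $\prn{x,j\lor -}_!\prn{\prn{x,-\land j}_!a}$, which by functoriality of transport equals $\prn{x,-}_!a = \alpha\prn{x}\prn{a}$. This component is constant in $j$. At $j=0$ the second component records $a$ itself. This is exactly the canonical edge $\lambda i.\,\prn{\alpha\prn{a},\lambda\_.\prn{a,\Refl}}$ exhibited in the proof of \cref{cor:cocartesian:transport}, which was shown there to be cocartesian. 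The only real work in this case is checking that the paths involved (the functoriality identifications) match up.

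\textbf{Arrows in the $X$-direction.} Fix $i$ to be $0$ or $1$; these are the only $\GM$-points by \cref{ax:int-global-points}. Then $\Con{glue}$ restricted to the fiber over $X\times\brc{i}$ is, up to the identification of $B\prn{-,i}$ with $A\prn{-,i}$, the identity. For $i=1$ this holds because $\prn{x,1\lor -}_!$ is transport along an identity. For $i=0$ the component $\lambda p.\,p_!a$ returns $a$, and the first component is $\alpha$ applied to $a$, which is forced. The restriction of $B$ to $X\times\brc{i}$ is $A\prn{-,i}$ in the $\Glue$ construction. A cocartesian arrow of $A$ over $\prn{u,i}$ is cocartesian in the restricted family, and we need it to be cocartesian in $B$ over $X\times\Int$. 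Here we use that an arrow over a map lying in a pulled-back fiber is cocartesian in $B$ iff it is the cocartesian lift, since cocartesian lifts are unique. In $B$, the cocartesian lift over $\prn{u,0}$ is computed in the proof of \cref{thm:cocartesian:glue-is-cocart} as $x\prn{-\land j}_!$ transport in both components, and this agrees with $\Con{glue}$ applied to the $A$-lift. That agreement uses that $\alpha$ is a cocartesian functor (\cref{lem:dua:transport-cocart}).

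\textbf{Main obstacle.} I expect the hardest part to be the coherence bookkeeping in the $\Int$-direction case: identifying $\Con{glue}$ of the cocartesian lift with the specific canonical edge of $\Glue$, including the paths from functoriality $\prn{x,j\lor-}_!\circ\prn{x,-\land j}_! = \prn{x,-}_!$. My first approach would be to argue via uniqueness of cocartesian lifts instead. It suffices to show that $\Con{glue}$ of the lift has the correct endpoints and lies over the right arrow. Then one shows it is cocartesian directly using \cref{lem:proofs:lcc-char} fiberwise: the relevant map $a^*$ between hom-types becomes, under the $\Glue$ description of the fiber over $1$ as $A\prn{x,1}$, an instance of transport's universal property in $A$.
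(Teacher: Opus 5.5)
Your case analysis holds up, and it spells out what the paper calls ``immediate''. A $\GM$-arrow of $X \times \Int$ has $\Int$-component $0$, $1$, or the identity. The factorization into an $X$-direction piece followed by an $\Int$-direction piece is legitimate by composition and uniqueness of cocartesian lifts. Your identifications with the explicit lift from the proof of \cref{thm:cocartesian:glue-is-cocart}, and with the canonical edge of \cref{cor:cocartesian:transport}, are correct.

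The gap is the step that makes this case analysis available: the reduction to $\GM$-annotated arrows and source points. The reasons you give do not establish it.
\begin{itemize}
  \item Being a proposition does not let you test on $\GM$-points. By \cref{ax:int-global-points}, $i = 0 \lor i = 1$ holds for every $\DeclVar{i}{\GM}{\Int}$, but $\Prod{i : \Int} \prn{i = 0 \lor i = 1}$ does not hold.
  \item \cref{ax:cubes-separate} only reduces an equivalence to $\GM$-maps out of all cubes $\Int^n$. After pulling back, you are left with $\Int^n$-indexed families of arrows whose individual members are not $\GM$.
\end{itemize}
This step carries real weight. For a non-$\GM$ $i$, the fiber $B\prn{x,i} = \Sum{f : A\prn{x,1}} \prn{i = 0 \to \alpha\prn{x}^{-1}\prn{f}}$ is not identified with $A\prn{x,i}$, so your $X$-direction case cannot even be formulated in general.

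The missing idea, which is the whole of the paper's proof, is to recast preservation of cocartesian arrows as invertibility of a natural transformation between functors of categories.
\begin{itemize}
  \item Take $d = \prn{a,u}$ in the category $(\TotalTy{A})^{\{0\}} \times_{X \times \Int} (X \times \Int)^\Int$. The $B$-cocartesian lift $\ell_B$ of $u$ at $\Con{glue}\prn{a}$ admits a unique vertical map $\beta_d$ with $\beta_d \circ \ell_B = \Con{glue}\prn{\ell_A}$, where $\ell_A$ is the $A$-cocartesian lift.
  \item By 3-for-2 for cocartesian arrows, $\Con{glue}\prn{\ell_A}$ is cocartesian iff $\beta_d$ is an isomorphism.
  \item Since $d \mapsto \beta_d$ is a function, it is automatically a natural transformation, namely the Beck--Chevalley transformation. $\Con{glue}$ is a cocartesian functor iff this transformation is invertible \citep[Theorem 5.3.19]{buchholtz:2023}.
  \item \cref{lem:primer:pw-iso-to-iso} then reduces invertibility to $\GM$ objects $d$, that is, to $\GM$ arrows $u$ with $\GM$ source points $a$.
\end{itemize}
From there your case analysis completes the argument.
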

\begin{proof}
  Following \citet[Theorem 5.3.19]{buchholtz:2023}, to check that $\Con{glue}$ is cocartesian, it
  suffices to check that the Beck-Chevalley natural transformation is invertible. By
  \cref{lem:primer:pw-iso-to-iso}, it suffices to check this on $\GM$ elements where it is
  immediate.
\end{proof}

\begin{corollary}
  $\Con{glue}$ is an equivalence of cocartesian families.
\end{corollary}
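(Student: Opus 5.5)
The plan is to apply \cref{lem:dua:fw-equiv} to the cocartesian functor $\Con{glue} : \Prod{p : X \times \Int} A\prn{p} \to B\prn{p}$ over the base $X \times \Int$. The base is a category, since categories are closed under products and $\Int$ is a category. Both $A$ and $B = \Glue\prn{A\prn{-,0},A\prn{-,1},\alpha}$ are $\GM$-annotated cocartesian families: $A$ by hypothesis, and $B$ by \cref{thm:cocartesian:glue-is-cocart} together with \cref{lem:dua:transport-cocart}. The preceding lemma shows that $\Con{glue}$ is a cocartesian functor. It therefore remains only to show that $\Con{glue}$ is a fiberwise equivalence over every $\GM$-element $\DeclVar{\prn{x,i}}{\GM}{X \times \Int}$. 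Once that is done, \cref{lem:dua:fw-equiv} gives an equivalence of total categories over $X \times \Int$, hence an equivalence of families.

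To check the $\GM$-points, note that $\Modify[\GM]{X \times \Int} \Equiv \Modify[\GM]{X} \times \Modify[\GM]{\Int}$ because $\Modify[\GM]{-}$ commutes with products. By \cref{ax:int-global-points}, $\Modify[\GM]{\Int} \Equiv \Modify[\GM]{\Bool}$, so it suffices to treat $i = 0$ and $i = 1$ separately. We may use crisp induction (\cref{ax:crisp-ind}) to substitute the endpoint judgmentally.

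For $i = 1$, the fiber $B\prn{x,1}$ is $\Sum{f : A\prn{x,1}} \prn{1 = 0 \to \dots}$. Since $0 \neq 1$, the second factor is contractible, so $B\prn{x,1} \Equiv A\prn{x,1}$. Under this identification, $\Con{glue}$ sends $a$ to $\prn{x, 1 \lor -}_! a$, which is transport along the constant (identity) edge. By functoriality of cocartesian transport, $\prn{\ArrId{}}_! = \ArrId{}$, so this map is the identity.

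For $i = 0$, the fiber is $\Sum{f : A\prn{x,1}} \prn{0 = 0 \to \alpha\prn{x}^{-1}\prn{f}}$. Since $\Int$ is a set, $0 = 0$ is contractible, so this fiber is equivalent to $\Sum{f} \alpha\prn{x}^{-1}\prn{f} \Equiv A\prn{x,0}$ via the projection to the witness. Under this identification, $\Con{glue}$ sends $a$ to $\prn{\prn{x,-}_! a, \prn{a,\Refl}}$, which projects back to $a$, so again the map is an equivalence.

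The main bookkeeping obstacle is the $i = 0$ case. There one must check that the path component produced by $\Con{glue}$, namely $p_!a$ with the implicit identification $\alpha\prn{x}\prn{a} = \prn{x, 0 \lor -}_! a$, matches the fiber coordinate of $\alpha\prn{x}$. This holds definitionally up to $0 \lor j = j$, so I expect it to reduce to a routine transport computation.
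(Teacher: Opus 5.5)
Your proposal is correct and follows the same route as the paper. You apply \cref{lem:dua:fw-equiv} over $X \times \Int$ and use \cref{ax:int-global-points} to reduce the $\GM$-points of $\Int$ to $0$ and $1$; the fiber over $0$ is immediate, and over $1$ the map is cocartesian transport along an identity edge, hence the identity. The only difference is that you spell out the fiber computations the paper leaves implicit, namely contractibility of $1 = 0 \to \dots$ and of $0 = 0$, and the identification $0 \lor j = j$.
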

\begin{proof}
  Applying \cref{lem:dua:fw-equiv}, it suffices to check this equivalence fiberwise on
  $\GM$-annotated elements $\DeclVar{\prn{x,i}}{\GM}{X \times \Int}$. In particular, it suffices to
  check that induces an equivalence on $\GM$-annotated elements of $\Int$ which, by
  \cref{ax:int-global-points}, consists only of $0$ and $1$. However, over $0$ and $1$ we see that
  $\Con{glue}$ is an equivalence: over $0$, this is immediate and over $1$ it follows from the
  observation that cocartesian transport over the identity arrow is the identity.
\end{proof}

Thus, a cocartesian family $\DeclVar{A}{\GM}{X \times \Int \to \Uni}$ is fully described by its
restrictions to $0$ and $1$ along with the associated transport map. Combining this classification
result in the case where $X = \ObjTerm{}$ with \cref{thm:cat:univ-cocart}, we obtain the
directed univalence principle:
\dua

We note that, in general, we actually obtain an equivalence between
$\Modify[\GM]{X \times \Int \to \Cat}$ and $\Modify[\GM]{\Sum{A_0,A_1 : \Cat^X} A_0 \CoCartTo A_1}$.
For what follows, we also require a similar accounting of cocartesian families
$\DeclVar{A}{\GM}{X \times \Delta^2 \to \Uni}$. The story plays out in much the same way; we define $B_1$
as the following iterated glue type:
\begin{align*}
  &B_0 = \Glue\prn{A\prn{-,\bar{0}}, A\prn{-,\bar{1}}, \lambda x.\,\prn{x,-}_!}
  \\
  &B_1 = \Glue\prn{B_0, A\prn{-,\bar{2}} \circ \Proj[0], \lambda x.\,\prn{x, \bar{1} \lor - \land \bar{2}}_! \circ \Proj[0]}
\end{align*}

Combining \cref{cor:cocartesian:proj-cocart} and \cref{lem:dua:transport-cocart} with
\cref{thm:cocartesian:glue-is-cocart}, we conclude that $B_1$ is a cocartesian family
$X \times \Int^2 \to \Uni$. We then take $B$ to be the restriction of $B_1$ to $X \times \Delta^2$.
The map of families $\Con{glue}$ considered previously easily generalizes to this $\Delta^2$ case
and we may prove the following:
\begin{lemma}
  The map $\Con{glue}_2 : \Prod{p : X \times \Delta^2} A\,p \to B\,p$ is cocartesian.
\end{lemma}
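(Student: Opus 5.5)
We follow the proof of the $\Int$ case: we verify the criterion of \citet[Theorem 5.3.19]{buchholtz:2023}. For a functor between cocartesian families over a category, it suffices to show that the Beck--Chevalley transformation is invertible, i.e.\ that for each arrow $u$ of the base the comparison $u_!\circ\Con{glue}_2 \to \Con{glue}_2\circ u_!$ is an isomorphism. The base $X \times \Delta^2$ is a category, and $A$ and $B$ (the restriction of $B_1$) are cocartesian by \cref{thm:cocartesian:glue-is-cocart}. The Beck--Chevalley transformation is a natural transformation between functors of categories, so \cref{lem:primer:pw-iso-to-iso} reduces invertibility to $\GM$-annotated data. Concretely, it suffices to check that for every $\DeclVar{u}{\GM}{\Int \to X \times \Delta^2}$ and $\DeclVar{a}{\GM}{A(u\,0)}$, the image under $\Con{glue}_2$ of the cocartesian lift of $u$ at $a$ is cocartesian in $B$.

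We then split $u = (x,v)$ with $\DeclVar{v}{\GM}{\Int \to \Delta^2}$. Since $\Delta^2$ is a category and its $\GM$-points are $\bar 0,\bar 1,\bar 2$ by \cref{ax:int-global-points}, every such $v$ factors as a degenerate edge or one of the edges $\bar0\to\bar1$, $\bar1\to\bar2$, $\bar0\to\bar2$. Cocartesian arrows compose and satisfy 3-for-2 \citep[Proposition 5.1.8]{buchholtz:2023}, and $x$ may be factored through vertical-in-$\Delta^2$ and horizontal-in-$X$ pieces. So it suffices to treat two kinds of edges:
\begin{itemize}
  \item edges that are constant in the $\Delta^2$ coordinate, over $x$ in $X$;
  \item edges that are constant in $X$ and are one of the generating edges $\bar0\to\bar1$ or $\bar1\to\bar2$.
\end{itemize}

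For edges of the first kind, at a fixed vertex $\bar k$ the fibre of $B$ over $X\times\{\bar k\}$ is, by the fiber computation of $\Glue$ (fibers over $0$ and $1$ are $F_0$, $F_1$), identified with $A(-,\bar k)$. Under this identification $\Con{glue}_2$ restricts to the identity, up to transport along the identity arrow, which is trivial. Cocartesian edges over $X$ are therefore preserved, exactly as in the $\Int$ case.

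For the edge $\bar0\to\bar1$ we use \cref{cor:cocartesian:transport}: transport in $B_0$ along $0\to1$ is $\lambda x.(x,-)_!$. The edge $\Con{glue}_2$ produces is $\lambda i.((x,i\lor-)_!a,\ldots)$, which by construction is the canonical cocartesian edge of the glue type. The edge $\bar1\to\bar2$ is handled the same way in the outer glue $B_1$, together with \cref{cor:cocartesian:proj-cocart} to see that the first projection respects cocartesian edges.

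The main obstacle is bookkeeping in the iterated glue. We must check that restricting $B_1$ from $\Int^2$ to $\Delta^2$ sends the vertices $\bar0,\bar1,\bar2$ to the intended corners. We must also check that the transport used in $B_1$, namely $(x,\bar1\lor-\land\bar2)_!\circ\Proj[0]$, agrees with the composite transport in $A$, so that the diagonal edge $\bar0\to\bar2$ indeed maps to a cocartesian edge. We plan to handle this by avoiding the diagonal edge entirely via the composition closure described above. We would invoke functoriality of transport, $(vu)_!=v_!\circ u_!$, only where an identification of endpoints is required.
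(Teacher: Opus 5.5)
Your proposal is correct and is essentially the paper's argument: you reduce to $\GM$-annotated data as in the $\Int$ case, and you observe that over the global edges of $\Delta^2$ the iterated glue collapses to a single $\Glue$, so the claim comes down to the canonical glue edges being cocartesian. The difference is only bookkeeping: the paper restricts along the global edges of $\Delta^2$ and appeals to the $\Int$ case over all of $X \times \Int$ at once, whereas you split edges into $X$-direction pieces and the two generating $\Delta^2$-direction edges and then use closure of cocartesian arrows under composition; note that classifying the $\GM$ edges of $\Delta^2$ really uses the duality axiom, not just \cref{ax:int-global-points}.
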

\begin{proof}
  Unfolding, this family sends $\prn{x, i, j}$ to the following type, where we write $\alpha$ for
  cocartesian transport $\bar{0} \to \bar{1}$ and $\beta$ for cocartesian transport $\bar{1} \to
  \bar{2}$:
  \begin{align*}
    \Sum{a_2 : A\prn{\bar{2},x}}
    j = 0 \to \Sum{a_{1} : \beta^{-1}\prn{a_2}}
    i = 0 \to \alpha^{-1}\prn{a_1}
  \end{align*}

  There is then a canonical assignment $\Con{glue}_2 : A\prn{c,i,j} \to B\prn{c,i,j}$ given as follows:
  \[
    \prn{x,i,j,a} \mapsto
    \prn{
      \prn{x,- \lor i, - \lor j}_!\prn{a},
      \lambda \_ : j = 0\,
      \prn{
        \prn{x,- \lor i, 0}_!\prn{a},
        \lambda \_ : i = 0.\,
        a
      }
    }
  \]
  We may then check directly that this assignment preserves cocartesian arrows by unfolding their
  construction in the $\Glue$ and checking this holds on global data once more. In the end, it
  amounts to the proof that $\Con{glue}$ preserves cocartesian edges; when restricted to the global edges
  $0 \le 1$, $1 \le 2$ and $2 \le 3$ we find that the above characterization of $B$ collapses
  to a single application of $\Glue$ along various cocartesian functors.
\end{proof}
Again, an application of \cref{lem:dua:fw-equiv} allows us to conclude that $\Con{glue}_2$ is an
equivalence. Combining these steps once more with \cref{thm:cat:univ-cocart} once more, we conclude
the following:
\begin{corollary}
  Cocartesian transport induces an equivalence

  \noindent
  $\Modify[\GM]{\Cat^{X \times \Delta^2}}
  \Equiv \Modify[\GM]{\Sum{A_0,A_1,A_2 : \Cat^X} A_0 \CoCartTo A_1 \times A_1 \CoCartTo A_2}$.
\end{corollary}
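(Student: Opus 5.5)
We follow the same pattern as the $\Int$ case (directed univalence with parameters $X$). The first step rewrites $\Modify[\GM]{\Cat^{X \times \Delta^2}}$ using \cref{thm:cat:univ-cocart}, with $C \defeq X \times \Delta^2$; this is $\GM$-annotated since $X$ is and $\Delta^2$ is closed. The result is $\GM$-annotated cocartesian families $\DeclVar{A}{\GM}{X \times \Delta^2 \to \Uni}$. Similarly, the right-hand side becomes triples of $\GM$-cocartesian families $A_0, A_1, A_2$ over $X$ together with two cocartesian functors. Here we use that $\Modify[\GM]{-}$ commutes with $\Sum{}$ and pullbacks (\cref{prop:primer:modalities}), so the modal data can be pushed inside componentwise.

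We then define maps in both directions at the level of families. The forward map restricts $A$ along $\bar{0},\bar{1},\bar{2}$ and records the cocartesian transports $\alpha = \lambda x.\,\prn{x,-}_!$ along $\bar{0}\to\bar{1}$ and $\beta$ along $\bar{1}\to\bar{2}$. These are cocartesian functors by \cref{lem:dua:transport-cocart} applied to the restrictions of $A$ to the two edges $X \times \Int \to X \times \Delta^2$. The backward map sends $\prn{A_0,A_1,A_2,\alpha,\beta}$ to the restriction to $X \times \Delta^2$ of the iterated glue
\[
B_1 = \Glue\prn{\Glue\prn{A_0,A_1,\alpha}, A_2 \circ \Proj[0], \beta \circ \Proj[0]}.
\]
This is cocartesian by \cref{thm:cocartesian:glue-is-cocart}, using \cref{cor:cocartesian:proj-cocart} to see that $\beta\circ\Proj[0]$ is a cocartesian functor out of the first glue.

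The round trip starting from a family $A$ is handled by the preceding lemma and its corollary: $\Con{glue}_2 : A \to B$ is a cocartesian functor, and it is fiberwise invertible on $\GM$-points. By \cref{ax:int-global-points}, the $\GM$-points of $\Delta^2$ are just $\bar{0},\bar{1},\bar{2}$, and over each of these $\Con{glue}_2$ is an equivalence because transport along identities is trivial. \Cref{lem:dua:fw-equiv} then makes it an equivalence of total spaces, hence an identification $A = B$ in $\Modify[\GM]{X\times\Delta^2 \to \Cat}$ by univalence.

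The other round trip starts from a triple and computes its restrictions and transports. The fibers of the iterated glue over $\bar{0},\bar{1},\bar{2}$ are $A_0, A_1, A_2$ by direct computation, since $\bar{k}$ makes the relevant $i = 0$, $j = 0$ hypotheses true or false definitionally up to contractible choice. The transports are identified with $\alpha$ and $\beta$ by \cref{cor:cocartesian:transport}, applied to each layer of the glue. The inner layer gives $\alpha$; for the outer layer, restrict to the edge $\bar{1}\to\bar{2}$, where $\Glue\prn{A_0,A_1,\alpha}$ is constant at $A_1$ and the outer glue reduces to $\Glue\prn{A_1,A_2,\beta}$.

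The main obstacle is coherence. A plain bi-implication is not enough: both composites must be homotopic to the identity, and the second must respect the identification of transports, which are themselves paths in function types. I would handle this by showing the forward map is an equivalence through contractible fibers rather than building explicit homotopies. The first round trip shows that every $A$ lies in the image of the backward map. Given a triple, the fiber of the forward map over it is then equivalent to the type of families $A$ equipped with an equivalence $A \simeq B$ of cocartesian families, where $B$ is the glue of that triple. That type is contractible by univalence, since equivalences of cocartesian families over $X\times\Delta^2$ are detected fiberwise on $\GM$-points via \cref{lem:dua:fw-equiv}.
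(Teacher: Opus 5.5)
Your proposal is correct and follows the paper's route: translate both sides via \cref{thm:cat:univ-cocart}, use the iterated $\Glue$ as the inverse of restriction-plus-transport, and get the round trip on families from $\Con{glue}_2$ being a $\GM$-fiberwise equivalence via \cref{lem:dua:fw-equiv}. Your use of \cref{cor:cocartesian:transport} spells out the other round trip, which the paper leaves implicit. The closing ``coherence'' paragraph should be dropped: the two homotopies you already build form a quasi-inverse, which suffices in HoTT, and as written it asserts without justification that the fiber of the forward map is $\Sum{A}\prn{A \Equiv B}$, which amounts to assuming the forward map is an embedding.
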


\subsection{\texorpdfstring{$\Cat$}{Cat} is Segal and Rezk}

Having characterized both cocartesian fibrations over $X \times \Delta^2$ and $X \times \Int$ for
all categories $X$, it is only slightly more work to prove that $\Cat$ is both Segal and Rezk.

\begin{lemma}
  $\Cat$ is Segal.
\end{lemma}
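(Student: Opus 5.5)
We have to show that restriction $\Cat^{\Delta^2} \to \Cat^{\Lambda^2_1}$ is an equivalence. Since this map and both types are built from $\GM$-annotated data, \cref{ax:cubes-separate} reduces this to checking that
\[
  \Modify[\GM]{\Int^n \to \Cat^{\Delta^2}} \to \Modify[\GM]{\Int^n \to \Cat^{\Lambda^2_1}}
\]
is an equivalence for every $n$. Currying, the source is $\Modify[\GM]{\Cat^{\Int^n \times \Delta^2}}$. The target is the pullback $\Modify[\GM]{\Cat^{\Int^n\times\Int}} \times_{\Modify[\GM]{\Cat^{\Int^n}}} \Modify[\GM]{\Cat^{\Int^n\times\Int}}$, because $\Lambda^2_1 = \Int \Pushout{\ObjTerm{}} \Int$ and $\Modify[\GM]{-}$ commutes with pullbacks (\cref{prop:primer:modalities}). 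We will rewrite both sides with the classifications of the previous subsection, taking $X = \Int^n$, which is a category.

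For the source we use the corollary just proved:
\[
  \Modify[\GM]{\Cat^{\Int^n \times \Delta^2}} \Equiv \Modify[\GM]{\Sum{A_0,A_1,A_2 : \Cat^{X}} (A_0 \CoCartTo A_1) \times (A_1 \CoCartTo A_2)}.
\]
For each factor of the target we use the $\Int$-version obtained before \cref{cor:dua:dua}:
\[
  \Modify[\GM]{X \times \Int \to \Cat} \Equiv \Modify[\GM]{\Sum{A_0,A_1 : \Cat^X} A_0 \CoCartTo A_1}.
\]
Gluing two copies along the common $A_1$ (the fiber over $\bar1$), the target becomes
\[
  \Modify[\GM]{\Sum{A_0,A_1,A_2} (A_0\CoCartTo A_1)\times(A_1\CoCartTo A_2)}.
\]
Here we use once more that $\Modify[\GM]{-}$ preserves pullbacks, so the $\Sum{}$ over a shared $A_1$ matches the fiber product.

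Next we check that, under these identifications, the restriction map becomes the identity on tuples $(A_0,A_1,A_2,\alpha,\beta)$. Restricting the family $B$ over $X\times\Delta^2$ to the edges $\bar0\to\bar1$ and $\bar1\to\bar2$ yields families whose fibers are $A_0,A_1,A_2$ and whose transports are $\alpha$ and $\beta$. This holds because cocartesian transport is stable under pullback and the fibers of $\Glue$ are computed by \cref{cor:cocartesian:transport}. Since the classification equivalences are themselves defined by restriction and transport, the square commutes, and the bottom map is evidently an equivalence. It follows that $\Cat$ is $\GM$-orthogonal to $\Int^n\times\Lambda^2_1 \to \Int^n\times\Delta^2$ for all $n$, and hence, by \cref{ax:cubes-separate}, $\Cat$ is Segal.

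The main obstacle is the compatibility check in the previous paragraph: the triangle equivalence was built from the iterated glue $B_1$ together with $\Con{glue}_2$, whereas the $\Int$-equivalence came from a single $\Glue$. We therefore have to show that restricting $B_1$ along the two edges of $\Lambda^2_1$ recovers the single glues. We plan to do this fiberwise on $\GM$-points, using \cref{ax:int-global-points} together with the unfolded description of $B$ in the proof that $\Con{glue}_2$ is cocartesian. If working directly with identifications proves awkward, a fallback is to observe that both sides are classified by their fibers and transports, and to invoke \cref{lem:dua:fw-equiv} to compare the families.
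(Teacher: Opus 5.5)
Your proposal is correct and follows essentially the same route as the paper. Both reduce via \cref{ax:cubes-separate} to $\GM$-maps out of $\Int^n \times \Delta^2$ and $\Int^n \times \Lambda^2_1$, rewrite both sides using the $\Int$- and $\Delta^2$-classifications of cocartesian families over the category $\Int^n$, and conclude because $\Modify[\GM]{-}$ preserves pullbacks. The paper leaves the compatibility check you flag as the main obstacle implicit, and it is easier than you fear. Both classification equivalences are given in the forward direction by evaluation at vertices and cocartesian transport, and transport is stable under restriction along edges, so you never need to compare the iterated glue $B_1$ with the single $\Glue$.
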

\begin{proof}
  We wish to show that $\Cat^{\Delta^2} \to \Cat^{\Lambda^2_1}$ is an equivalence. Using
  \cref{ax:cubes-separate}, it suffices to show that the following map is an equivalence for all
  $n$:
  $ 
    \Modify[\GM]{\Int^n \times \Delta^2 \to \Cat}
    \to 
    \Modify[\GM]{\Int^n \times \Lambda^2_1 \to \Cat}
  $.

  Note that $\Modify[\GM]{\Int^n \times \Lambda^2_1 \to \Cat}$ is equivalent to the
  following:
  \[
    \Modify[\GM]{\Int^n \times \Int \to \Cat} \times_{\Modify[\GM]{\Int^n \to \Cat}} \Modify[\GM]{\Int^n \times \Int \to \Cat}
  \]
  This follows from the fact that $\Modify[\GM]{-}$ has an internal right adjoint ($\Modify[\SM]{-}$)
  together with the fact that $\Lambda^2_1 = \Int \Pushout{\ObjTerm{}} \Int$. Next, since $\Int^n$
  is a category, the results of the previous section allow us to rephrase the above map into the
  following:
  \begin{align*}
    &\Modify[\GM]{\Sum{A,B,C: \Cat^{\Int^n}} A \CoCartTo B \times B \CoCartTo C} \to
    \\
    &
    \Modify[\GM]{\Sum{A,B : \Cat^{\Int^n}} A \CoCartTo B}
    \times_{\Modify[\GM]{\Cat^{\Int^n}}}
    \Modify[\GM]{\Sum{B,C : \Cat^{\Int^n}} B \CoCartTo C}
  \end{align*}
  This being an equivalence follows immediately from the fact that $\Modify[\GM]{-}$ preserves pullbacks by
  virtue of \cref{ax:crisp-ind}.
\end{proof}

Inspecting this proof, we see that if $P : \Lambda^2_1 \to \Cat$, the resulting composed edge
$f : \Int \to \Cat$ is a cocartesian fibration where $f\prn{0} = P\prn{\bar{0}}$,
$f\prn{1} = P\prn{\bar{2}}$ and cocartesian transport from $0$ to $1$ is the composite of
transporting in $P$ from $\bar{0}$ to $\bar{1}$ and then from $\bar{1}$ to $\bar{2}$.

\begin{lemma}
  $\Cat$ is Rezk.
\end{lemma}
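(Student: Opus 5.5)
We follow the pattern of the Segal proof. We must show that $\Con{const} : \Cat \to \Cat^{\IntIso}$ is an equivalence. By \cref{ax:cubes-separate} it suffices to show, for every $n$, that
\[
  \Modify[\GM]{\Int^n \to \Cat} \to \Modify[\GM]{\Int^n \times \IntIso \to \Cat}
\]
is an equivalence. Since $\Cat$ is already known to be Segal, a map $\Int^n \times \IntIso \to \Cat$ is the same as a map $\Int^n \times \Int \to \Cat$ whose underlying edge is invertible, i.e.\ equipped with left and right inverses. This reduction uses $\IntIso = \Delta^2 \Pushout{\Lambda^2_2} \Int \Pushout{\Lambda^2_0} \Delta^2$, the fact that $\Modify[\GM]{-}$ has an internal right adjoint and therefore turns these pushouts into pullbacks, and the previous subsection's description of $\Modify[\GM]{\Cat^{X \times \Delta^2}}$ and $\Modify[\GM]{\Cat^{X\times\Int}}$.

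Next we apply the classification for $X = \Int^n$, which is a category. The type $\Modify[\GM]{\Int^n \times \IntIso \to \Cat}$ becomes $\GM$-triples $(A,B,\alpha)$ with $A,B : \Cat^{\Int^n}$ and $\alpha : A \CoCartTo B$, together with $\beta,\gamma : B \CoCartTo A$ and identifications $\beta\circ\alpha = \ArrId{}$ and $\alpha\circ\gamma = \ArrId{}$. Here we use the remark after the Segal proof that composition in $\Cat$ corresponds to composition of transport functors, and that identities correspond to identity functors.

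This is the type of bi-invertible cocartesian functors $A \to B$, which is equivalent to the type of cocartesian functors that are fiberwise equivalences. By \cref{lem:dua:fw-equiv}, these are the same as equivalences of the total categories over $\Int^n$. An inverse of an equivalence of cocartesian families automatically preserves cocartesian arrows. By univalence, the type of such equivalences is equivalent to $\Sum{A,B : \Cat^{\Int^n}} A = B$, which is equivalent to $\Cat^{\Int^n}$, still under $\Modify[\GM]{-}$. Finally we check that this composite equivalence is inverse to the map induced by $\Con{const}$: a constant family over $\Int^n\times \Int$ corresponds under the classification to $(A, A, \ArrId{})$, since transport along degenerate edges is the identity. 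The required equivalence then follows.

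The main obstacle is the first reduction: passing from $\IntIso$ to the type of bi-invertible cocartesian functors. It requires knowing that the glued families over the two $\Delta^2$'s in $\IntIso$ restrict compatibly along $\Lambda^2_0$ and $\Lambda^2_2$, and that the $\Delta^2$ classification identifies their $\bar{0}\to\bar{2}$ edges with the degenerate ones. If this becomes messy, we would instead argue directly: a $\GM$-cocartesian family over $\Int^n \times \IntIso$ has invertible cocartesian transport along every edge, and by \cref{ax:int-detects-discrete} it therefore factors through the projection to $\Int^n$. That is, the family is pulled back from $\Int^n$ via the Rezk-ness of each fiber.
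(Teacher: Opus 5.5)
Your proposal is correct and follows essentially the same route as the paper. Both arguments reduce via \cref{ax:cubes-separate} to $\GM$-maps out of $\Int^n \times \IntIso$, commute $\Modify[\GM]{-}$ with the pushouts defining $\IntIso$, use the classification of cocartesian families over $X \times \Int$ and $X \times \Delta^2$ to identify the target with bi-invertible cocartesian functors, and finish with univalence for $\Cat \hookrightarrow \Uni$. (Your detour through \cref{lem:dua:fw-equiv} is harmless but unnecessary: bi-invertibility of a cocartesian functor already gives a fiberwise equivalence, and univalence applies to that directly.)
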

\begin{proof}
  We wish to show that all synthetic isomophisms in $\Cat$ are equivalent to the identity. Once
  more, we apply \cref{ax:cubes-separate} showing that the restriction map
  $\Modify[\GM]{\Int^n \to \Cat} \to \Modify[\GM]{\Int^n \times \IntIso \to \Cat}$ is an
  equivalence. Commuting $\Modify[\GM]{-}$ with the pushout defining $\IntIso$ once more, we may instead consider
  the map:
  \begin{align*}
    &\Modify[\GM]{\Int^n \to \Cat}
    \\
    &\to \Modify[\GM]{\Cat^{\Int^n \times \Delta^2}}
    \times_{\dots}
    \Modify[\GM]{\Cat^{\Int^n \times \Int}}
    \times_{\dots}
    \Modify[\GM]{\Cat^{\Int^n \times \Delta^2}}
  \end{align*}
  Applying the results of the previous section and commuting $\Modify[\GM]{-}$ past pullbacks, we
  may recast the above:
  \begin{align*}
    &\Modify[\GM]{\Int^n \to \Cat}
    \\
    &\to 
    \Modify[\GM]{
      \Sum{A,B : \Cat^{\Int^n}}
      \Sum{f : A \CoCartTo B}
      \Sum{g,h : B \CoCartTo A}
      f \circ g = \ArrId{} \times h \circ f = \ArrId{}
    }
  \end{align*}
  However, since $\Cat$ is a subtype of $\Uni$ it satisfies the univalence axiom.
  The result then follows from the observation that the data imposed on $f$ ensures
  that it is a family of equivalences.
\end{proof}

\subsection{\texorpdfstring{$\Cat$}{Cat} is simplicial}

Our remaining task is to show that $\Cat$ is simplicial. Unfortunately, this is far from immediately
obvious. After all, $\Cat$ is defined using the amazing right adjoint to $\Int \to -$, which is the
primary source of non-simplicial types. Fortunately, in this case we may use
\cref{thm:cat:univ-cocart} to produce a left section to $\eta : \Cat \to \Simp \Cat$ and this
implies that $\eta$ is an equivalence~\citep[Lemma 1.20]{rijke:2020}.

To begin with, we record the following lemma:
\begin{lemma}
  The commuting square between $\Cat_\bullet = \prn{\Sum{A : \Cat} A} \to \Cat$
  and $\Simp \Cat_\bullet \to \Simp \Cat$ induced by $\eta$ is cartesian.
\end{lemma}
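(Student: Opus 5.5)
We need to show that the square
\[
  \Cat_\bullet \to \Simp\Cat_\bullet,\quad \Cat \to \Simp\Cat
\]
is a pullback. Here $\Simp\Cat_\bullet \to \Simp\Cat$ is $\Simp$ applied to the projection. The plan is to reduce the claim to a fiberwise statement and then use that $\Simp$ is a lex modality.

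\textbf{Step 1: reduction to fibers.} A commuting square is cartesian iff the induced map on fibers is an equivalence for every point of the base. So for each $A : \Cat$ we must show that the fiber of $\Cat_\bullet \to \Cat$ over $A$, namely $A$ itself, maps by an equivalence to the fiber of $\Simp\Cat_\bullet \to \Simp\Cat$ over $\eta(A)$.

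\textbf{Step 2: identify the fiber of $\Simp\pi$.} Since $\Simp$ is lex, \citet{rijke:2020} shows that for any family $P : Y \to \Uni$ the fiber of $\Simp(\Sum{y:Y} P\,y) \to \Simp Y$ over $\eta(y)$ is $\Simp(P\,y)$. In terms of the universe, this is the statement that $\DepSimp \circ \eta = \Simp$, which was recorded earlier. Applied to the universal family over $\Cat$, the fiber over $\eta(A)$ is $\Simp A$, and the comparison map on fibers is $\eta_A : A \to \Simp A$.

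\textbf{Step 3: conclude from simpliciality of the fibers.} By definition, $\Cat$ is a subtype of $\Uni[\Simp]$. Hence every $A : \Cat$ is simplicial, and $\eta_A$ is an equivalence. The fiberwise map is therefore an equivalence, and the square is cartesian.

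\textbf{Main obstacle.} The only delicate point is Step 2: checking that the comparison map of fibers really is $\eta_A$, that is, that the fiber identification for lex modalities is compatible with the square induced by $\eta$. I would first try to get this from the descent/lex characterization in \citet{rijke:2020}: for a lex modality, the naturality square of $\eta$ is cartesian over the $\Simp$-modal part of a family. The family $\Cat_\bullet \to \Cat$ has $\Simp$-modal fibers, so this applies. In particular no appeal to \cref{thm:cat:univ-cocart} is needed; the argument is purely formal.
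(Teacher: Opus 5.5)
Your proposal is correct and takes the same approach as the paper. The paper also compares fibers: since $\Simp$ is lex, the fiber of $\Simp \Cat_\bullet \to \Simp \Cat$ over $\eta(A)$ is $\Simp A$, and the comparison map $\eta_A$ is an equivalence because every $A : \Cat$ is simplicial. Your extra care in identifying the comparison map with $\eta_A$, via the lex characterization of fibers of $\Simp f$, spells out what the paper compresses into ``unfolding definitions and comparing fibers.''
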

\begin{proof}
  Unfolding definitions and comparing fibers, this follows from the fact that each
  $A : \Cat$ is simplicial and $\Simp$ is lex.
\end{proof}

Next, we note that if $\Simp \Cat_\bullet \to \Simp \Cat$ represents a cocartesian family itself,
then there must be a unique classifying map $\Simp \Cat \to \Cat$ and, pasting together the relevant
pullback squares, we obtain the following composite pullback diagram:
\[
  \begin{tikzpicture}[diagram]
    \SpliceDiagramSquare{
      nw/style = pullback,
      ne/style = pullback,
      height = 1.2cm,
      nw = \Cat_\bullet,
      sw = \Cat,
      ne = \Simp \Cat_\bullet,
      se = \Simp \Cat,
    }
    \node [right = of ne] (A) {$\Cat_\bullet$};
    \node [right = of se] (B) {$\Cat$};
    \path [->] (ne) edge (A);
    \path [->] (se) edge (B);
    \path [->] (A) edge (B);
  \end{tikzpicture}
\]
By the univalence property of $\Cat$, this bottom composite must be the identity. Consequently, the
classifying map $\Simp \Cat \to \Cat$ is the required left inverse to the unit. All that remains,
therefore, is to prove that $\Simp \Cat_\bullet \to \Simp \Cat$ is a cocartesian fibration.

To prove this we will use \cref{ax:simp-stability} along with
\cref{lem:primer:pw-left-adjoint-to-left-adjoint,prop:cocartesian:lari}.
First, we must note that $\Simp \Cat$ and $\Simp \Cat_\bullet$ are both
categories by \cref{cor:proofs:simp-cat}.

\begin{lemma}
  The family $\Simp \Cat_\bullet \to \Simp \Cat$ is cocartesian.
\end{lemma}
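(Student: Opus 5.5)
We plan to use the LARI characterization of \cref{prop:cocartesian:lari}. Since $\Simp \Cat$ and $\Simp \Cat_\bullet$ are categories by \cref{cor:proofs:simp-cat}, the family $\Simp \Cat_\bullet \to \Simp \Cat$ is a map of categories. Its fibers are simplicial: the square with $\Cat_\bullet \to \Cat$ is cartesian and $\Simp$ is lex, so fibers over points of the form $\eta(A)$ are the simplicial $A$, and in general being simplicial can be checked through $\DepSimp$. We first show it is iso-inner. Innerness follows by checking orthogonality to $\Spine^n \to \Delta^n$ on $\GM$-maps, as in \cref{lem:proofs:simp-preserves-segal}. By \cref{ax:simp-stability}, $\GM$-simplices $\Delta^n \to \Simp\Cat_\bullet$ and $\Delta^n \to \Simp\Cat$ are the same as $\GM$-simplices in $\Cat_\bullet$ and $\Cat$, and $\Cat_\bullet \to \Cat$ is inner by \cref{thm:cat:univ-cocart}. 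Hence the relevant lifting problems reduce to ones already solved. For the Rezk part, both total types are categories, so a map between categories is automatically iso-inner.

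It remains to produce a left adjoint right inverse to
\[ r : (\Simp\Cat_\bullet)^\Int \to \Simp\Cat_\bullet \times_{\Simp\Cat} (\Simp\Cat)^\Int . \]
Both sides are categories, since categories are closed under cotensors and pullbacks. We therefore apply \cref{lem:primer:pw-left-adjoint-to-left-adjoint} (in its dual, right-adjoint-has-left-adjoint form): it suffices, for each $\GM$-object of the codomain, to construct a candidate together with a unit and check the universal property against $\GM$-objects of the domain. A $\GM$-object of the codomain is a $\GM$-map $\Lambda$ from the pullback of $\{0\} \to \Int$ into $\Simp\Cat_\bullet \to \Simp\Cat$, that is, a map $\Int \to \Simp\Cat$ with a point over $0$. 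By \cref{ax:simp-stability} (with $n=1$ and $n=0$, and the fact that $\Modify[\GM]{-}$ commutes with the pullbacks involved), such data lifts uniquely to the corresponding data $u : \Int \to \Cat$, $a_0 : \Cat_\bullet$ over $u(0)$. We take the cocartesian lift of $u$ at $a_0$ in $\Cat_\bullet \to \Cat$ and compose with $\eta$; the unit is the identity, which gives the right-inverse property.

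To verify universality, we must show that for a $\GM$-object $e : \Int \to \Simp\Cat_\bullet$, postcomposition with the unit induces an equivalence of hom-types. These hom-types are types of $\GM$-squares $\Int \times \Int \to \Simp\Cat_\bullet$ with prescribed boundary. Here we use \cref{ax:simp-stability} once more together with $\Int\times\Int = \Delta^2 \Pushout{\Int} \Delta^2$ to transport the whole question back to $\Cat_\bullet \to \Cat$, where it follows from cocartesianness via \cref{prop:cocartesian:lari} or directly via \cref{def:cocartesian:arrow}. A subtlety is that hom-types are not themselves $\GM$-types. We handle this by invoking \cref{ax:cubes-separate}, testing the comparison map on $\Modify[\GM]{\Int^n \to -}$, and then decomposing $\Int^n \times \Int \times \Int$ into simplices.

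We expect the main obstacle to be this last reduction. Simplicial stability only controls simplices, not cubes, so we must carefully triangulate every cube appearing in the hom-type comparison (and in the iso-inner check). We must also verify that the identifications from \cref{ax:simp-stability} are compatible with the gluing maps of these triangulations. Once the statement is rephrased entirely in terms of $\GM$-simplices (via \cref{lem:proofs:spine-from-segal} and Segal-ness of all types involved), the compatibility should follow because $\Modify[\GM]{-}$ commutes with the relevant pullbacks.
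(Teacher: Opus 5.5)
Your proposal follows essentially the paper's proof. You reduce via \cref{prop:cocartesian:lari} and the pointwise adjoint lemma to $\GM$-data, lift that data along $\eta$ using simplicial stability, take the cocartesian lift in $\Cat_\bullet \to \Cat$ and compose with $\eta$, and check universality by transporting back to $\Cat_\bullet \to \Cat$. The paper compresses your cube-separation and triangulation step into ``appropriately massaging'' the data down to a $\Lambda^2_0 \hookrightarrow \Delta^2$ extension problem.

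One correction for that last step: only the horn-filling condition of \cref{def:cocartesian:arrow} is available there, not \cref{prop:cocartesian:lari}. The LARI characterization needs $\Cat$ to be a category, and $\Cat$ is not yet known to be simplicial; this lemma is what establishes that. Moreover, simplicial stability only gives you triangles in $\Cat_\bullet$, not squares, so there are no genuine squares to which a LARI for $\Cat_\bullet \to \Cat$ could apply.
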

\begin{proof}
  As a map between categories, it is automatically iso-inner and simplicial. It therefore suffices
  to prove that the comparison map $\prn{\Simp \Cat_\bullet}^\Int \to \prn{\Simp \Cat}^\Int
  \times_{\Simp \Cat} \Simp \Cat_\bullet$ has a left adjoint right inverse. For concision, we write
  $C = \Simp \Cat$ and, commute $\Simp$ with $\Sum{}$ to replace $\Simp \Cat_\bullet$ with
  $E = \Sum{A : \Simp \Cat} \DepSimp A$ in what follows.

  We may use \cref{lem:primer:pw-left-adjoint-to-left-adjoint} so that it suffices to check that
  this property holds when restricted to $\GM$-annotated elements of  $C^\Int \times_{C} E$.
  Accordingly, we may fix $\DeclVar{A}{\GM}{\Int \to C}$ along with an element
  $\DeclVar{a_0}{\GM}{A\,0}$. By \cref{ax:simp-stability}, we may assume that $A = \eta \circ A'$
  and $a_0 = \eta\prn{a_0'}$ for some unique $\DeclVar{A'}{\GM}{\Int \to \Cat}$ and
  $\DeclVar{a_0'}{\GM}{A}$

  Since $\Cat_\bullet \to \Cat$ is cocartesian, we lift $A',a_0'$ to a cocartesian
  morphism $\DeclVar{a'}{\GM}{\prn{i : \Int} \to A'\,i}$ with $p' : a'\,0 = a_0'$. We now
  argue that $a = \eta \circ a' : \prn{i : \Int} \to \DepSimp A\,i$ 
  and the induced equality $p : a\,0 = a_0$ is the desired universal lift over $A$, $a$.

  After appropriately massaging the input data, it suffices to show that if we are given
  $\DeclVar{H}{\GM}{\Delta^2 \to \Simp \Cat}$ and a lift of this map $\DeclVar{h}{\GM}{\prn{t :
  \Lambda^2_0} \to \DepSimp H\prn{t}}$. There are a handful of paths relating these two $a$ and $A$,
  but after induction we may definitionally identify (1) $H\prn{0,-}$ with $A$ (2) $h\prn{0,-}$ with
  $a$. We must show that $h$ extends uniquely to some $\bar{h} : \prn{t : \Delta^2} \to \DepSimp
  H\prn{t}$. By \cref{ax:simp-stability}, we may once more factor $H$ and $h$ through
  $\Cat$ whereby the result is an immediate consequence of our construction of $a'$ as a
  cocartesian lift.
\end{proof}

\begin{corollary}
  $\Cat$ is a category.
\end{corollary}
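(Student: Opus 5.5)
The corollary is the conjunction of the three properties just established. Recall that a category is a type that is simplicial, Segal, and Rezk. The Segal and Rezk lemmas of the preceding subsection supply the last two conditions directly. It therefore remains only to assemble simpliciality from the material of this subsection.

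For simpliciality, the plan is to show that the unit $\eta : \Cat \to \Simp \Cat$ has a left inverse. Then $\eta$ is an equivalence by \citep[Lemma 1.20]{rijke:2020}. The steps are as follows.

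\begin{enumerate}
  \item By the lemma just proved, $\Simp \Cat_\bullet \to \Simp \Cat$ is a $\GM$-annotated cocartesian family.
  \item Applying \cref{thm:cat:univ-cocart} with $C = \Simp \Cat$ (which is $\GM$-annotated, being built from closed data), we obtain a classifying map $r : \Simp \Cat \to \Cat$ together with a pullback square exhibiting $\Simp \Cat_\bullet$ as the pullback of $\Cat_\bullet$ along $r$.
  \item By the cartesian-square lemma, $\Cat_\bullet$ is the pullback of $\Simp \Cat_\bullet$ along $\eta$.
  \item Pasting the two pullbacks, the composite $r \circ \eta : \Cat \to \Cat$ classifies the universal family $\Cat_\bullet \to \Cat$ itself.
  \item The identity $\ArrId{}$ also classifies $\Cat_\bullet \to \Cat$. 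By the uniqueness half of \cref{thm:cat:univ-cocart}, applied with $C = \Cat$, this gives $r \circ \eta = \ArrId{}$ as $\GM$-elements, hence as an internal equality.
\end{enumerate}

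The one point needing care is step 5, which relies on the fact that $\Cat \hookrightarrow \Uni$ is a subtype. Because of this, a map into $\Cat$ is determined by the family of fibers it classifies. So two maps whose pullbacks of $\Cat_\bullet$ agree as families over $\Cat$ are equal, by univalence fiberwise. I expect no further obstacle: the substantive work, namely cocartesianness of the $\Simp$-image via \cref{ax:simp-stability}, was done in the previous lemma.
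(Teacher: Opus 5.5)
Your proposal is correct and follows the paper's argument. Segal and Rezk come from the preceding lemmas. Simpliciality comes from classifying the cocartesian family $\Simp \Cat_\bullet \to \Simp \Cat$ by some $r : \Simp \Cat \to \Cat$, pasting with the cartesian square along $\eta$, and using that $\Cat \hookrightarrow \Uni$ is a subtype (its univalence property) to get $r \circ \eta = \ArrId{}$, so $\eta$ is an equivalence by \citep[Lemma 1.20]{rijke:2020}.
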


\section{Full straightening-unstraightening}
\label{sec:st-unst}

In this section, we prove the Lurie's straightening--unstraightening theorem which states that for a
category $\DeclVar{C}{\GM}{\Uni}$, the type $C \to \Cat$ is equivalent to the subcategory of
$\SLICE{\Cat}{C}$ (that is, $\Sum{f : \Cat^\Int} f\prn{1} = C$) restricted such that its 0- and
1-cells are given cocartesian families and cocartesian functors. Accordingly, for the remainder of
this section let us fix $\DeclVar{C}{\GM}{\Uni}$ a category.

To do this, we will construct a map $U : \prn{C \to \Cat} \to \SLICE{\Cat}{C}$ and prove that it is (1)
an embedding such that (2) its image on $\GM$-annotated elements $C \to \Cat$, and $\Int \to \prn{C
\to \Cat}$ satisfies precisely the above criteria. From this, we show that $\prn{C \to \Cat} \to
\SLICE{\Cat}{C}$ satisfies the expected universal property for the subcategory $\CoCart{C}$ of
\emph{cocartesian families over $C$} (\cref{cor:st-unst:st-unst}).

\begin{remark}
  The material in this section closely follows \citet{cisinski:2024} with only minor alterations to
  make it more convenient in \TTT{}. In particular, it is from there that we learned of this
  method of constructing the unstraightening functor and characterizing its image. That such an
  adaptation is possible is expected but encouraging: the axiomatic approach given by \opcit{} is
  intended to give high-level arguments which can be translated into formal systems satisfying their
  axioms and our construction of $\Cat$ ensures that \TTT{} satisfies all the relevant axioms for
  this argument.
\end{remark}

\begin{remark}
  We avoid explicitly constructing $\CoCart{C}$ merely to avoid the
  detour of describing the construction of non-full subcategories. Such constructions are possible
  using \eg{}, $\prn{-}_\Int$.
\end{remark}

\subsection{The unstraightening map}

We begin by constructing a map $U$ from $C \to \Cat$ to $\SLICE{\Cat}{C}$. We break this process
into several steps. We begin by considering two particular cocartesian families over $C \to \Cat$:
\[
  E\prn{f} = \Sum{c : C} f\prn{c}
  \qquad
  B\prn{f} = C
\]
These are both cocartesian families over $C$ and the canonical projection is a cocartesian map
$\Proj[0] : E \CoCartTo B$ as well---and therefore a cocartesian functor. We may
therefore glue these together to form a cocartesian family:
$\Glue\prn{E,B,\Proj[0]} : \prn{\Cat^C} \times \Int \to \Cat$. First, let us compute
$\Glue\prn{E,B,\Proj[0]}\prn{-,1}$ and observe that it is canonically identified to $\lambda
\_.\,C$, via the following family of paths:
\[
  \Phi = \lambda f.\,\Con{ua}\prn{\Proj[1]} : \Prod{f : C \to \Cat} \Glue\prn{E,B,\Proj[0]}\prn{f,1} = C
\]
By transposing and using this identification, we obtain:
\begin{align*}
  &U : \prn{C \to \Cat} \to \SLICE{\Cat}{C}
  \\
  &U = \lambda f.\,\prn{\Glue\prn{E,B,\Proj[0]}\prn{f,-}, \Phi\prn{f}}
\end{align*}

\subsection{The image of the unstraightening map}

Our next task to identify the image of $U$ and, in particular, to show that it is precisely the
category of cocartesian families over $C$ and cocartesian functors between them. We therefore
compute fibers of $\Modify[\GM]{\Cat^C} \to \Modify[\GM]{\SLICE{\Cat}{C}}$
and $\Modify[\GM]{\Int \to \Cat^C} \to \Modify[\GM]{\Int \to \SLICE{\Cat}{C}}$.

In particular, we will show that the fiber over $\DeclVar{p}{\GM}{\SLICE{\Cat}{C}}$ is precisely the
proposition stating whether $p$ is cocartesian and over $\DeclVar{f}{\GM}{\Int \to \SLICE{\Cat}{C}}$
it corresponds to the triple of propositions requiring $f\prn{0}$ and $f\prn{1}$ to be cocartesian
and $f$ itself to induce a map of cocartesian families. In light of the Segal condition, this
characters the fibers over arbitrary simplices and, via \cref{ax:cubes-separate} and the
simpliciality of $\Cat$, proves that $U$ is an embedding. Moreover, our description of the fibers
shows that the resultant subcategory of $\SLICE{\Cat}{C}$ is precisely as described at the beginning
of this section.

\begin{lemma}
  The fiber of $U$ over $\DeclVar{p}{\GM}{\SLICE{\Cat}{C}}$ is a proposition inhabited iff $p$ is cocartesian.
\end{lemma}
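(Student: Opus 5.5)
The plan is to unfold both sides using directed univalence and the classification of cocartesian families over $X \times \Int$ from the previous section, taking $X = \ObjTerm{}$ and also working over $C$. A $\GM$-element $p$ of $\SLICE{\Cat}{C}$ consists of $\DeclVar{e}{\GM}{\Int \to \Cat}$ together with a path $e\prn{1} = C$. By \cref{thm:cat:univ-cocart} and \cref{cor:dua:dua}, such an $e$ is equivalent to a pair of $\GM$-categories $E_0, E_1$ together with $\DeclVar{\pi}{\GM}{E_0 \to E_1}$. Combined with the path $E_1 = C$, this means $\Modify[\GM]{\SLICE{\Cat}{C}} \Equiv \Modify[\GM]{\Sum{E : \Cat} E \to C}$. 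Here we use that $\Modify[\GM]{-}$ commutes with $\Sum{}$ and identity types (\cref{ax:crisp-ind}).

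On the other side, a $\GM$-element $f$ of $C \to \Cat$ is, by \cref{thm:cat:univ-cocart}, the same as a $\GM$-cocartesian family over $C$. Under this equivalence the total-space projection $\Sum{c : C} f\prn{c} \to C$ is an element of $\Modify[\GM]{\Sum{E : \Cat} E \to C}$. The cocartesian classification of \cref{cor:dua:dua} is compatible with $\Glue$ (cocartesian transport in $\Glue\prn{F_0,F_1,\alpha}$ is $\alpha$, by \cref{cor:cocartesian:transport}). Since $\Phi$ is defined via $\Con{ua}\prn{\Proj[1]}$, I will check that $U$ on $\GM$-elements corresponds to the map sending $f$ to $\prn{\TotalTy{f}, \Proj[0] : \TotalTy{f} \to C}$.

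Next I would identify the composite with the canonical map
\[
\Modify[\GM]{\Sum{E : C \to \Uni} \IsCocart\prn{E}} \to \Modify[\GM]{\Sum{E : \Uni} E \to C},
\]
which takes total spaces. By the usual $\HOTT{}$ equivalence $\prn{C \to \Uni} \Equiv \Sum{E : \Uni} E \to C$ (under $\GM$), this map is the inclusion of a subtype, namely the maps whose family of fibers is cocartesian. Hence the fiber over $p = \prn{E, \pi}$ is equivalent to $\IsCocart\prn{\pi}$ in the sense of \cref{not:primer:proj}. That is a proposition by \cite{buchholtz:2023}, and it is inhabited exactly when $p$ is cocartesian. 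There is one small point to check: the fibers of a $\GM$-cocartesian family are categories and hence land in $\Cat$ (through $\Uni[\Simp]$). Conversely, the total space of any cocartesian family over a category is a category, so it lies in $\Cat$ viewed as an object of $\Cat$.

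The main obstacle is the identification in the second step, which has to be coherent as an identification of maps rather than just objectwise. We must show that the triangle formed by $U$, directed univalence and the total-space map commutes, including the path components supplied by $\Phi$. I would first reduce to showing that the two cocartesian families over $\Int$ that classify $U\prn{f}$ and $\prn{\TotalTy{f} \to C}$ agree. I would then compare them via the $\Con{glue}$ equivalence, using that $\Glue\prn{E,B,\Proj[0]}\prn{f,-}$ is by construction the glue of $\TotalTy{f}$ and $C$ along $\Proj[0]$. Its fiber at $1$ is identified with $C$ exactly via $\Proj[1]$, which matches $\Phi$.
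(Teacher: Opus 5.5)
Your proposal is correct and follows essentially the same route as the paper: use directed univalence to identify the codomain with $\Modify[\GM]{\Sum{E : \Cat} E \to C}$, recognize $U$ there as the total-space map, and combine the fiber--family correspondence with \cref{thm:cat:univ-cocart} to see that the fiber is the proposition that $\pi$ is cocartesian (strictly, the $\GM$-boxed version of that proposition). Your explicit check via \cref{cor:cocartesian:transport} and $\Phi$ that $U$ really is the total-space map is a more careful version of what the paper dismisses as ``unfolding $U'$''. That check only needs a pointwise family of paths, uniform in $f$, so it is less of an obstacle than you suggest.
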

\begin{proof}
  Post-composing with directed univalence, we may identify a fiber of $U$ with a fiber of
  the map:
  \[
    U' : \Modify[\GM]{C \to \Cat} \to \Modify[\GM]{\Sum{E : \Cat} E \to C}
  \]
  Consider a category $\DeclVar{E}{\GM}{\Cat}$ and a function $\DeclVar{\pi_E}{\GM}{E \to C}$. Our goal
  is to compute the fiber $\Sum{f : \Modify[{\GM}]{C \to \Cat}} U'\prn{f} = \MkMod[\GM]{E,\pi_E}$.

  Unfolding $U'$, an element $\DeclVar{f}{\GM}{C \to \Cat}$ is sent to $\prn{E,\pi_E}$ if and only if we have
  an equivalence $\DeclVar{e}{\GM}{E \Equiv \Sum{c : C} f\prn{c}}$ and an equation 
  $\DeclVar{p}{\GM}{\pi_E = \Proj[1] \circ e}$. By another application of univalence, this is
  equivalent to requiring that $\Modify[\GM]{\pi_E^{-1}\prn{-} = f}$, so the fiber amounts to
  $\Sum{\DeclVar{f}{\GM}{C \to \Cat}} \Modify[\GM]{\pi_E^{-1} = f}$. This is a proposition by
  \cref{prop:primer:modalities} and by \cref{thm:cat:univ-cocart} inhabited iff $\pi_E$ is a
  cocartesian family.
\end{proof}

\begin{lemma}
  The fiber of $U$ over $\DeclVar{f}{\GM}{\Int \to \SLICE{\Cat}{C}}$ is a proposition inhabited iff
  $f$ is is a cocartesian functor between cocartesian families.
\end{lemma}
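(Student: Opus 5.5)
We follow the same strategy as the previous lemma, now one dimension up. First we post-compose with directed univalence and identify the fiber of $U$ over $\DeclVar{f}{\GM}{\Int \to \SLICE{\Cat}{C}}$ with a fiber of the map
\[
  U'_1 : \Modify[\GM]{\Int \to \Cat^C} \to \Modify[\GM]{\Int \to \SLICE{\Cat}{C}}.
\]
By currying, the domain is $\Modify[\GM]{C \times \Int \to \Cat}$. The classification of cocartesian families over $X \times \Int$ from the previous section, taken with $X = C$, identifies this domain with $\Modify[\GM]{\Sum{A_0,A_1 : \Cat^C} A_0 \CoCartTo A_1}$. On the codomain side, \cref{cor:dua:dua} describes a $\GM$-morphism in $\SLICE{\Cat}{C}$. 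It consists of $\GM$-objects $\prn{E_0,\pi_0}$ and $\prn{E_1,\pi_1}$ of the slice, together with a $\GM$-functor $\DeclVar{g}{\GM}{E_0 \to E_1}$ and a $\GM$-identification $\pi_1 \circ g = \pi_0$. This uses the fact, recorded after the Segal lemma, that composition in $\Cat$ corresponds to composition of functions, so the triangle over $C$ commutes as functions.

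Next we compute $U'_1$ in these coordinates. We unfold $U$ on a $\GM$-edge $\prn{A_0,A_1,\alpha}$. The family $\Glue\prn{E,B,\Proj[0]}$ restricted along $\Int \to \Cat^C$ is a cocartesian family over $\Int \times \Int$. Its edge in the first $\Int$-direction over $1$ is constant at $C$. By \cref{cor:cocartesian:transport} and the explicit description of transport in iterated glue types, the transport of $U(A_\bullet)$ from $0$ to $1$ is the total map $\TotalTy{\alpha} : \TotalTy{A_0} \to \TotalTy{A_1}$ over $C$. So $U'_1$ sends $\prn{A_0,A_1,\alpha}$ to $\prn{\TotalTy{A_0},\TotalTy{A_1},\TotalTy{\alpha},\Refl}$.

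Given a target $\prn{E_0,\pi_0,E_1,\pi_1,g,q}$, the fiber is then the $\Sum{}$ of two pieces. The first is the data $\prn{A_0,A_1}$ with identifications $\pi_\epsilon^{-1} = A_\epsilon$. By the previous lemma this is a proposition, inhabited iff both $\pi_0$ and $\pi_1$ are cocartesian. The second is a cocartesian functor $\alpha$ together with an identification of $\TotalTy{\alpha}$ with $g$ over $C$. Univalence turns maps of total spaces over $C$ into fiberwise maps $\Prod{c : C} \pi_0^{-1}\prn{c} \to \pi_1^{-1}\prn{c}$, so this second piece is the fiber of the inclusion $\prn{A_0 \CoCartTo A_1} \hookrightarrow \prn{\Prod{c} A_0\,c \to A_1\,c}$ over the straightened $g$. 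Since being cocartesian is a proposition on fiberwise functors, the inclusion is an embedding. The second piece is therefore a proposition, inhabited iff $g$ preserves cocartesian arrows. Combining the two pieces, the whole fiber is a proposition by \cref{prop:primer:modalities}, and it is inhabited exactly when $f$ is a cocartesian functor between cocartesian families.

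The main obstacle is the second paragraph: showing that transport in the glued family really recovers $\TotalTy{\alpha}$, coherently with the paths $\Phi$ used to define $U$. We would first reduce to $\GM$-data via \cref{lem:primer:pw-iso-to-iso}. We would then check the claim on $\GM$-objects, where the glue computation is explicit, and use \cref{lem:dua:fw-equiv} to see that the comparison map is an equivalence.
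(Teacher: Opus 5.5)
Your proposal is correct and essentially follows the paper's argument. You identify $\GM$-edges of $\Cat^C$ and of $\SLICE{\Cat}{C}$ via directed univalence, note that $U$ sends an edge to its total spaces together with the total map of cocartesian transport, and use univalence to reduce the fiber to a fiber of an embedding of cocartesian data into plain families of maps over $C$. The differences are presentational: the paper encodes a slice morphism as a 2-simplex $E \to F \to C$ (your extra $\prn{\pi_0,q}$ is contractible data), and it uses the single embedding $\Modify[\GM]{\Int \to \Cat^C} \hookrightarrow \Modify[\GM]{C \to \Sum{A,B : \Uni} B^A}$, which you factor into the object part (the previous lemma) and the cocartesian-functor part.
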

\begin{proof}
  Rearranging equations, we may identify $\Modify[\GM]{\Int \to \SLICE{\Cat}{C}}$ with
  $\Modify[\GM]{\Sum{h : \Delta^2 \to \Cat} h\prn{\bar{2}} = C}$ which we may then identify via
  directed univalence with
  $\Modify[\GM]{\Sum{E,F : \Cat} E \to F \times F \to C}$. Post-composing $U$ with these
  maps, we instead consider the following:
  \[
    U' :
    \Modify[\GM]{\Int \to \Cat^C} \to
    \Modify[\GM]{\Sum{E,F : \Cat} E \to F \times F \to C}
  \]
  Consider categories $\DeclVar{E,F}{\GM}{\Cat}$ and functions $\DeclVar{\pi_F}{\GM}{F \to C}$ and
  $\DeclVar{\alpha}{\GM}{E \to F}$. Our goal is to compute the fiber of $U'$ over this data.

  Unfolding, an element $\DeclVar{h}{\GM}{\Int \to \Cat^C}$ is sent to $E,F,\pi_F$ if and only if we
  have the following:
  \begin{itemize}
    \item an equivalence $\DeclVar{e_0}{\GM}{E \Equiv \Sum{c : C} h\prn{0,c}}$,
    \item an equivalence $\DeclVar{e_1}{\GM}{F \Equiv \Sum{c : C} h\prn{1,c}}$,
    \item a path $\DeclVar{\phi}{\GM}{e_1 \circ \alpha = \beta \circ e_0}$ where
      $\beta = \lambda c.\,\prn{-,c}_!$ is given by the cocartesian transport of $h$
    \item a path $\DeclVar{\phi}{\GM}{\pi_F = \Proj[0] \circ e_1}$.
  \end{itemize}

  Once more using univalence, we are reduced to $\GM$-annotated equation between maps of families
  over $C$ (\ie{} $C \to \Sum{A, B : \Uni} B^A$):
  \begin{align*}
    &\lambda c.\,\Proj[0] :
    \Prod{c : C} \prn{\Sum{f : \pi_F^{-1}\prn{c}} \alpha^{-1}\prn{f}} \to \pi_F^{-1}\prn{c}
    \\
    &\lambda c.\,\prn{-,c}_! : \Prod{c : C} h\prn{0,c} \to h\prn{1,c}
  \end{align*}
  Since $\Modify[\GM]{\Int \to \Cat^C}$ embeds into $\Modify[\GM]{C \to \Sum{A, B : \Uni} B^A}$ by
  directed univalence, \cref{prop:primer:modalities} ensures this fiber over $\prn{E,F,\pi_F}$ is a
  proposition. The same analysis shows that it is inhabited iff $\pi_F \circ \alpha$ and
  $\pi_E$ are cocartesian and $\alpha$ is a map of cocartesian families.
\end{proof}

\begin{corollary}
  The map $U : \prn{C \to \Cat} \to \SLICE{\Cat}{C}$ is an embedding.
\end{corollary}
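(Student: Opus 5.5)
To show that $U$ is an embedding, I will show that $\Ap_U : (f = g) \to (U f = U g)$ is an equivalence for all $f,g : C \to \Cat$. Equivalently, I will show that the diagonal $\Cat^C \to \Cat^C \times_{\SLICE{\Cat}{C}} \Cat^C$ is an equivalence. The plan reduces this to the two preceding lemmas on fibers via \cref{ax:cubes-separate}. Since $U$ is defined without any modal annotations beyond the $\GM$-annotated $C$, its diagonal is a $\GM$-map between $\GM$-types. By \cref{ax:cubes-separate} it therefore suffices to show that, for every $n$, the induced map on $\Modify[\GM]{\Int^n \to -}$ is an equivalence. Commuting $\Modify[\GM]{\Int^n \to -}$ past the pullback (\cref{prop:primer:modalities}), this is the statement that
\[
  U_*^\dagger : \Modify[\GM]{\Int^n \to \Cat^C} \to \Modify[\GM]{\Int^n \to \SLICE{\Cat}{C}}
\]
is an embedding of types, i.e.\ has propositional fibers, for each $n$.

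Next I reduce cubes to simplices. Both $\Cat^C$ and $\SLICE{\Cat}{C}$ are categories: $\Cat$ is a category, exponentials and slices of categories are categories, and in particular both are simplicial. A map $\Int^n \to X$ into a simplicial type is therefore determined by its restrictions to the $n!$ top-dimensional simplices $\Delta^n \subseteq \Int^n$ and their faces. By the Segal condition (\cref{lem:proofs:spine-from-segal}), these restrictions are in turn determined by restrictions to spines $\Spine^n$. Thus $\Modify[\GM]{\Int^n \to X}$ is an iterated limit, commuting with $\Modify[\GM]{-}$, of copies of $\Modify[\GM]{X}$ and $\Modify[\GM]{\Int \to X}$. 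This limit is built compatibly for $X = \Cat^C$ and $X = \SLICE{\Cat}{C}$, naturally in $U$. Since a limit of embeddings is an embedding, it suffices to treat $n = 0$ and $n = 1$.

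These two cases are exactly the preceding two lemmas. The fibers of $U$ over $\DeclVar{p}{\GM}{\SLICE{\Cat}{C}}$ and over $\DeclVar{f}{\GM}{\Int \to \SLICE{\Cat}{C}}$ are propositions, so $U$ induces embeddings on $\GM$-points and on $\GM$-arrows. The main obstacle I expect is the reduction from cubes to spines. Pure simpliciality plus the Segal condition gives that maps out of $\Int^n$ are determined by simplices, but I must check that this decomposition of $\Modify[\GM]{\Int^n \to X}$ is natural in $X$ and genuinely a limit diagram. That is where \cref{ax:simp-stability} (simplices are insensitive to $\Simp$) and the covering $\Int^n = \bigcup_\sigma \Delta^n_\sigma$, used as in the proof of \cref{lem:proofs:segal-from-spine}, will be invoked. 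Once this is in hand, the corollary follows.
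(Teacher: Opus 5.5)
Your proposal is correct and follows the paper's proof: show the diagonal of $U$ is an equivalence, use \cref{ax:cubes-separate} together with simpliciality to pass to $\GM$-simplices, use the Segal condition to reduce to $n = 0,1$, and then invoke the two fiber lemmas. The paper phrases this last step as the fiber of $\prn{U \circ -}^\dagger$ over $\MkMod[\GM]{U \circ f}$ being a proposition inhabited by $\prn{f,\Refl}$, hence contractible. The step you flag as the main obstacle needs no appeal to \cref{ax:simp-stability}: $\Cat^C$, $\SLICE{\Cat}{C}$ and their pullback are already simplicial, so restricting maps out of $\Int^n$ to its triangulation by simplices is directly an equivalence, and the paper likewise just cites simpliciality of the objects involved.
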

\begin{proof}
  To show that $U$ is an embedding, we will show that
  $\Delta_U : \prn{C \to \Cat} \to \prn{C \to \Cat} \times_{\SLICE{\Cat}{C}} \prn{C \to \Cat}$
  is an equivalence. Applying \cref{ax:cubes-separate} along with the fact that all of the objects
  involved here are simplicial, it suffices to show that the following map is an equivalence:
  \[
    \Modify[\GM]{\Delta^n \to \Cat} \to
    \Modify[\GM]{\Delta^n \to \prn{C \to \Cat} \times_{\SLICE{\Cat}{C}} \prn{C \to \Cat}}
  \]

  Since both sides of this are categories, we may restrict to the case where $n = 0,1$. In this
  case, it suffices to show that if
  $\DeclVar{f,g}{\GM}{\Delta^n \to \prn{C \to \Cat}}$ and $\DeclVar{p}{\GM}{U \circ f = U \circ g}$
  then there is a path $(f,\Refl) = (g, p)$. However, this is precisely equivalent to asking that
  the fiber of $\prn{U \circ -}^\dagger$ over $\MkMod[\GM]{U \circ f}$ is contractible. By previous
  results, we know the fiber is a proposition and it is inhabited by $\prn{f, \Refl}$. Consequently,
  it is contractible as required.
\end{proof}
In fact, in light of our identification of the fibers of $U$ we may also characterize when a functor
lifts along it. This is, in essence, the universal property of $\CoCart{C}$ mentioned earlier:
\begin{corollary}[Straightening--unstraightening]
  \label{cor:st-unst:st-unst}
  If $\DeclVar{D}{\GM}{\Uni}$ is a category, a map $\DeclVar{f}{\GM}{D \to \SLICE{\Cat}{C}}$ lifts
  along $U$ to $\Cat^C$ if and only if
  \begin{enumerate}
    \item for each $\DeclVar{d}{\GM}{D}$, the functor $f\prn{d}$ is a cocartesian family.
    \item for each $\DeclVar{d}{\GM}{\Int \to D}$, the functor induced by $f \circ d : \Int \to
      \SLICE{\Cat}{C}$ is a cocartesian functor between the cocartesian families.
  \end{enumerate}
\end{corollary}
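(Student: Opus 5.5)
The plan is to reduce lifting along $U$ to the fiber computations of the two preceding lemmas, using the fact (previous corollary) that $U$ is an embedding. Since $U$ is an embedding, a lift of $f$ along $U$ is a section of the pullback $D \times_{\SLICE{\Cat}{C}} \Cat^C \to D$. This pullback is a subtype of $D$: its fiber over $d$ is the fiber of $U$ over $f(d)$, which is a proposition. So the task is to show that a $\GM$-subtype $P \hookrightarrow D$ of a category is all of $D$ exactly when conditions (1) and (2) hold.

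For the forward direction: if $f$ lifts, then for any $\DeclVar{d}{\GM}{D}$ the fiber of $U$ over $f(d)$ is inhabited. The first of the preceding lemmas then says $f(d)$ is cocartesian. Likewise, for $\DeclVar{d}{\GM}{\Int \to D}$, the fiber of $(U \circ -)^\dagger$ over $\MkMod[\GM]{f \circ d}$ is inhabited by the lift composed with $d$. The second lemma then gives that $f \circ d$ is a cocartesian functor between cocartesian families.

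For the converse, I will show that the inclusion $\iota : P \to D$ is an equivalence using \cref{ax:cubes-separate}. That is, for every $n$ the map $\Modify[\GM]{\Int^n \to P} \to \Modify[\GM]{\Int^n \to D}$ should be an equivalence. It is an embedding because $\iota$ is one and $\Modify[\GM]{-}$ preserves pullbacks, so it remains to show surjectivity. Here the main obstacle is that the fiber lemmas only control points and arrows, while we need all cubes. I would handle this in three steps.
\begin{enumerate}
  \item Check that $P$ is simplicial, Segal and Rezk. Being a pullback of the categories $\Cat^C$ and $D$ over the category $\SLICE{\Cat}{C}$, $P$ is a category, since categories are closed under limits.
  \item Replace the test objects. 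Because both $P$ and $D$ are simplicial, \cref{ax:simp-stability}, arguing as in the embedding corollary, lets us test against simplices $\Delta^n$ instead of cubes.
  \item Reduce to low dimensions. Since both are Segal, and by \cref{cor:proofs:spines} with $\Modify[\GM]{-}$ commuting with the pushouts forming $\Spine^n$, maps out of $\Delta^n$ are determined by their restrictions to spines. So it suffices to treat $n = 0$ and $n = 1$.
\end{enumerate}
For $n = 0$, a $\GM$-point $d$ factors through $P$ iff the fiber of $U$ over $f(d)$ is inhabited, which by the first lemma is condition (1). For $n = 1$, a $\GM$-arrow $d$ factors through $P$ iff there is a $\GM$-arrow in $\Cat^C$ over $f \circ d$. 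By the second lemma this holds iff (1) holds at both endpoints and (2) holds for $d$.

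Once these agree, every $\GM$-simplex of $D$ factors uniquely through $P$. Hence $\iota$ is an equivalence, and composing its inverse with $P \to \Cat^C$ gives the lift. The delicate point I expect is step 1, checking that $P$ is simplicial so that the spine reduction applies. I would argue this via lexness of $\Simp$ together with the simpliciality of $\Cat$, $C$ and $D$.
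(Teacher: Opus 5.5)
Your proposal is correct and is essentially the paper's proof. Like the paper, you identify lifts with sections of the embedding $D \times_{\SLICE{\Cat}{C}} \Cat^C \to D$, use \cref{ax:cubes-separate} together with the fact that both sides are categories to reduce to surjectivity on $\GM$-points and $\GM$-arrows, and read off (1) and (2) from the two fiber lemmas; your steps 1--3 just unpack the paper's one-line reduction, with two citation slips: the Segal step needs \cref{lem:proofs:spine-from-segal}, not its converse \cref{cor:proofs:spines}, and \cref{ax:simp-stability} is unnecessary because $P$ and $D$ are already simplicial.
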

\begin{proof}
  Our goal is to characterize for which $f$ the following map is an equivalence:
  \[
    D \times_{\SLICE{\Cat}{C}} \prn{C \to \Cat} \to D 
  \]
  Notably, we know already this map is an embedding (it is the pullback of $U$) and so we merely
  wish to characterize when it is surjective. Using \cref{ax:cubes-separate} along with the fact
  that both sides are categories, it suffices to consider when the following maps are surjective:
  \begin{align*}
    &\Modify[\GM]{D \times_{\SLICE{\Cat}{C}} \prn{C \to \Cat}} \to \Modify[\GM]{D}
    \\
    &\Modify[\GM]{\Int \to D \times_{\SLICE{\Cat}{C}} \prn{C \to \Cat}} \to \Modify[\GM]{\Int \to D}
  \end{align*}

  We now unfold these maps and use \cref{prop:primer:modalities}. These guarantee that the first map
  will hit $\DeclVar{d}{\GM}{D}$ if and only if $f\prn{d}$ is a cocartesian family. Similarly,
  the second map will hit $\DeclVar{d}{\GM}{\Int \to D}$ if and only if $f \circ d$ is a cocartesian
  functor between cocartesian families.
\end{proof}

\section{Examples}
\label{sec:examples}

We have thus far focused on the construction of $\Cat$ and verifying its essential
properties and so we close by discussing some of the new examples and category theory unlocked by $\Cat$.
We content ourselves with only sketching several examples.

\subsection{Subcategories of \texorpdfstring{$\Cat$}{Cat}}
We begin by noting that since every covariant family is cocartesian, there is a unique map from the
base of the universal covariant family $\Space$ to the base of the universal cocartesian family
$\Cat$. This is the inclusion of groupoids into categories.
\begin{lemma}
  \label{lem:examples:space-cohesion}
  The map $i : \Space \to \Cat$ is fully faithful and possesses both left and right adjoints:
  $\vrt{-} \Adjoint i \Adjoint \prn{-}^\Equiv$.
\end{lemma}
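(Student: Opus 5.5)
We need three things: that $i$ is fully faithful, that it has a left adjoint $\vrt{-}$ (groupoidification), and that it has a right adjoint $(-)^\Equiv$ (the core). Throughout we work with $\GM$-annotated data and appeal to \cref{lem:primer:pw-left-adjoint-to-left-adjoint}, since $\Space$ and $\Cat$ are both categories.

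\emph{Full faithfulness.} Fix $\DeclVar{A,B}{\GM}{\Space}$. By directed univalence for $\Space$, $\Hom[\Space]{A}{B} \Equiv (A \to B)$. By \cref{cor:dua:dua}, $\Hom[\Cat]{iA}{iB} \Equiv \Modify[\GM]{A \to B}$. To compare the two sides, I will compute the action of $i$ on a $\GM$-morphism $\Int \to \Space$. Such a morphism corresponds to a covariant family over $\Int$. Regarded as a cocartesian family, its transport is the same function $A \to B$. So on $\GM$-elements, $i$ induces the identification of $\Modify[\GM]{\Hom[\Space]{A}{B}}$ with $\Modify[\GM]{A \to B}$.

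To upgrade this to an equivalence of hom-types, I will use \cref{ax:cubes-separate}. It suffices to check the map on $\Modify[\GM]{\Int^n \to -}$. These reduce, via the $X$-parametrized classification $\Modify[\GM]{X \times \Int \to \Cat} \Equiv \Modify[\GM]{\Sum{A_0,A_1 : \Cat^X} A_0 \CoCartTo A_1}$ and its covariant analogue, to comparing cocartesian functors between covariant families over $\Int^n$ with arbitrary fiberwise maps. Every fiberwise map between covariant families is cocartesian, because every arrow in a covariant family is cocartesian. This gives the equivalence.

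\emph{Right adjoint.} For $\DeclVar{C}{\GM}{\Cat}$, I take $C^\Equiv = \Modify[\GM]{-}$-free core. Concretely, I will use the localization of $C$ to its subtype of invertible arrows. More uniformly, I will use the family $\Cat_\bullet \to \Cat$ and apply the universal property of $\Space$ to the fiberwise core of the universal cocartesian family. Cocartesian transport preserves isomorphisms, so the fiberwise core is covariant, and it therefore yields a map $(-)^\Equiv : \Cat \to \Space$.

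For the counit I take the inclusion $C^\Equiv \to C$ at $\DeclVar{C}{\GM}{\Cat}$. By \cref{lem:primer:pw-left-adjoint-to-left-adjoint} (dually), I must check that $\Hom[\Space]{A}{C^\Equiv} \to \Hom[\Cat]{iA}{C}$ is an equivalence. Under directed univalence this is the map $(A \to C^\Equiv) \to \Modify[\GM]{A \to C}$. Here there is a mismatch: the left side is unmodalized. Because of it, I expect this check to be the main obstacle. My first approach is to check equivalence only on $\GM$-elements of the hom-type, passing to $\Modify[\GM]{\Hom{}{}}$, where it reads $\Modify[\GM]{A \to C^\Equiv} \Equiv \Modify[\GM]{A \to C}$. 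This holds because $A$ is a groupoid, so any map out of it lands in invertible arrows. If the lemma requires the full hom, I would apply \cref{ax:cubes-separate} with $\Int^n$-families, as in the full-faithfulness step.

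\emph{Left adjoint.} This case is analogous, using $\vrt{C} = \Grpdify C$, obtained fiberwise on the universal family. Here the delicate point is covariance of the fiberwise localization; I will verify it via the lifting characterization of covariant families. The required universal property is that $\Modify[\GM]{\Grpdify C \to A} \Equiv \Modify[\GM]{C \to A}$ for a groupoid $A$, which holds by the defining property of the nullification $\Grpdify$.
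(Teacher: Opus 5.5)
The gap is in how you construct the adjoints. You try to build $\prn{-}^\Equiv$ and $\vrt{-}$ as global maps $\Cat \to \Space$ by feeding a ``fiberwise core'' (resp.\ fiberwise $\Grpdify$) of $\Cat_\bullet \to \Cat$ into the universal property of $\Space$. Neither input is actually available.

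\begin{itemize}
\item There is no internal formula for the core of a non-$\GM$-annotated $C : \Cat$. The fiberwise core is a wide, non-full subcategory of $\Cat_\bullet$ (all objects, only cocartesian arrows), which would itself have to be carved out with something like the amazing right adjoint.
\item ``Cocartesian transport preserves isomorphisms'' does not give right orthogonality to $\brc{0} \to \Int$.
\item Your ``localization of $C$ to its subtype of invertible arrows'' conflates the core, a subcategory, with a localization. A localization is what $\Grpdify$ computes, and it corresponds to the \emph{left} adjoint.
\item For $\vrt{-}$ you leave covariance of $C \mapsto \Grpdify C$ over $\Cat$ open. It does not follow from the fibers being groupoids: $t \mapsto \prn{t = 0}$ over $\Int$ has propositional fibers but is not covariant.
\end{itemize}

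The missing observation is that \cref{lem:primer:pw-left-adjoint-to-left-adjoint} does not merely check an adjunction for a functor you already have; it \emph{constructs} the adjoint from data on $\GM$-objects alone. So there is no need for a $\Modify[\GM]{-}$-free core. As in the paper, take $\Modify[\GM]{C}$ and $\Grpdify C$ for $\DeclVar{C}{\GM}{\Cat}$. The lemma as stated yields the right adjoint of $i$ directly, and its dual yields $\vrt{-}$.

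The ``mismatch'' you worry about then disappears. For $\GM$-annotated $A : \Space$ and $C : \Cat$, the types $A \to \Modify[\GM]{C}$ and $C \to A$ are $\GM$-annotated groupoids. By \cref{ax:int-detects-discrete} they coincide with their $\Modify[\GM]{-}$. Moreover $\Modify[\GM]{A \to \Modify[\GM]{C}} \Equiv \Modify[\GM]{A \to C}$ since $A$ is a groupoid. These match $\Hom[\Cat]{iA}{C}$ and $\Hom[\Cat]{C}{iA}$ as computed by \cref{cor:dua:dua}.

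Your full-faithfulness argument is fine and genuinely different from the paper's. You compare hom-types directly through both directed univalence principles and the action of $i$ on arrows. The paper instead reads it off from $i \Adjoint \prn{-}^\Equiv$: the unit at a $\GM$-groupoid $X$, comparing $X$ with $\Modify[\GM]{X}$, is invertible by \cref{ax:int-detects-discrete}. In your version the $\Int^n$-parametrized step is superfluous. Hom-types of categories are groupoids, so the check on $\Modify[\GM]{-}$ already suffices via \cref{ax:cubes-separate}.
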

\begin{proof}[Proof Sketch]
  The second half of this statements follows from
  \cref{lem:primer:pw-left-adjoint-to-left-adjoint}. In particular, we use this lemma to extend the
  pointwise assignments of $\DeclVar{X}{\GM}{\Cat} \mapsto \Modify[\GM]{X} : \Space$ and
  $\DeclVar{X}{\GM}{\Cat} \mapsto \Grpdify X : \Space$ to functors $\Cat \to \Space$. The fact that $i$
  is fully faithful is then immediate from \cref{ax:int-detects-discrete}: if $X$ is a groupoid then
  the unit $X \to \Modify[\GM]{i\prn{X}}$ is an equivalence. It is a standard argument that
  a unit being invertible implies the left adjoint is fully faithful.
\end{proof}

Many other interesting categories exist as full subcategories of $\Cat$. For instance, we may
isolate univalent 1-categories as the full subcategory of $\Cat$~\citep[\S 7]{gratzer:2024} given by the
following predicate:
\begin{align*}
  &\Con{is1Cat} : \MFn[\GM]{\Cat}{\Prop}
  \\
  &\Con{is1Cat}\prn{C} = \Prod{\DeclVar{a,b}{\GM}{C}} \IsSet\prn{\Hom[C]{a}{b}}
\end{align*}

Similar definitions immediately yield $(n,1)$-categories for all $n$. Notably, by restricting to $n
= -1$ we obtain the category of partial orders and, restricting further to linear partial orders,
the simplex category $\SIMP \hookrightarrow \Cat$. In fact, the same argument as was used to $\Space
\hookrightarrow \Cat$ allows us to prove the following:
\begin{lemma}\label{lem:examples:n-cat-to-cat}
  The inclusion of $\Cat_n \hookrightarrow \Cat$ is a right adjoint. 
\end{lemma}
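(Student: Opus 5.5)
The plan follows the proof of \cref{lem:examples:space-cohesion}, using \cref{lem:primer:pw-left-adjoint-to-left-adjoint}. Write $\Cat_n$ for the full subcategory of $\Cat$ cut out by the $\GM$-predicate stating that all hom-types $\Hom[C]{a}{b}$ between $\GM$-objects are $(n-2)$-truncated. For instance, $n=1$ gives \Con{is1Cat}, and $n = 0$ gives partial orders, where homs are propositions. Since the inclusion $j : \Cat_n \hookrightarrow \Cat$ is a subtype inclusion, $j$ is fully faithful, and $\Cat_n$ is a category because full subcategories of categories are. The dual of \cref{lem:primer:pw-left-adjoint-to-left-adjoint} (apply it to $\OM$, or repeat its proof with $\Hom$ reversed) reduces the claim to the following. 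For every $\DeclVar{X}{\GM}{\Cat}$ we must produce $\DeclVar{\tau_n X}{\GM}{\Cat_n}$ and a unit $\DeclVar{\eta}{\GM}{\Hom[\Cat]{X}{j(\tau_n X)}}$ such that precomposition
\[
  \eta^* : \Hom[\Cat_n]{\tau_n X}{Y} \to \Hom[\Cat]{X}{jY}
\]
is an equivalence for all $\DeclVar{Y}{\GM}{\Cat_n}$. In other words, it suffices to construct a reflection on $\GM$-objects only.

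By \cref{cor:dua:dua}, both hom-types are $\Modify[\GM]{-}$ of function types. The statement therefore becomes: the canonical map $\Modify[\GM]{\tau_n X \to Y} \to \Modify[\GM]{X \to Y}$ is an equivalence. This uses the fact, noted after \cref{cor:dua:dua}, that composition in $\Cat$ corresponds to function composition under directed univalence. The core work is thus to construct the $(n,1)$-truncation of a category, as a localization of types.

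I would define $\tau_n X$ as follows. First take the localization (in the sense of \citet{rijke:2020}) of $X$ at the maps $\partial \Int \times S^{n-1} \Pushout{} \ldots \to \Int$, chosen so that local types are exactly those whose hom-types are $(n-2)$-truncated. Concretely, these are the maps $\Int \Pushout{\partial\Int \times \Sigma^{n-1}} (\partial \Int \times \ObjTerm{}) \to \ldots$ encoding ``$\Hom{a}{b}$ is $(n-2)$-truncated'' as orthogonality to the generator $\Int \times S^{n-1} \cup \partial\Int \to \Int$. Then apply $\Simp$ and re-localize at the Segal and Rezk generators, iterating if needed.

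A cleaner route, which I would try first, uses the explicit description of objects of $\Cat$ as $\GM$-categories: form $\tau_n X$ as the localization $L$ of $X$ at the set $\{\Spine^2\to\Delta^2,\ \IntIso\to\ObjTerm{},\ \Int\times S^{n-1}\Pushout{\partial\Int\times S^{n-1}}\partial\Int\to\Int\}$, followed by $\Simp$. The first two generators ensure local types are Segal and Rezk, and the third ensures the homs are truncated. The universal property of localization then gives $\Modify[\GM]{LX \to Y}\Equiv\Modify[\GM]{X\to Y}$ for local $Y$. Moreover, $Y\in\Cat_n$ is local precisely because it is Segal, Rezk and has truncated homs, where truncation is checked on $\GM$-objects via \cref{ax:cubes-separate}. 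Simpliciality is handled by the universal property of $\Simp$, since $Y$ is simplicial.

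The main obstacle is showing that the result actually lies in $\Cat_n$. The localization is local for the generators, but the final $\Simp$ might destroy the Segal, Rezk or truncation conditions. For Segal and Rezk, \cref{cor:proofs:simp-cat} handles this. For truncation, I would use \cref{ax:simp-stability}: $\GM$-maps from simplices, and hence $\GM$-points of hom-types, agree for $LX$ and $\Simp LX$, so hom truncation transfers. A further point is that localization need not preserve the fibers being categories in parametrized form, but we only need the $\GM$ case, which suffices by \cref{lem:primer:pw-left-adjoint-to-left-adjoint}.
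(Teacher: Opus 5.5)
Your proposal follows essentially the paper's argument. As in \cref{lem:examples:space-cohesion}, you use (the dual of) \cref{lem:primer:pw-left-adjoint-to-left-adjoint} to reduce to a pointwise reflection on $\GM$-objects, and you build that reflection by localizing at the Segal, Rezk and a hom-truncation generator. The paper uses $\partial\Delta^{n+2}\to\Delta^{n+2}$ where you use a relative sphere, and your explicit $\Simp$ step, checked via \cref{cor:proofs:simp-cat} and \cref{ax:simp-stability}, spells out the simpliciality that the paper's sketch leaves implicit.

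There is one small indexing slip: for your examples ($n=1$ giving $\Con{is1Cat}$, $n=0$ giving partial orders) the homs must be $(n-1)$-truncated. Correspondingly, your generator should use $S^{n}$ rather than $S^{n-1}$.
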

\begin{proof}[Proof Sketch]
  One adapts \cref{lem:examples:space-cohesion} to use the modality nullifying the maps $\Lambda^2_1
  \to \Delta^2$, $\IntIso \to \ObjTerm{}$, and $\partial \Delta^{n + 2} \to \Delta^{n + 2}$.
\end{proof}

\NewDocumentCommand{\FinSet}{}{\Con{Fin}}

\paragraph{Towards algebraic K-theory}
For a small example of how these ingredients might be combined to build a useful and important
construction in higher category theory, we turn our attention to monoidal categories. 
Let us write $\brk{n}$ for the element of $\SIMP$ realizing the linear order
$\brc{0 \le \dots \le n}$. Using \cref{cor:dua:dua}, we define
$\rho_n^i : \Hom{\brk{1}}{\brk{n}}$ which sends $0 \le 1$ to
$\brc{i \le i + 1} \subseteq \brk{n}$.
\begin{definition}
  A monoidal category $C^\otimes : \Cat^{\Modify[\OM]{\SIMP}}$ is a functor
  where
  $\prn{\rho^1_n, \dots, \rho^n_n} : C^\otimes\prn{\brk{n}} \to C^\otimes\prn{\brk{1}} \times \dots \times C^\otimes\prn{\brk{1}}$
  is an equivalence for all $n$.
\end{definition}

\noindent
Replacing $\Cat$ by $\Space$ in the above gives the definition of an  $E_1$-monoid: a
homotopy-coherent monoid~\citep[\S 7]{gratzer:2024}.

\begin{definition}
  The category of monoidal categories $\Con{MCat}$ is the full subcategory of
  $\Modify[\OM]{\SIMP} \to \Cat$ spanned by monoidal categories.
\end{definition}

We readily adapt this definition to (1) the category of $E_1$-monoids $\Con{Mon}$ as a subcategory
of $\Space^{\Modify[\OM]{\SIMP}}$ and to (2) the category of monoidal 1-categories $\Con{MCat}_1$ as
a subcategory of $\Cat_1^{\Modify[\OM]{\SIMP}}$.

As both $\prn{-}^\Equiv : \Cat \to \Space$ and the inclusion $\Cat_1 \to \Cat$ are right adjoints,
they preserve finite products and therefore post-composing by these maps induces functors
$\Con{MCat} \to \Con{Mon}$ and $\Con{MCat}_1 \to \Con{MCat}$. We note next that---viewing $\Con{Mon}$
as a subcategory of $\Space^{\Modify[\OM]{\SIMP}}$---we may take the colimit of 
$M : \Con{Mon}$ to obtain a space $\Colim{} M$. In fact, this space is canonically pointed: the
initial object in $\Con{Mon}$ is the functor $\Const\,\ObjTerm{}$ and
$\Colim{} \Const\,\ObjTerm{} = \ObjTerm{}$. Finally, regarding the loop-space functor as a map
$\Omega : \Space_* \to \Space_*$ we define $k$ to be the following chain of functors:
\[
  k :
  \Con{MCat}_1 \to
  \Con{MCat} \to
  \Con{Mon} \to
  \Space_* \to
  \Space_*
\]

We may now define the simplest form of algebraic K-theory:
\begin{definition}[Quillen]
  The \emph{$i$th K-group} of a monoidal 1-category $C^\otimes$ is the $i$th homotopy group
  $K_i\prn{C^\otimes} = \pi_i\prn{k\prn{C^\otimes}}$.
\end{definition}

Notably, with $\Cat$ to hand all of these definitions are quite conceptual and automatically
functorial. We emphasize that this is only a first step towards realizing K-theory. We leave it to
future work to show \eg{}, that modules over a ring are a monoidal category.

\subsection{The structure homomorphism principle}

To give a different class of examples involving $\Cat$, we turn to the \emph{structure homomorphism
principle}. This is the directed enhancement of \HOTT{}'s structure identity principle. This
principle states that by taking ordinary type-theoretic definitions of objects in a certain
category but using $\Cat$ or $\Space$ instead of $\Uni$, we obtain the correct synthetic category
with the expected homomorphisms.

For a simplest example of this phenomena:
\begin{lemma}
  The type $\Sum{A : \Cat} A$ is the \emph{lax} slice $\ObjTerm{} \dblslash \Cat$ \ie{} its objects
  are pointed categories $\prn{C,c}$ and when $\DeclVar{\prn{C,c},\prn{D,d}}{\GM}{\Sum{A : \Cat} A}$
  morphisms $\Hom{\prn{C,c}}{\prn{D,d}}$ consist of functions $f : C \to D$ together with a morphism
  $\Hom{f\prn{c}}{d}$.
\end{lemma}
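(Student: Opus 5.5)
The plan is to compute the objects and hom-types of $\Sum{A : \Cat} A$ directly, using directed univalence (\cref{cor:dua:dua}) together with the classification of cocartesian families over $\Int$ from \cref{sec:dua}.

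\textbf{Objects.} A $\GM$-element of $\Sum{A : \Cat} A$ is, by \cref{prop:primer:modalities}, a $\GM$-category $C$ together with a $\GM$-element $c$ of $C$. These are exactly pointed categories $\prn{C,c}$.

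\textbf{Morphisms: reduction to a lifting problem.} A synthetic morphism $\Int \to \Sum{A : \Cat} A$ consists of $h : \Int \to \Cat$ together with a section $s : \prn{i : \Int} \to h\prn{i}$. In other words, $\Hom{\prn{C,c}}{\prn{D,d}}$ is equivalent to
\[
  \Sum{h : \Hom[\Cat]{C}{D}} \hom^{\Cat_\bullet}_h\prn{c,d},
\]
the type of dependent arrows in the tautological family $\Cat_\bullet \to \Cat$ lying over $h$.

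\textbf{Identifying the base.} By \cref{cor:dua:dua}, $h$ corresponds to $\MkMod[\GM]{f}$ for some $f : C \to D$. More precisely, the classification preceding \cref{cor:dua:dua} says the family $h^*\Cat_\bullet$ over $\Int$ is equivalent, as a cocartesian family, to $\Glue\prn{C,D,f}$.

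\textbf{Identifying the section.} It remains to compute sections of $\Glue\prn{C,D,f}$ over $\Int$ with endpoints $c$ and $d$. Unfolding the definition of $\Glue$, such a section is a map $\sigma : \Int \to D$ together with a partial element over $i = 0$ lying over $\sigma$. This amounts to a $c' : C$ with $f\prn{c'} = \sigma\prn{0}$. Fixing the endpoints gives $c' = c$ and $\sigma\prn{1} = d$, so the data collapses to a morphism $\Hom[D]{f\prn{c}}{d}$ after contracting a singleton. Equivalently, one can argue via the factorization of dependent arrows as vertical $\circ$ cocartesian: dependent arrows over $h$ from $c$ to $d$ are equivalent to $\Hom[D]{h_!c}{d}$, which is the statement of \cref{lem:proofs:lcc-char}. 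Then \cref{cor:cocartesian:transport} identifies $h_!$ with $f$.

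\textbf{Main obstacle.} The delicate point is bookkeeping the modalities. Directed univalence only holds for $\GM$-annotated endpoints, and it yields $\Modify[\GM]{C \to D}$ rather than $C \to D$. So the argument has to be carried out at $\GM$-annotated $\prn{C,c},\prn{D,d}$, and the statement should be read as giving $\Hom{\prn{C,c}}{\prn{D,d}} \Equiv \Sum{\DeclVar{f}{\GM}{C\to D}} \Hom[D]{f\prn{c}}{d}$. To make this work, first transport along the equivalence $h^*\Cat_\bullet \Equiv \Glue\prn{C,D,f}$ inside a $\LetMod$-elimination on $h$, then compute the dependent-arrow type pointwise in the glued family.
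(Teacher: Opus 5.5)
Your proposal is correct and matches the paper's proof. Objects come from \cref{prop:primer:modalities}, and morphisms come from splitting a map $\Int \to \Sum{A : \Cat} A$ into an edge of $\Cat$, identified with a $\GM$-function $f$ by directed univalence, together with a dependent arrow that the vertical-after-cocartesian factorization turns into $\Hom[D]{f\prn{c}}{d}$; your explicit unfolding of $\Glue\prn{C,D,f}$ is just a concrete form of the same computation. The one ingredient you omit is the paper's observation that $\Sum{A : \Cat} A$ is itself a category, since it is the total space of a cocartesian (hence iso-inner) family over the category $\Cat$. You should add this, because it is what justifies calling the type a lax slice.
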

\begin{proof}
  As $\Sum{A : \Cat} A$ is the total space of the cocartesian family $\Cat \to \Uni$, it is a
  category. The characterization of objects is immediate from \cref{prop:primer:modalities}. For
  morphisms, we use \cref{cor:dua:dua} with the factorization of a morphism in a total space of a
  cocartesian family into a cocartesian map followed by a vertical map.
\end{proof}

More generally, one may show that $\Sum{A : \Cat} A^C$ is the lax slice category $C \dblslash \Cat$
for any $\DeclVar{C}{\GM}{\Cat}$.

For a more sophisticated example, consider the following type built using the right adjoint to $i :
\Space \to \Cat$.
\[
  \Cat_{\mathrm{smarked}} = \Sum{C : \Cat} \prn{\prn{\Int \to C}^\Equiv \to \Bool}
\]
By directed univalence applied to $\Space$, this type is equivalent to $\Sum{C : \Cat}
\Hom{\prn{\Int \to C}^\Equiv}{\Bool}$ which is easily seen to be a category.
Objects of $\Cat_{\mathrm{smarked}}$ are a pairs of (1) a category $\DeclVar{C}{\GM}{\Cat}$ and (2)
a (decidable) predicate $\phi_C$ on the groupoid $\Modify[\GM]{\Int \to C}$ (recall that
$\prn{-}^\Equiv$ extends $\Modify[\GM]{-}$). The morphisms in $\Cat_{\mathrm{smarked}}$ are then
functors between the underlying category $f : C \to D$ such that $\phi_C = \phi_D \circ f$. This is
almost the category of \emph{marked} categories (the category of categories with a distinguished class of
morphisms along with functors preserving these morphisms) but the morphisms are off: we should
expect only that $\phi_C$ implies $\phi_D \circ f$.

To rectify this, we should replace $\Bool$ with the non-discrete category $\Int$. Just as with
$\Sum{A : \Cat} A$, this introduces the required laxity in the morphism. Unfortunately, however, we
can no longer rely on directed univalence for $\Space$ in this case; we must use \cref{cor:dua:dua}
which only applies to $\GM$-annotated elements. This makes it much more difficult to prove
that $\Sum{C : \Cat} \prn{\prn{\Int \to C}^\Equiv \to \Int}$ is a category. Systematically handling
these ``mixed-variance'' applications of SHP requires more exploration of exponentiable
functors~\citep{bardomiano:2025:exponentiable}: we must show that (co)cartesian functors are
exponentiable. We defer this to future work and instead work relative to the following conjecture:

\begin{conjecture}\label{conj:examples:exponentiable} If $\DeclVar{C}{\GM}{\Cat}$ then $- \to C :
  \Space \to \Cat$ is cartesian where the cartesian lift of $\DeclVar{f}{\GM}{\Hom[\Space]{A}{B}}$
  to $\prn{B,\phi}$ is $\prn{A, \phi \circ f}$.
\end{conjecture}
The difficult piece of this conjecture is establishing that $- \to C$ is an iso-inner family, with
the cartesian lifts following directly from the dual of \cref{prop:cocartesian:lari}. To see this
enables further applications of SHP, note that choosing $C = \Int$ yields:
\begin{lemma}
  The category $\Cat_{\mathrm{marked}} = \Sum{C : \Cat} \prn{\prn{\Int \to C}^\Equiv \to \Int}$ is
  the category of marked categories.
\end{lemma}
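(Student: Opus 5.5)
The plan is to treat $\Cat_{\mathrm{marked}}$ as the total space of a cartesian family and read off its objects and morphisms, just as was done for $\Sum{A : \Cat} A$. Fix $C = \Int$ in \cref{conj:examples:exponentiable}: the family $X \mapsto (X \to \Int)$ over $\Space$ is cartesian, and the cartesian lift of $f : \Hom[\Space]{A}{B}$ at $\phi$ is given by precomposition, $\phi \circ f$. Pulling this family back along the functor $\Cat \to \Space$, $C \mapsto (\Int \to C)^\Equiv$, gives a cartesian family over $\Cat$. This functor is the composite of the exponential $C \mapsto C^\Int$, an endofunctor of $\Cat$ by directed univalence, with the right adjoint $(-)^\Equiv$ of \cref{lem:examples:space-cohesion}. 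The total space of this pulled-back family is exactly $\Cat_{\mathrm{marked}}$. Cartesian families over a category are closed under pullback, so this total space is a category, its base being the category $\Cat$.

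Objects are straightforward. By \cref{prop:primer:modalities}, a $\GM$-element is a pair of $\DeclVar{C}{\GM}{\Cat}$ and $\DeclVar{\phi}{\GM}{(\Int\to C)^\Equiv \to \Int}$. Because $\Int$ is a poset with only the global points $0,1$ (\cref{ax:int-global-points}), and $(\Int\to C)^\Equiv$ is a groupoid on which any map to $\Int$ is locally constant, $\phi$ amounts to a predicate on $\Modify[\GM]{\Int \to C}$. This is a marking of $C$.

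For morphisms between $\GM$-objects $(C,\phi_C)$ and $(D,\phi_D)$, I would use the dual of the factorization from the $\ObjTerm{}\dblslash\Cat$ example. Every arrow in the total space of a cartesian family factors as a vertical arrow followed by a cartesian one. Hence $\Hom{(C,\phi_C)}{(D,\phi_D)}$ is equivalent to pairs of a base arrow $f : \Hom[\Cat]{C}{D}$ and a vertical arrow $\Hom{\phi_C}{f^*\phi_D}$ in the fiber over $C$. By \cref{cor:dua:dua}, $f$ is a functor $C \to D$, and under this identification the cartesian lift is $\phi_D \circ (f\circ -)^\Equiv$. The fiber over $C$ is the functor category $(\Int\to C)^\Equiv \to \Int$, whose homs are computed pointwise \citep{riehl:2017}. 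A morphism $\phi_C \to \phi_D\circ f_*$ in this poset-valued functor category is a mere proposition, namely that every marked arrow of $C$ has marked image. This is exactly the hom of marked categories, and composition matches because cartesian lifts compose.

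The main obstacle is pinning down the base arrow through directed univalence. \cref{cor:dua:dua} only applies to $\GM$-elements, and it returns $\Modify[\GM]{C\to D}$, not $C \to D$. So I would first check that the $\GM$-annotated transport agrees with precomposition under this identification. I would also check that the composite functor $\Cat \to \Space$ sends the morphism classified by $f$ to $(f\circ -)^\Equiv$. I would try to settle both by unfolding along the universal cocartesian family: exponentiating a cocartesian family over $\Int$ fiberwise by $\Int$ should give a cocartesian family whose transport is postcomposition with $f$. Applying $(-)^\Equiv$ then uses functoriality of the adjoint constructed in \cref{lem:primer:pw-left-adjoint-to-left-adjoint}, and finally \cref{lem:primer:pw-iso-to-iso} checks the identification on global points.
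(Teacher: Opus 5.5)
Your proposal is correct and follows the paper's proof. Both realize $\Cat_{\mathrm{marked}} \to \Cat$ as the pullback of the conjectured cartesian family $X \mapsto (X \to \Int)$ along $C \mapsto (\Int \to C)^\Equiv$. Both then factor each morphism into a vertical arrow $\phi_C \to \phi_D \circ ((f_0)_*)^\Equiv$ followed by a cartesian lift of a functor $f_0$ supplied by \cref{cor:dua:dua}.

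The paper passes over several details that you address: your treatment of objects, and your check that the pulled-back cartesian lift really is precomposition with $((f_0)_*)^\Equiv$ under directed univalence. One small correction: that $C \mapsto C^\Int$ lands in $\Cat$ comes from \cref{thm:cat:univ-cocart} together with closure of cocartesian families under cotensors, not from directed univalence itself.
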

\begin{proof}
  We discuss only the characterization of synthetic morphisms. First, we note that the projection
  $\Proj[0] : \prn{\Sum{C : \Cat} \prn{\prn{\Int \to C}^\Equiv \to \Int}} \to \Cat$ is a pullback of the family described in
  \cref{conj:examples:exponentiable} and therefore cartesian. We may then factor any morphism
  $\DeclVar{f}{\GM}{\Hom[\Cat_{\mathrm{marked}}]{\prn{C,\phi_C}}{\prn{D,\phi_D}}}$ uniquely into a
  cartesian lift of a map $f_0 : C \to D$ along with a morphism $f_1 : \Hom[\Int]{\phi_C}{\phi_D
  \circ \prn{\prn{f_0}_*}^\Equiv}$. This is precisely the data of a morphism of marked categories.
\end{proof}

\section{Conclusions and related work}
\label{sec:conc}

In this work, we have constructed a subtype of the universe $\Cat \hookrightarrow \Uni$ and shown
that it gives rise to the \emph{category of categories} within \STT{}. In particular, we have shown
that it is simplicial, Segal, and Rezk and characterized its objects and mapping spaces to be
categories and functors. To give an even more precise characterization of $\Cat$, we then prove
Lurie's straightening--unstraightening theorem. Finally, we show how results from
$\infty$-category theory can now be proved (\eg{}, various adjunctions between $\Space$ and
$\Cat$) in short order and use the particular nature of type theory (in the form of the
SHP) to construct new $\infty$-categories quickly and intuitively.

\subsection{Related work}

There has been a large amount of work on both directed approaches to type theory and
straightening--unstraightening generally. Our work fits into the broader tradition of directed type
theory, specifically the line of work initiated by \citet{riehl:2017} on simplicial type theory~%
\citep{%
  bardomiano:2025:limits,%
  bardomiano:2025:exponentiable,%
  buchholtz:2023,%
  riehl:2025,%
  shulman:2019,%
  weinberger:twosided:2024,%
  weinberger:sums:2024,%
  weinberger:phd,%
  gratzer:2025%
}.

In this specific context, $\Cat$ fills a major missing piece in the foundations of \STT{} and allows
for new applications to, for instance, higher algebra and monoidal category theory. For directed
type theory generally, we believe this construction should make it possible to more directly provide
semantics to ``fully directed'' type theories~%
\citep{%
  licata:2011,%
  warren:2013,%
  nuyts:2015,%
  north:2018,%
  kavvos:directed:2019,%
  nuyts:2020,%
  ahrens:2023,%
  neumann:2024,%
  neumann:2025,%
  neumann:phd%
  }.
In particular, a model of those type theories which adequately models $\infty$-categories can now be
constructed by a syntactic translation to \TTT{}. The presence of $\Cat$, in particular, means that
the features supported by \TTT{} should be adequate to encode all of those presently considered.

\paragraph{Other approaches to amazing right adjoints}
The use of the right adjoint to $\Int \to -$ to construct a universal family with suitable structure
dates back to \citet{licata:2018}, where it was used in the context of cubical type theory. While,
for instance, \citet{riley:2024} proposed a more refined version of this technique which gave a more
judgmental account of the amazing right adjoint, our approach closely follows \citet{licata:2018}
and, especially, its adaptation to directed type theory by \citet{weaver:2020}. In particular,
\opcit{} use this methodology to construct a directed univalent universe of groupoids in
bicubical type theory. \citet{gratzer:2024} then introduced \TTT{} in order to adapt this
methodology to apply to standard $\infty$-categories and further prove that the universe of
groupoids is a category. Our work carries this program further forward by constructing not just a
category of groupoids via a universal covariant family, but a category of categories via a universal
cocartesian family. This is a substantial step---cocartesian families enjoy fewer nice properties
than covariant families and this complicates various aspects of the proof \eg{}, the argument
that $\Cat$ is simplicial.

\paragraph{Existing proofs of straightening--unstraightening}
Since Lurie's original proof~\citep{lurie:2009}, two additional proofs have been published in the
literature. The first is given by \citet{hebestreit:2025} and is a much more directed and succinct
version of Lurie's proof, but still employs the same fundamental approach. In
particular, the main result is a Quillen equivalence of two model categories rather than an
intrinsically $\infty$-categorical or synthetic approach.

More closely related is the work of \citet{cisinski:2022} which develops a universal cocartesian
family as we do and uses this to prove straightening--unstraightening, directed univalence, and
similar. The proof strategy of our approach is quite similar to \opcit{}: a straightforward proof of
the universal fibration and a more lengthy argument that its base is a category. In fact, \opcit{}
is influenced by various standard techniques in the semantics of type theory. A crucial difference
between our approaches, however, is the ambient framework used to manage $\infty$-categories. Unlike
our approach, \citet{cisinski:2022} use quasicategories along with various model categorical tools
from the Joyal and marked model structures. This is in contrast with our more synthetic approach,
which uses model categories only indirectly via \cref{thm:primer:soundness}.

Particularly in light of the last proof, we feel that our approach is an interesting example in
how higher category theory can influence the development of type theory (through \HOTT{}) which can
in turn contribute new techniques and proofs to higher category theory. We also hope that the
model-independent nature of our proof will result in a result which can be applied to exotic
situations such as internal $\infty$-categories in an $\infty$-topos~%
\citep{%
  martini:2022a,%
  martini:2025%
}.

\subsection{Future work}

While this work has constructed the category of categories, much work still remains to be done
around (1) the use of $\Cat$ in synthetic $\infty$-category theory
and (2) deriving a computational account of this type theory, following on the work of
\citet{weaver:2020}. For the first point, we believe that higher algebra and the theory of operads
is particularly interesting to study in this regard, as the existing foundations of the
theory~\citep{lurie:2017} are notoriously technical and rely on intricate simplex-by-simplex
arguments. For the second, we believe that the machinery of our argument could be adapted to
bicubical type theory which would, at the very least, provide a constructive model of our argument.

\bibliographystyle{ACM-Reference-Format}
\bibliography{../refs,../temp-refs}

\newpage
\appendix
\section{Formal syntax of \texorpdfstring{\MTT{}}{MTT}}
\label{sec:mtt-syntax}

We provide a succinct description of the formal syntax of \MTT{} in this section. Since most
connectives of type theory ($\Sum{}$, $\Coprod{}$, \etc{}) are not impacted by modalities, we focus
only on those rules which must be changed. These are (1) some aspects of the substitution calculus
and (2) the rules for modal types are modal $\Prod{}$ types. We assume a mode theory $\Mode$ which
has 1 object and which is enriched in posets, as is in this paper.

First, we extend contexts with the following new forms:
\begin{mathpar}
  \inferrule{
    \IsCx{\Gamma}
    \\
    \mu : m \to m \in \Mode
  }{
    \IsCx{\LockCx{\Gamma}}
  }
  \and
  \inferrule{
    \IsCx{\Gamma}
    \\
    \IsTy[\LockCx{\Gamma}]{A}
  }{
    \IsCx{\MECx{\Gamma}{A}}
  }
\end{mathpar}

Our previous notation with formal divisions was really syntactic sugar for these operations. In
particular, $\DeclVar{x}{\mu/\nu}{A}, \DeclVar{y}{\ArrId{}/\nu}{B}, z : C$ becomes
$\LockCx{\MECx{\MECx{\EmpCx}[\mu]{A}}[\ArrId{}]{B}}[\nu].C$. Notably, while $-/\mu$ was mere
notation for the paper, it is actually the primitive operation in \MTT{}. Any context built using
either notation using the rules of the system is translatable.

We then add several new to the substitution calculus to account for this. This includes a new form
of the variable rule (built using de Bruijn indices) to account for $\LockCx{\Gamma}$.

\begin{mathpar}
  \inferrule{
    \IsSb[\Delta]{\gamma}{\Gamma}
  }{
    \IsSb[\LockCx{\Delta}]{\LockSb{\gamma}}{\LockSb{\Gamma}}
  }
  \and
  \inferrule{
    \IsCx{\Gamma}
    \\
    \mu \le \nu
  }{
    \IsSb[\LockCx{\Gamma}[\nu]]{\KeySb{\mu \le \nu}{\Gamma}}{\LockCx{\Gamma}[\mu]}
  }
  \and
  \inferrule{
    \IsTy[\LockCx{\Gamma}]{A}
  }{
    \IsSb[\MECx{\Gamma}{A}]{\Wk}{\Gamma}
  }
  \and
  \inferrule{
    \IsTy[\LockCx{\Gamma}]{A}
    \\
    \IsSb[\Delta]{\gamma}{\Gamma}
    \\
    \IsTm[\Delta]{M}{\Sb{A}{\LockSb{\gamma}}}
  }{
    \IsSb[\Delta]{\ESb{\gamma}{M}}{\MECx{\Gamma}{A}}
  }
  \and
  \inferrule{
    \IsTy[\LockCx{\Gamma}]{A}
  }{
    \IsTm[\LockCx{\MECx{\Gamma}{A}}]{\mathbf{var}}{\Sb{A}{\LockSb{\Wk}}}
  }
\end{mathpar}

We have normal substitution rules around substitution extensions and weakenings. These are
essentially standard, and so we omit them. We further require a handful of equations which ensure
that $\Gamma \mapsto \LockCx{\Gamma}$, $\gamma \mapsto \LockSb{\gamma}$, and $\KeySb{- \le -}{-}$
organize into a 2-functor from $\Coop{\Mode}$ to $\CAT$ sending $m$ to the category of contexts. We
refer the reader to \citet{gratzer:2020} for a full account.

The additional types and terms are then given as follows:
\begin{mathparpagebreakable}
  \inferrule{
    \IsTy[\LockCx{\Gamma}]{A}
  }{
    \IsTy{\Modify{A}}
  }
  \and
  \inferrule{
    \IsTm[\LockCx{\Gamma}]{M}{A}
  }{
    \IsTm{\MkMod{M}}{\Modify{A}}
  }
  \and
  \inferrule{
    \IsTy[\LockCx{\Gamma}[\nu\circ\mu]]{A}
    \\
    \IsTy[\MECx{\Gamma}[\nu]{\Modify[\mu]{A}}]{B}
    \\
    \IsTm[\MECx{\Gamma}[\nu\circ\mu]{A}]{b}{\Sb{B}{\ESb{\Wk}{\MkMod{\mathbf{var}}}}}
    \\
    \IsTm[\LockCx{\Gamma}[\nu]]{a}{\Modify[\mu]{A}}
  }{
    \IsTm{\LetMod<\nu>[\mu]{a}{b}}{\Sb{B}{\ESb{\ISb}{a}}}
  }
  \and
  \inferrule{
    \IsTy[\LockCx{\Gamma}[\nu\circ\mu]]{A}
    \\
    \IsTy[\MECx{\Gamma}[\nu]{\Modify[\mu]{A}}]{B}
    \\
    \IsTm[\MECx{\Gamma}[\nu\circ\mu]{A}]{b}{\Sb{B}{\ESb{\Wk}{\MkMod{\mathbf{var}}}}}
    \\
    \IsTm[\LockCx{\Gamma}[\nu\circ\mu]]{a}{A}
  }{
    \EqTm{
      \LetMod<\nu>[\mu]{\MkMod{a}}{b}
    }{
      \Sb{b}{\ESb{\ISb}{a}}
    }{
      \Sb{B}{\ESb{\ISb}{\MkMod{a}}}
    }
  }
  \and
  \inferrule{
    \IsTy[\LockCx{\Gamma}]{A}
    \\
    \IsTy[\MECx{\Gamma}{A}]{B}
  }{
    \IsTy{\MFn{A}{B}}
  }
  \and
  \inferrule{
    \IsTy[\LockCx{\Gamma}]{A}
    \\
    \IsTm[\MECx{\Gamma}{A}]{b}{B}
  }{
    \IsTm{\lambda{M}}{\MFn{A}{B}}
  }
  \and
  \inferrule{
    \IsTm{f}{\MFn{A}{B}}
    \\
    \IsTm[\LockCx{\Gamma}]{a}{A}
  }{
    \IsTm{f\prn{a}}{\Sb{B}{\ESb{\ISb}{a}}}
  }
  \and
  \inferrule{
    \IsTy[\LockCx{\Gamma}]{A}
    \\
    \IsTm[\MECx{\Gamma}{A}]{b}{B}
    \\
    \IsTm[\LockCx{\Gamma}]{a}{A}
  }{
    \EqTm{(\lambda b)(a)}{\Sb{b}{\ESb{\ISb}{a}}}{\Sb{B}{\ESb{\ISb}{a}}}
  }
  \and
  \inferrule{
    \IsTm{f}{\MFn{A}{B}}
  }{
    \EqTm{f}{\lambda \Sb{f}{\Wk}(\mathbf{var})}{\MFn{A}{B}}
  }
\end{mathparpagebreakable}

\section{Full list of axioms}
\label{sec:axioms}

\univalence*
\intax*
\crispind*
\intdisc*
\intdetectsdisc*
\cubesseparate*
\simplicialstability*
\amazingradj*

\begin{restatable}{axiom}{intop}
  There is an equivalence $\Modify[\OM]{\Int} \to \Int$ which exchanges $0$ for $1$ and $\land$ for
  $\lor$.
\end{restatable}

Define a \emph{finitely-presented $\Int$-algebra} to be a map of bounded distributive lattice $\Int \to X$
where $X$ is equivalent to a bounded distributive lattice of the form $\Int\brk{x_1 \dots
x_n}/\prn{f_1 = g_1 \dots f_m = g_m}$ and $\Int \to X$ is the canonical map. That is, $X$ is freely
generated over $\Int$ by the operations of a bounded distributive lattice, the indeterminates $x_1
\dots x_n$, and subject to the equations $f_i = g_i$. With this notation to hand, we state a duality
axiom due originally to \citet{kock:2014} and proposed in this form by \citet{blechschmidt:2023}.

\begin{restatable}{axiom}{duality}
  If $\Int \to X$ is a finitely presented $\Int$-algebra, the following evaluation map is an
  equivalence of underlying sets:
  \[
    \lambda x, f.\, f\prn{x} : X \Equiv \Int^{\Hom[\Int/]{X}{\Int}} 
  \]
\end{restatable}

\end{document}